\documentclass[%
 reprint,
 amsmath,amssymb,
 aps,
prb,
]{revtex4-2}

\usepackage{dcolumn}
\usepackage{bm}

\usepackage[dvipdfmx]{graphicx}
\usepackage{amsthm}
\usepackage{amsmath, amssymb, amsfonts}
\usepackage{enumitem}
\usepackage{dcolumn}
\usepackage{bm}
\usepackage{comment}
\usepackage{multirow}
\usepackage{color}
\usepackage{mathtools}
\usepackage{needspace}
\usepackage[margin=20truemm]{geometry}
\theoremstyle{plain}

\newtheorem*{thm*}{Theorem~\ref{t:main}}

\newtheorem{lemma}{Lemma}
\newtheorem{theorem}{Theorem}
\theoremstyle{definition}

\newtheorem*{dfn*}{Definition}

\begin{document}


\title{Proof of the absence of local conserved quantities in the Holstein model}

\author{Fuga Ishii}
\email{fuga@i-shi-i.name}
\affiliation{Department of Integrated Sciences, College of Arts and Sciences, The University of Tokyo, 3-8-1 Komaba, Meguro, Tokyo 153-8902, Japan}

\author{Mizuki Yamaguchi}
\email{yamaguchi-q@g.ecc.u-tokyo.ac.jp}
\affiliation{Department of Basic Science, Department of Multidisciplinary Sciences, Graduate School of Arts and Sciences, The University of Tokyo, 3-8-1 Komaba, Meguro, Tokyo 153-8902, Japan}

\date{\today}

\begin{abstract}
Absence of local conserved quantities, or \textit{nonintegrability}, is often assumed when discussing various phenomena in quantum many-body systems, such as thermalization and transport. However, no concrete proof of this property is known in electron--phonon coupled systems, a typical setting for condensed matter physics. 
In this paper, we show that the one-dimensional Holstein model has no nontrivial local conserved quantities other than the Hamiltonian itself and the total fermion number operator. We further show that the absence of nontrivial local conserved quantities also holds for the more general Holstein--Hubbard model. 
Our result has accomplished an
advance in nonintegrability proofs by expanding their scope to systems in which particles with different statistical properties are mixed.
\end{abstract}

\maketitle

\section{Introduction}
To understand quantum many-body dynamics, it is essential to identify the full set of local conserved quantities of a model. Local conservation laws strongly constrain the dynamics of a system, including how the system relaxes and how transport develops. Although isolated many-body systems are generally expected to thermalize under generic conditions~\cite{quantumchaosRev,ThermalizationRev,ETHDeutsch,ETHSrednicki,ETHnature}, this expectation can fail if relevant local conserved quantities are present but not considered~\cite{ThermalizationRev,ETHnature}. In integrable systems, where an extensive number of local conserved quantities exist, stationary states are constrained by infinitely many conditions in the thermodynamic limit and are described by generalized Gibbs ensembles rather than conventional thermal ensembles~\cite{GGE1,GGE2}. Accordingly, a complete characterization of local conserved quantities is not optional; it is a prerequisite for a correct description of long-time behavior.

A systematic understanding of conserved quantities is also important for transport theory. From the viewpoint of linear response, 
the Green--Kubo formula~\cite{Green,Kubo} can break down when the observable of interest has finite overlap with a conserved quantity~\cite{Mazur,Suzuki}. 
In particular, overlap between conserved quantities and current operators leads to ballistic transport rather than ordinary diffusion~\cite{ballistic1,ballistic2,TransprtRev}. 
Beyond linear response, the role of conservation laws becomes even more explicit in far-from-equilibrium settings, where one-dimensional systems with many local or quasi-local conservation laws motivate generalized hydrodynamic descriptions~\cite{GHDnature,GHD1,GHD2,GHD3,GHDrev}.

Motivated by these perspectives, rigorous approaches have recently been developed to exhaustively identify local conserved quantities~\cite{XYZh,HokkyoNI}. 
Such methods have been successfully applied to several classes of quantum systems, including spin, fermionic, and bosonic models.
As a result, many nonintegrable models have been shown to lack local conserved quantities~\cite{mixedIsing,ChibaYoneta,PXP1,PXP2,NNNHeisenberg,spinNNcomplete,NNcompleteclassification,BLBQ,BLBQPark,2DIsing,2DXY,NNNcomplete,compass,SU2,Fredkin,2DHubbard,NHboson}, and the possible forms of local conserved quantities have also been clarified in known integrable models~\cite{XYZ,XXXMPO,XYZMPO,1DHubbard,1DHubbard2}.
By contrast, systems involving couplings between different kinds of operators---most notably fermion--boson coupled systems---have remained largely unexplored. This is a significant gap, since fermion--boson couplings, such as electron--phonon interactions, arise broadly in condensed-matter physics.

In this paper, we investigate local conserved quantities
in the one-dimensional Holstein model, a paradigmatic electron--phonon coupled system. 
We rigorously prove that, away from the known integrable parameter regimes, the model possesses no nontrivial local conserved quantities other than the Hamiltonian itself and the total particle-number operator. This establishes the nonintegrability of the Holstein model in the above sense. We further show that a similar result extends to the Holstein--Hubbard model, which provides another important example of an electron--phonon coupled system. In this way, our result opens the way to rigorous nonintegrability proofs for a wider class of systems involving particles with different statistical properties.

The rest of this paper is organized as follows. In Sec.~\ref{sec:main_result}, we define the models, introduce our notion of local conserved quantities, and state the main theorems. Before proceeding to the proof, we present the strategy in Sec.~\ref{sec:proof_strategy}, including the operator-basis expansion that we utilize in the subsequent sections. We then provide detailed proofs for the main result in Sec.~\ref{sec:proof}, followed by the analysis of short-support conserved quantities in Sec.~\ref{sec:short}. 
We also discuss the Holstein--Hubbard model in Sec.~\ref{sec:HH}. Finally, we summarize
our results and discuss implications and outlook in Sec.~\ref{sec:discussion}.

\section{Setup and main results}\label{sec:main_result}

We consider the one-dimensional (spinless) Holstein model~\cite{original} on $L$ sites with periodic boundary conditions. The Hamiltonian is given by
\begin{align}
 \hat{H}   &=  \hat{H}_{\mathrm{hop}} + \hat{H}_{\mathrm{int}} + \hat{H}_{\mathrm{pho}},
\label{Eq.1D_H_model}
\end{align}
with
\begin{align}
\hat{H}_{\mathrm{hop}} &= t \sum_{i=1}^L (\hat{c}^\dagger_i \hat{c}_{i+1} + h.c.),
\end{align}
\begin{align}
\hat{H}_{\mathrm{int}} &= g \sum_{i=1}^L \hat{n}_i \bigl(\hat{b}_i^\dagger + \hat{b}_i \bigr),
\end{align}
\begin{align}
\hat{H}_{\mathrm{pho}} &= \omega \sum_{i=1}^L \hat{b}^\dagger_i \hat{b}_i,
\end{align}
where $\hat{c}^\dagger_i$ ($\hat{c}_i$) is the creation (annihilation) operator of a fermion at site $i$, and $\hat{b}^\dagger_i$ ($\hat{b}_i$) is the creation (annihilation) operator of a phonon at site $i$. Here $t$ is the hopping amplitude, $g$ is the electron--phonon coupling strength, and $\omega$ is the phonon frequency. We also write $\hat{n}_i = \hat{c}^\dagger_i \hat{c}_i$ for the fermion number operator. In some situations one may further add a chemical-potential term to this Hamiltonian; we discuss that case separately in Appendix~\ref{sec:withchem}.

To state our claim precisely, we first define what we mean by a local conserved quantity. An operator is called a \textit{$k$-support operator} if its minimal contiguous support is contained in $k$ sites. For example, $\hat{c}_i \hat{c}_{i+1} \hat{c}_{i+2}$ is a $3$-support operator, while $\hat{n}_i \hat{b}_{i+4}$ is a $5$-support operator. A \textit{$k$-local quantity} is then an operator that can be expressed as a sum of $k$-support operators, but cannot be expressed as a sum of $(k-1)$-support operators. A \textit{$k$-local conserved quantity} is a $k$-local quantity that commutes with the Hamiltonian.

Throughout this paper, we refer to $k$-local conserved quantities with $k=O(L^0)$ as local conserved quantities. We characterize nonintegrability by the absence of nontrivial local conserved quantities. More specifically, we call $k$-local conserved quantities with $k \geq 3$ and $k=O(L^0)$ \textit{nontrivial} local conserved quantities, in order to distinguish them from trivial ones such as the Hamiltonian itself. This notion of nonintegrability is consistent with that adopted in previous studies~\cite{ThermalizationRev,CauxMossel,XYZh}.

The Holstein model is a paradigmatic electron--phonon coupled system. It captures essential aspects of lattice polaron formation through the local coupling between itinerant fermions and dispersionless phonons, and has been studied extensively in the context of polaron physics and related analytical and numerical approaches~\cite{HolsteinRev,DMFT,DMFTrelaxation,DMFT1D,Lanczos,LFS,localbasisoptimization,CDWmelting,ML,RNN,quantchem,dynamicaltypicality}.

From the viewpoint of local conserved quantities, the one-dimensional Holstein model is widely believed to be nonintegrable, although this belief has mostly been supported indirectly. Indeed, several results are consistent with nonintegrability, including relaxation toward Gibbs states after quenches~\cite{PolaronRelaxation,PolaronThermalization}, random-matrix like behavior in quantum-chaos indicators~\cite{HolsteinETH}, and ETH-type behavior inferred from the dependence of diagonal and off-diagonal matrix elements of observables on energy eigenvalues~\cite{HolsteinETH}.

At the same time, the model is analytically solvable in certain limiting cases~\cite{LangFirsov}. In particular, when any one of the three parameters $t$, $g$, or $\omega$ is set to zero, the model becomes integrable, in the sense that it possesses an extensive number of local conserved quantities. We discuss these solvable cases in Appendix~\ref{sec:integrable_case}. Transport studies are also consistent with this picture: diffusive behavior emerges away from such integrable regimes~\cite{dynamicalproperty,MPS}, whereas ballistic transport has been observed only in the weak-coupling regime close to an integrable limit and at low temperatures~\cite{transportHEOM}. Away from such singular limits, by contrast, we claim that the model is nonintegrable whenever $t \cdot g \cdot \omega \neq 0$.

\begin{theorem}[Main Result]\label{nontrivial}
The one-dimensional Holstein model~\eqref{Eq.1D_H_model} with $t \neq 0$, $g \neq 0$, and $\omega \neq 0$ has no $k$-local conserved quantity with $3 \leq k \leq L/2$.
\end{theorem}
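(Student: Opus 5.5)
We follow the now-standard strategy of Shiraishi and its successors. We expand a putative $k$-local conserved quantity $\hat Q=\sum_{\boldsymbol{A}} q_{\boldsymbol{A}}\hat{\boldsymbol{A}}$ in an operator basis adapted to the mixed statistics. On each site we take a product of a fermionic factor from $\{\hat 1,\hat c_i,\hat c_i^\dagger,\hat n_i\}$ (or the traceless $\hat 1-2\hat n_i$) and a bosonic factor from the normally ordered monomials $(\hat b_i^\dagger)^m\hat b_i^{n}$, with $m,n\ge0$. Fermionic strings are ordered by site, with the Jordan--Wigner-type signs fixed once and for all. Since $\hat H$ conserves fermion number and $\hat H_{\rm pho}$ grades the bosonic monomials by $m-n$, we may first reduce to $\hat Q$ commuting with the total fermion number. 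We also use the bosonic "degree'' $m+n$ as a filtration. The commutator $[\hat Q,\hat H]$ is expanded in the same basis, and we demand that each coefficient of a $(k+1)$-support or $k$-support basis element vanish. This gives linear equations among the $q_{\boldsymbol{A}}$.

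\textbf{Step 1 ($(k+1)$-support analysis).} Only $\hat H_{\rm hop}$ enlarges supports, because $\hat H_{\rm int}$ and $\hat H_{\rm pho}$ are on-site. A $(k+1)$-support output therefore arises from hopping at the left or right edge of a $k$-support $\hat{\boldsymbol{A}}$. We consider an output whose right end carries $\hat c$ or $\hat c^\dagger$ and whose left edge carries a factor that can be produced only by hopping into $\hat{\boldsymbol{A}}$'s left end. Matching such outputs links $q_{\boldsymbol{A}}$ to $q_{\boldsymbol{A}'}$, where $\boldsymbol{A}'$ is $\boldsymbol{A}$ shifted by one site with modified end factors. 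The conclusion we aim for is that $q_{\boldsymbol{A}}=0$ unless both end sites of $\boldsymbol{A}$ carry purely fermionic factors $\hat c^{(\dagger)}$ with trivial bosonic part. We further aim to show that the surviving coefficients have the "doubling'' structure familiar from free-fermion-like chains: the candidate is a hopping string $\hat c^\dagger_i X \hat c_{i+k-1}$ whose interior $X$ is determined by a recursion in the interior content. A key sub-point here is that any nontrivial bosonic monomial at an edge site cannot be cancelled, because hopping acts only on the fermionic factor. This immediately kills such coefficients.

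\textbf{Step 2 ($k$-support analysis).} Next we impose vanishing on $k$-support outputs. These receive contributions from three sources: $\hat H_{\rm hop}$ acting on $(k-1)$- and $k$-support terms, $\hat H_{\rm int}$, and $\hat H_{\rm pho}$. The commutator with $\hat H_{\rm pho}$ multiplies a bosonic monomial by $\omega(n-m)$. The commutator with $g\hat n_j(\hat b_j^\dagger+\hat b_j)$ raises or lowers bosonic degree at a site where the fermionic factor is not commuting with $\hat n_j$, in particular at the edge sites carrying $\hat c^{(\dagger)}$. We will pick a specific $k$-support output: a string with $\hat c^\dagger$ at one end and, at the other end, a factor $\hat c\,\hat b^\dagger$ with degree one. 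Such a term can be produced from the leading string only via $\hat H_{\rm int}$ acting at the edge. It can also be produced via hopping from an interior $(k-1)$-support term containing $\hat b^\dagger$ next to the edge, and Step 1 has constrained such terms. Following the chain of these equations across the string, using $\omega\neq0$ to invert the $\hat H_{\rm pho}$ contributions and $t,g\neq0$ elsewhere, should force the leading coefficient of the $k$-support hopping string to vanish. Combined with Step 1, this gives $q_{\boldsymbol{A}}=0$ for all $k$-support $\boldsymbol{A}$, contradicting $k$-locality.

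\textbf{Main obstacle.} The main obstacle is Step 2. The bosonic local Hilbert space is infinite-dimensional, so we must make sure the chain of equations closes, that is, that unbounded bosonic degree does not feed new unknowns indefinitely. We plan to handle this by restricting attention to the top bosonic degree present at each site of $\hat Q$. $\hat H_{\rm int}$ raises that degree by one, and the resulting terms can be cancelled by nothing else, which should force the top degree to be small, essentially at most one near the edges. We must also track fermionic signs carefully. The condition $k\le L/2$ is used so that shifted strings wrapping around the periodic chain do not coincide with the original ones.
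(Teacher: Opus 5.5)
There is a genuine gap in your Step~1. Your uniqueness argument needs an end site that carries a nontrivial fermionic factor on which a hopping term can act. Consider a $k$-support term whose two end sites both carry the fermionic identity together with a nontrivial phonon monomial, e.g.\ $\hat b^\dagger_i\hat b_{i+k-1}$ or $\hat b^\dagger_i\hat n_{i+1}\hat b_{i+k-1}$. It commutes with every hopping term that could enlarge its support, so it produces no $(k+1)$-support output at all. Your remark that a nontrivial bosonic monomial at an edge ``cannot be cancelled'' therefore does not kill it, and your Step~2 never returns to such terms.

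No argument that avoids $\omega\neq0$ can remove them. At $\omega=0$ each displacement $\hat b_j^\dagger+\hat b_j$ commutes with $\hat H$, so $\sum_i(\hat b_i^\dagger+\hat b_i)(\hat b_{i+k-1}^\dagger+\hat b_{i+k-1})$ is a $k$-local conserved quantity of exactly this form, even though $t,g\neq0$. The paper treats this sector separately using $k$-support outputs. First, hopping inside the support forces all interior fermionic factors to be the identity. Then the outputs $\hat n(x-1,y)_j$ produced by $g\hat n_j(\hat b_j^\dagger+\hat b_j)$ make $q_{\cdots I(x,y)_j\cdots}$ proportional to $q_{\cdots I(x-1,y+1)_j\cdots}$. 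Finally $\hat H_{\mathrm{pho}}$ gives $\omega\sum_j(x_j+y_j)\,q=0$ for the representative $\prod_j\hat b_{i+j-1}^{\,x_j+y_j}$. Your ``top bosonic degree'' idea does not replace this: at sites whose fermionic factor is $I$ or $\hat n$, the commutator with $\hat H_{\mathrm{int}}$ lowers the bosonic degree rather than raising it.

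Two further steps fail as stated. First, fermion-number conservation only lets you split $\hat Q$ into sectors of definite charge; it does not let you discard the charge $\pm2$ sector. The pair strings $\hat c_i\hat c_{i+k-1}$ and $\hat c^\dagger_i\hat c^\dagger_{i+k-1}$ survive the $(k+1)$-support analysis with staggered coefficients, and at $g=0$ they do give conserved quantities. The paper removes them by a sign mismatch. The $(k+1)$-support equations give $q_{\hat c_i\hat c_{i+k-1}}=(-1)^{k-1}q_{\hat c_{i+k-1}\hat c_{i+2k-2}}$. The $k$-support outputs $\hat c_i\hat c(0,1)_{i+k-1}$ and $\hat c(0,1)_{i+k-1}\hat c_{i+2k-2}$, together with the chain of $(k-1)$-support inputs $\hat c_{i+1+m}\hat b_{i+k-1}\hat c_{i+k-1+m}$, give the same relation with $(-1)^{k}$, so both coefficients vanish.

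Second, for the strings $\hat c^\dagger_i\hat c_{i+k-1}$ the $k$-support equations do not close. $\omega$ does not enter them, because the $k$-support inputs on which $\hat H_{\mathrm{pho}}$ would act were already eliminated in Step~1. These equations only tie $q_{\hat c^\dagger_i\hat c_{i+k-1}}$ to $(k-1)$-support coefficients $q_{\hat c^\dagger_{i+1}\hat c(0,1)_{i+k-1}}=q_{\hat c^\dagger_{i+2}I(0,1)_{i+k-1}\hat c_{i+k}}=\dots$. Those $(k-1)$-support terms are constrained by the $k$-support equations, not by Step~1 as you state, since they never reach $(k+1)$-support outputs. To finish, you must descend to the $(k-1)$-support outputs, where $\omega$, $g$ and $t$ all contribute. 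Summing the resulting $k-1$ equations telescopes to $(k-1)\,\omega\,q_{\hat c^\dagger_{i+1}\hat c(0,1)_{i+k-1}}=0$. Only then does $g\,q_{\hat c^\dagger_i\hat c_{i+k-1}}=t\,q_{\hat c^\dagger_{i+1}\hat c(0,1)_{i+k-1}}$ force the leading coefficient to vanish.
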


Theorem~\ref{nontrivial} gives a rigorous formulation of the nonintegrability of the Holstein model. That is, away from the solvable limits, the model possesses no nontrivial local conserved quantities. Thus, our result provides a firm analytical foundation for the nonintegrable behavior that has previously been inferred mainly from dynamical, spectral, and transport properties.

We then complement Theorem~\ref{nontrivial} by analyzing the cases $k=1$ and $k=2$, which yields the following complete classification of local conserved quantities with $k \leq L/2$.

\begin{theorem}\label{withtrivial}
In the one-dimensional Holstein model~\eqref{Eq.1D_H_model} with $t \neq 0$, $g \neq 0$, and $\omega \neq 0$, $k$-local conserved quantities with $k \leq L/2$ are restricted to linear combinations of the following:
\begin{enumerate}[label={(\roman*)},ref={\roman*}]
    \item the Hamiltonian itself, $\hat{H}$; and
    \item the total fermion number,
    \begin{align}
    \hat{N} = \sum_{i=1}^L \hat{n}_i.
    \end{align}
\end{enumerate}
\end{theorem}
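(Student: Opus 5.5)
By Theorem~\ref{nontrivial}, any $k$-local conserved quantity with $k\le L/2$ must have $k\in\{1,2\}$. So it suffices to classify all conserved quantities $\hat Q=\sum_i \hat q_i$, where each $\hat q_i$ is supported on the pair of sites $\{i,i+1\}$ (this also covers $k=1$).

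I will expand $\hat Q$ in a product basis on each site. For fermions the basis is $\{1,\hat c,\hat c^\dagger,\hat n\}$, suitably ordered with Jordan--Wigner signs. For phonons it is the normal-ordered monomials $(\hat b^\dagger)^m\hat b^n$. Since $\hat Q$ is a finite sum, only finitely many monomials appear. I will first use parity: $[\hat H,\cdot]$ preserves fermion parity, so the even and odd parts of $\hat Q$ are separately conserved. I will also use the $U(1)$ fermion-number grading ($[\hat N,\hat H]=0$), under which each charge sector is separately conserved. This reduces the analysis to a handful of sectors.

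\textbf{Phonon degree.} I will look at the highest phonon degree $D=\max(m+n)$ in $\hat Q$. Commuting with $\hat H_{\mathrm{pho}}$ only rescales a monomial $(\hat b^\dagger)^m\hat b^n$ by $\omega(n-m)$. Commuting with $\hat H_{\mathrm{int}}$ raises or lowers the phonon degree by one, attached to a factor $\hat n_j$. Commuting with $\hat H_{\mathrm{hop}}$ keeps the phonon degree but moves fermions.

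The top-degree component $D+1$ of $[\hat Q,\hat H]$ comes only from $\hat H_{\mathrm{int}}$ acting on the degree-$D$ part of $\hat Q$. This produces terms of the form $\hat n_j[\,\cdot\,,\hat b_j+\hat b_j^\dagger]$, plus normal-ordering corrections that appear only at lower degree. Requiring these terms to vanish forces the degree-$D$ part, on every site where it has phonon content, to commute with $\hat n_j$ there. Combining this with the $\hat H_{\mathrm{hop}}$ constraint at degree $D$, and with $\omega\neq0$, I expect to show that the degree-$D$ part cancels unless $D\le1$.

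\textbf{Low-degree cases.} With $D\le 1$, $\hat Q$ is a general linear combination of the 2-site operators built from
\begin{align}
&\hat c^\dagger_i\hat c_{i+1},\quad \hat c_i\hat c_{i+1},\quad \hat n_i\hat n_{i+1},\quad \hat n_i,\notag\\
&\hat b_i,\quad \hat b_i^\dagger,\quad \hat n_i\hat b_{i+1},\quad \hat c^\dagger_i\hat c_{i+1}\hat b_i,\quad \ldots
\end{align}
I will write down $[\hat Q,\hat H]=0$ and compare the coefficients of independent 3-support and 2-support basis operators.

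\begin{enumerate}[label={(\alph*)}]
\item The 3-support outputs produced by $\hat H_{\mathrm{hop}}$ acting on 2-support terms fix the hopping and pair terms. Pair terms $\hat c\hat c$ are excluded by charge conservation; alternatively, a direct computation with $t\neq0$ kills them. The hopping coefficients are forced to be uniform in $i$, equal to some constant $a$.
\item Terms like $\hat n_i\hat n_{i+1}$ generate $\hat n_i\hat c^\dagger_{i+1}\hat c_{i+2}$ under hopping. This term can cancel against nothing else, so its coefficient is zero.
\item The phonon-linear terms must reproduce $a\cdot\frac{g}{t}\hat n_i(\hat b_i+\hat b_i^\dagger)$. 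Balancing $\omega[\hat b^\dagger\hat b,\cdot]$ against $[\hat H_{\mathrm{int}},\cdot]$ then fixes the on-site terms $\hat b^\dagger\hat b$ and $\hat n_i$ up to multiples of $\hat H$ and $\hat N$.
\item A pure phonon term $\hat b_i$ is killed because $[\hat b_i,\hat H]\ni \omega\hat b_i + g\hat n_i$, which cannot cancel.
\end{enumerate}

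\textbf{Main obstacle.} The delicate part is the bookkeeping of the phonon-degree reduction. Unbounded bosonic degrees mean I must argue carefully that the top-degree equation forces cancellation. The fermionic sign conventions in the 2-site cancellations (b) also need care. I would first try the degree-induction argument using the grading by $(m,n)$ together with the $\omega(n-m)$ eigenvalue structure.
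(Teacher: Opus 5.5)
\paragraph*{Gap 1: the phonon-degree step.} Your reduction to $k\in\{1,2\}$ via Theorem~\ref{nontrivial} is fine. The problem is the phonon-degree step, which is meant to carry the whole $k\le 2$ analysis. It is only announced (``I expect to show''), and its stated goal is false. $\hat H$ itself is a conserved 2-local quantity containing $\omega\hat b_i^\dagger\hat b_i$, which has $m+n=2$, so no argument can force $D\le 1$; your own step (c) already uses $\hat b^\dagger\hat b$.

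The mechanism is also misdescribed. For a site factor $f_jB_j$ one has $[f_jB_j,\hat n_j(\hat b_j+\hat b_j^\dagger)]=[f_j,\hat n_j](\hat b_j+\hat b_j^\dagger)B_j+f_j\hat n_j[B_j,\hat b_j+\hat b_j^\dagger]$. The term $\hat n_j[\,\cdot\,,\hat b_j+\hat b_j^\dagger]$ that you cite \emph{lowers} the degree; the raising piece is the first term. Done correctly, the degree-$(D+1)$ equation gives two facts. First, the top part $Q_D$ has fermionic factors in $\{I,\hat n\}$ at every site, whether or not that site carries phonons. Second, only $\hat H_{\mathrm{pho}}$ feeds the fermion-diagonal sector at degree $D$, so $Q_D$ must be phonon-number balanced.

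Balanced terms such as $(\hat b_i^\dagger)^2\hat b_i^2$, $\hat b_i^\dagger\hat b_{i+1}$ or $\hat n_i\hat b_i^\dagger\hat b_i$ pass both tests. Removing them, while keeping $\omega\hat b_i^\dagger\hat b_i$, requires following the coefficient equations down through degrees $D-1$ and $D-2$ together with the hopping terms. That is the real content of the classification, and it is missing from your proposal.

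The paper avoids any degree induction. Working with fixed basis elements $f(x,y)$, it first kills every 2-support input whose endpoint is not a bare $\hat c$ or $\hat c^\dagger$. It does this by extending the input with a hopping term to a 3-support output that no other input produces, an argument insensitive to phonon powers. It then eliminates the surviving pure-phonon inputs through the chain $q_{I(x_1,y_1)I(x_2,y_2)}\propto q_{I(x_1-1,y_1+1)I(x_2,y_2)}$. This chain comes from the $\hat H_{\mathrm{int}}$ outputs $\hat n(x_1-1,y_1)_iI(x_2,y_2)_{i+1}$ and ends at $I(0,x_1+y_1)I(0,x_2+y_2)$, which is killed because $\omega(x_1+y_1+x_2+y_2)\neq 0$.

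\paragraph*{Gap 2: step (a).} Step (a) is also incorrect as written. Charge conservation only decouples the sectors of different charge; it does not force the charge-$\pm 2$ sector to vanish. The 3-support hopping equations with $t\neq 0$ give only $q_{\hat c_i\hat c_{i+1}}=-q_{\hat c_{i+1}\hat c_{i+2}}$, not zero. Indeed, for even $L$ and $g=0$, the operator $\sum_i(-1)^i\hat c_i\hat c_{i+1}$ is conserved. You need $g\neq 0$: once inputs with dressed or $\hat n$ endpoints are excluded, the output $\hat c_i\hat c(0,1)_{i+1}$ arises only from $[\hat c_i\hat c_{i+1},\hat n(0,1)_{i+1}]$.

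Similarly, the 3-support equations make each hopping direction uniform but cannot equate the two directions; at $g=0$, $\sum_i\hat c_i^\dagger\hat c_{i+1}$ alone is conserved. The single constant $a$ you assert in (a) must instead come from $g q_{\hat c^\dagger_i\hat c_{i+1}}=t q_{\hat n(0,1)_{i+1}}$ and its counterpart for $\hat c^\dagger_{i+1}\hat c_i$. These are combined with $\omega q_{\hat n(0,1)_i}=g q_{I(1,1)_i}=\omega q_{\hat n(1,0)_i}$, as in the paper.
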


Together with the corresponding Holstein--Hubbard result proved later in Theorem~\ref{t:HHwithtrivial}, this classification shows that nontrivial local conserved quantities are absent in representative fermion--boson coupled systems, a class of models for which such rigorous results have rarely been available~\cite{LowenHolstein,FreericksLieb,MiyaoHolsteinHubbard,MiyaoSSH,MiyaoHolsteinHubbard2,Miyaolongrange,MiyaoNagaoka,MiyaoHolsteinHubbard3,MiyaoCDW,Miyaoetal,ForLieb}.

\section{Proof strategy}\label{sec:proof_strategy}
\subsection{Proof outline}
The proof of the absence of local conserved quantities proceeds as follows.

To begin with, we expand a $k$-local quantity in a suitable operator basis as
\begin{equation}
\label{expandinput}
\hat{Q}^k= \sum_{l=1}^k \sum_{i=1}^L\sum_{\mathbf{A}^l_i} q_{\mathbf{A}^l_i} \mathbf{A}^l_i ,   
\end{equation}
with coefficients $q_{\mathbf{A}^l_i} \in \mathbb{C}$. Here the sum over $\mathbf{A}^l$ runs over all possible sequences of basis operators of length $l$. Since the commutator of a $k$-local quantity $\hat{Q}_k$ with $\hat{H}$ has support on at most $(k+1)$ sites, we can also expand $[\hat{Q}_k, \hat{H}]$ as
\begin{equation}
\label{expandoutput}
[\hat{Q}^k,\hat{H}]= \sum_{l=0}^{k+1} \sum_{i=1}^L \sum_{\mathbf{B}^l_i} r_{\mathbf{B}^l_i} \mathbf{B}^l_i.
\end{equation}
We refer to a term $\mathbf{A}$ appearing in $\hat{Q}^k$ as an \textit{input} operator, and to a term $\mathbf{B}$ appearing in $[\hat{Q}^k,\hat{H}]$ as an \textit{output} operator.
Each coefficient $r_{\mathbf{B}^l}$ can be expressed as a linear combination of the coefficients $q_{\mathbf{A}}$. The conservation law $[\hat{Q}^k,\hat{H}]=0$ therefore implies that $r_{\mathbf{B}^l}=0$ for every $\mathbf{B}^l$, yielding a large set of linear constraints on the coefficients $q_{\mathbf{A}}$.
If these linear relations admit no solution other than $q_{\mathbf{A}^k}=0$ for all $k$-support inputs $\mathbf{A}^k$, then $\hat{Q}^k$ cannot be a $k$-local conserved quantity.

\subsection{Basis and commutation relations}
In this section, we introduce a convenient basis of operators that describes local conserved quantities of fermion-boson coupling systems.

For bosonic and fermionic degrees of freedom, operators are constructed using the creation/annihilation operators $\hat{b}^\dagger / \hat{b}$ and $\hat{c}^\dagger / \hat{c}$. 
Due to canonical commutation relations $[\hat{b},\hat{b}^\dagger]=1$, single site operator of boson systems can be expanded as a linear combination of $(\hat{b}^\dagger)^x\hat{b}^y$ where $x,y$ are non-negative integers. For example, $\hat{b}^\dagger\hat{b}\hat{b}^\dagger\hat{b}$ can be expanded to $(\hat{b}^\dagger)^2\hat{b}^2+\hat{b}^\dagger\hat{b}$.
For a single fermionic degree of freedom, the local operator algebra is finite dimensional. A single-site fermionic operator can be written as a linear combination of $I$, $\hat{c}$, $\hat{c}^\dagger$, and $\hat{n}=\hat{c}^\dagger\hat{c}$, where $I$ is the identity operator.
Fermion and boson creation/annihilation operators commute with each other, so products of these fermionic and bosonic single-site basis operators form a basis of observables on fermion-boson coupled chains.

Throughout this paper, we write an $l$-support input basis operator starting from the $i$-th site as
\begin{align}
\mathbf{A}^l_i&=  e_i e_{i+1}\dots e_{i+l-1},
\label{spinlessbasis_1}
\end{align}
\begin{align}
e_j \in \left\{ f_j(\hat{b}^\dagger_j)^x\hat{b}_j^y \mid
f_j\in\{I,\hat{c}_j,\hat{c}^\dagger_j,\hat{n}_j\},\,
x,y\in\mathbb{Z}_{\geq 0}\right\},
\label{spinlessbasis_2}
\end{align}
with $e_i,e_{i+l-1}\neq I(\hat{b}^\dagger_j)^0\hat{b}_j^0$. Output operators $\mathbf{B}^l_i$ are expanded in the same basis.

Also, we denote by $f(x,y)_j$ the pair of fermion operator $f_j \in \{ \hat{c}_j, \hat{c}^\dagger_
j, \hat{n}_j, I\}$ and boson operator $(\hat{b}^\dagger_j)^x\hat{b}^y_j$. 
When an operator acts trivially on bosons (i.e. when $x=y=0$), we often denote $f(0,0)$ simply as $f$.
Likewise, when an operator acts trivially on both the fermionic and bosonic degrees of freedom, we often omit the corresponding $I(0,0)=I$ whenever no confusion arises.

To evaluate the commutator $[\mathbf{A}^l_i, \hat{H}]$, we use the following commutation relations.
The fermion operators satisfy the \textit{anticommutation relations}
\begin{align}
    \{ \hat{c}^\dagger_i , \hat{c}_j \} &= \delta _ { i  j }, \label{anticommutation1}
\end{align}
and
\begin{align}
	\{ \hat{c}^\dagger_i , \hat{c}^\dagger_j \} = \{ \hat{c}_i , \hat{c}_j \} = 0 , \label{anticommutation2}
\end{align}
for any $ i ,j=1,2,\dots, L $, where $ \{ \hat{A} , \hat{B} \} := \hat{A} \hat{B} + \hat{B} \hat{A} $ and $\delta_{ij}$ is the Kronecker delta.

From the anticommutation relations of the fermionic operators above, we obtain the following useful commutation formulas, which will be used repeatedly in the subsequent analysis. For the fermionic part, for any $i\neq j$, we have

\begin{align}
[\hat{c}_i, \hat{c}^\dagger_i \hat{c}_j] &= \hat{c}_j ,
\label{commutator_c}\\
[\hat{c}_j^\dagger, \hat{c}^\dagger_i \hat{c}_j] &= -\hat{c}_i^\dagger,
\label{commutator_cdagger}\\
[\hat{n}_i, \hat{c}^\dagger_i \hat{c}_j] &= \hat{c}^\dagger_i \hat{c}_j,
\label{commutator_nc}\\
[\hat{n}_j, \hat{c}^\dagger_i \hat{c}_j] &= -\hat{c}^\dagger_i \hat{c}_j,
\label{commutator_ncdagger}\\
[\hat{c}_i, \hat{n}_i] &= \hat{c}_i ,
\label{commutator_cn}\\
[\hat{c}_i^\dagger, \hat{n}_i] &= -\hat{c}^\dagger_i ,
\label{commutator_cdaggern}
\end{align}
For the bosonic part, we only need commutators in which one of the two bosonic monomials comes from the Hamiltonian. Since the Holstein Hamiltonian~\eqref{Eq.1D_H_model} contains only $\hat{b}$, $\hat{b}^\dagger$, and $\hat{b}^\dagger\hat{b}$, this means that the corresponding powers satisfy $u,v\in\{0,1\}$ in all applications below. Thus, for $x,y\in\mathbb{Z}_{\geq0}$ and $u,v\in\{0,1\}$,
\begin{align}
[(\hat{b}^\dagger)^x_i\hat{b}^y_i,(\hat{b}^\dagger)^u_j\hat{b}^v_j]=(yu-xv)(\hat{b}^\dagger)^{x+u-1}_i\hat{b}^{y+v-1}_i\delta_{ij},
\label{commutator_boson}
\end{align}
Finally, fermionic and bosonic operators commute with each other. Hence, for $f_j\in \{\hat{c}^\dagger_j, \hat{c}_j,\hat{n}_j, I \}$,
\begin{align}
[(\hat{b}^\dagger)^x_i\hat{b}^y_i,f_j]=0.
\label{commutator_bosonspin}
\end{align}

The key observation is that the system has fermion parity symmetry, i.e., each term in the Hamiltonian contains an even number of fermion creation and annihilation operators. This observation partially restores the commutativity with other sites assumed in the proof of nonintegrability for quantum spin systems~\cite{XYZh,HokkyoNI}. That is, let $h_i$ be one of the terms in the Hamiltonian $\hat{H}$, let $A_i$ be an operator whose support overlaps with that of $h_i$, and let $B_j$ be an operator whose support does not overlap with that of $h_i$. Then
\begin{align}
[A_i B_j,h_i]&=[A_i,h_i]B_j
\end{align}
holds. For example, $[\hat{c}_i \hat{c}_{i+3}, \hat{c}^\dagger_i\hat{c}_{i+1}]
=[\hat{c}_i , \hat{c}^\dagger_i\hat{c}_{i+1}]\hat{c}_{i+3}$ holds. 
This property was implicitly used by Futami~\cite{2DHubbard} who discussed the Hubbard model on high-dimensional hypercubic lattices, but it generally holds for any system with fermion parity symmetry.

\section{Proof of Theorem~\ref{nontrivial}}\label{sec:proof}
In this section, we prove Theorem~\ref{nontrivial} given in Sec.~\ref{sec:main_result}. In the following, we assume that $3 \leq k \leq L/2$.

The proof consists of three steps: In step 1, we focus on the conditions $\{r_{\mathbf{B}_j^{k+1}} = 0\}$, and derive that most of $\mathbf{A}^k_i$ have zero coefficients.
In step~2, we turn to the conditions $\{ r_{\mathbf{B}_j^{k}} = 0\}$, and derive that some of the remaining $\mathbf{A}^k_i$ and most of $\mathbf{A}^{k-1}_i$ also have zero coefficients. 
In step 3, we turn to the conditions $\{ r_{\mathbf{B}_j^{k-1}} = 0\}$ and use the constraints on the associated $(k-1)$-support inputs $\mathbf{A}^{k-1}_i$ to show that the remaining coefficients of $\mathbf{A}^k_i$ also vanish.

Now we introduce some terminology to classify the inputs. Take an arbitrary $k$-support input $\mathbf{A}^k_i$. By focusing on its ends $e_i,e_{i+k-1}$, it can be classified into one of the following three types:
\begin{enumerate}[label={\roman*.},ref={\roman*},align=right,leftmargin=!]
    \item \label{enum:input_1}An input 
    that includes $\hat{c}^\dagger$, $\hat{c}$, or $\hat{n}$ on both ends
    \item \label{enum:input_2}An input 
    that includes $\hat{c}^\dagger$, $\hat{c}$, or $\hat{n}$ on one end but not on the other
    \item \label{enum:input_3}An input 
    that includes none of $\hat{c}^\dagger$, $\hat{c}$, or $\hat{n}$ on both ends
\end{enumerate}

\subsection{Proof step 1: Basic relations for products with width $k + 1$.}

Using the condition that $r_{\mathbf{B}_i^{k+1}} = 0$ for all $\mathbf{B}_i^{k+1}$, we narrow down the candidates for $\mathbf{A}_i^k$ whose coefficients $q_{\mathbf{A}_i^k}$ may be nonzero. The key observation is that a $(k+1)$-support output $\mathbf{B}_i^{k+1}$ can be generated only by $\mathbf{A}_j^k$ from $\hat{Q}^k$ together with the hopping terms in $\hat{H}$. Using this fact, we derive simple constraints on coefficients of $\mathbf{A}_j^k$. More specifically, in the following three subsections we show that the possible forms of type~\ref{enum:input_1} inputs are severely restricted, that all type~\ref{enum:input_2} inputs are excluded, and that step 1 alone does not impose a useful restriction on type~\ref{enum:input_3} inputs.

\subsubsection{Inputs of type~\ref{enum:input_1}}

In this section, we treat $k$-support inputs $\mathbf{A}^k$ of type~\ref{enum:input_1}. 
The conclusion of step 1 analysis for this type is summarized in the following Lemma:
\begin{lemma}[step 1 analysis for type~\ref{enum:input_1}]
Assume $\hat{Q}^k$ is a $k$-local conserved quantity of the one-dimensional Holstein model~\eqref{Eq.1D_H_model} with $t\neq 0$, $g\neq 0$, and $\omega \neq0$. 
The coefficients of $\mathbf{A}^k_i$ of type~\ref{enum:input_1} satisfy the following relation for any $i$:
\begin{align}
q_{\hat{c}^\dagger_i 
\hat{c}_{i+k-1}}&=
q_{\hat{c}^\dagger_{i+1} 
\hat{c}_{i+k}},\\
q_{\hat{c}_i \hat{c}^\dagger_{i+k-1} 
}&=
q_{\hat{c}_{i+1} \hat{c}^\dagger_{i+k} 
},\\
q_{\hat{c}^\dagger_i 
\hat{c}^\dagger_{i+k-1}}&=
-q_{\hat{c}^\dagger_{i+1} 
\hat{c}^\dagger_{i+k}},\\
q_{\hat{c}_i 
\hat{c}_{i+k-1}}&=
-q_{\hat{c}_{i+1} 
\hat{c}_{i+k}},\\
q_{\mathbf{A}^k_i}&=0. \quad (\textnormal{otherwise})
\end{align}
\label{lem:step1_1}
\end{lemma}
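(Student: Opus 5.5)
The plan is to look only at the conditions $r_{\mathbf{B}^{k+1}_j}=0$ for $(k+1)$-support outputs. Such an output can only come from a commutator of a $k$-support input $\mathbf{A}^k_i$ with a hopping term $t(\hat{c}^\dagger_{m}\hat{c}_{m+1}+h.c.)$ that straddles one end of $\mathbf{A}^k_i$. Concretely, the pairs are $m=i-1$ (left extension) and $m=i+k-1$ (right extension). The terms $\hat{H}_{\mathrm{int}}$ and $\hat{H}_{\mathrm{pho}}$ are on-site and cannot enlarge the support. By the fermion-parity observation, $[A_iB_j,h_i]=[A_i,h_i]B_j$, so the commutator factorizes. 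Only the end site $e$ and the new site are affected, and the bosonic factor $(\hat{b}^\dagger)^x\hat{b}^y$ on the end site is a spectator. I will therefore tabulate, using \eqref{commutator_c}--\eqref{commutator_cdaggern}, the two-site results of commuting $f(x,y)$ at the right end with $\hat{c}^\dagger_{m}\hat{c}_{m+1}+\hat{c}^\dagger_{m+1}\hat{c}_m$, where $f\in\{\hat{c},\hat{c}^\dagger,\hat{n}\}$. The outcomes are: $\hat{c}\mapsto \pm\hat{c}_{m+1}$ with the end site becoming $I$; $\hat{c}^\dagger\mapsto\mp\hat{c}^\dagger_{m+1}$ with the end site becoming $I$; and $\hat{n}\mapsto \pm(\hat{c}^\dagger_m\hat{c}_{m+1}-\hat{c}^\dagger_{m+1}\hat{c}_m)$, which keeps a nontrivial operator on the old end site. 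The same table holds mirror-wise at the left end.

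Next I treat the $\hat{n}$ ends and the bosonic dressings. Suppose the right end carries $\hat{n}$, or carries $\hat{c}^{(\dagger)}$ with $(x,y)\neq(0,0)$. Then the right-extension output keeps a nontrivial operator at site $i+k-1$, which is \emph{not} the end of any other $k$-support operator. I will check that this output has a unique preimage: the only other candidate is a left extension of some $\mathbf{A}^k_{i+1}$. That candidate would require the product at site $i+1$ to be a bare $\hat{c}$ or $\hat{c}^\dagger$ produced from $I(0,0)$, and it leaves the interior untouched. Hence the old end at $i+k-1$ and the new site $i+k$ can be reproduced only if $e_{i+k-1}=I(0,0)$, contradicting minimal support. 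So $q_{\mathbf{A}^k_i}=0$ whenever either end is $\hat{n}$ or bosonically dressed. The same argument on the left kills those configurations too. The remaining type~\ref{enum:input_1} inputs have bare $\hat{c}$ or $\hat{c}^\dagger$ on both ends.

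For the bare ends I must show the interior is trivial. Take the right extension of $\mathbf{A}^k_i$ with end $\hat{c}^\dagger_{i+k-1}$; it gives $\mp e_i\cdots e_{i+k-2}\,I\,\hat{c}^\dagger_{i+k}$. This output also arises from the left extension of $\mathbf{A}^k_{i+1}=e'_{i+1}\cdots e'_{i+k}$, but only when $e_i=\hat{c}^{(\dagger)}$ comes from $e'_{i+1}$ being bare. Pairing the two contributions gives $q_{\mathbf{A}^k_i}=\pm q_{\mathbf{A}'}$, where $\mathbf{A}'$ equals $\mathbf{A}^k_i$ shifted in a certain way: its left end at $i+1$ is $\hat{c}^{(\dagger)}$ and its interior equals the interior of $\mathbf{A}^k_i$ moved one site left. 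Now suppose the interior contains some nontrivial site. Then $\mathbf{A}'$ has a nontrivial site where the left-extension requirement of the bare end forces identity, or more simply its left end is dressed or is $\hat{n}$, so $q_{\mathbf{A}'}=0$ by the previous paragraph. Iterating gives $q_{\mathbf{A}^k_i}=0$, since the nontrivial interior site is eventually pushed onto an end. When the interior is trivial, the pairing yields exactly the four recursions stated in the Lemma. The signs come from \eqref{commutator_c} versus \eqref{commutator_cdagger}: $\hat{c}^\dagger\hat{c}$ and $\hat{c}\hat{c}^\dagger$ pick up equal signs, while $\hat{c}^\dagger\hat{c}^\dagger$ and $\hat{c}\hat{c}$ pick up a relative minus. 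I also need to reorder fermion strings so the new operator sits in canonical position, which can introduce signs; the count of fermions in the interior (zero here) makes this harmless.

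The main obstacle is the careful uniqueness-of-preimage bookkeeping, including fermionic reordering signs, in the interior-elimination step. The danger is that an output might admit two preimages whose contributions cancel without forcing the coefficient to vanish. I will handle this by always choosing the extension side opposite to the site being examined. That way the output carries a non-bare operator adjacent to a new hopping endpoint, which has a single source.
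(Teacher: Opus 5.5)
Your overall plan follows the paper: eliminate non-bare ends using the hopping extensions, push nontrivial interior sites toward an end, then read off the recursions from outputs such as $\hat{c}^\dagger_i\hat{c}_{i+k}$. Your signs for the four recursions are correct. The step that fails is the end elimination in your second paragraph. You kill a right end that is $\hat{n}$ or a dressed $\hat{c}^{(\dagger)}$ using the \emph{right} extension and claim the output has a unique preimage. This can fail as soon as the left end $e_i$ is a bare $\hat{c}$ or $\hat{c}^\dagger$. For instance, with $\mathbf{A}=\hat{c}^\dagger_i\hat{n}_{i+k-1}$ (identities in between) one has $[\mathbf{A},\hat{c}^\dagger_{i+k-1}\hat{c}_{i+k}]=\hat{c}^\dagger_i\hat{c}^\dagger_{i+k-1}\hat{c}_{i+k}$. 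The same output also arises from $[\hat{c}^\dagger_{i+1}\hat{c}^\dagger_{i+k-1}\hat{c}_{i+k},\hat{c}^\dagger_i\hat{c}_{i+1}]=-\hat{c}^\dagger_i\hat{c}^\dagger_{i+k-1}\hat{c}_{i+k}$. So $r_{\mathbf{B}}=0$ only gives $q_{\mathbf{A}}=q_{\hat{c}^\dagger_{i+1}\hat{c}^\dagger_{i+k-1}\hat{c}_{i+k}}$, and the right-hand side is not known to vanish at that stage. Your justification, that reproducing site $i+k-1$ would force $e_{i+k-1}=I(0,0)$, is mistaken. In the competitor supported on $[i+1,i+k]$, site $i+k-1$ is an interior site and may carry any operator, while site $i+k$ is its legitimate bare right end.

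The repair, which is exactly what the paper does, is to extend on the side \emph{opposite} to the non-bare end. If $e_{i+k-1}$ is $\hat{n}$ or dressed, commute with the hopping across the bond $(i-1,i)$. The output then keeps $e_{i+k-1}$ as its right endpoint, whereas any right extension of an input on $[i-1,i+k-2]$ puts a bare $\hat{c}$ or $\hat{c}^\dagger$ at site $i+k-1$. Distinct left ends on $[i,i+k-1]$ also produce distinct outputs, so $t\,q_{\mathbf{A}}=0$. Your closing sentence gestures at this, but it contradicts what your second paragraph actually does. It also misdescribes the mechanism: the non-bare operator does not sit next to the newly created site; it survives as the far endpoint of the output. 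Once this is fixed, your interior argument goes through as in the paper. One clarification: the pairing $q_{\mathbf{A}^k_i}=\pm q_{\mathbf{A}'}$ exists only when the site adjacent to the bare left end is $I(0,0)$. Otherwise the would-be competitor either does not exist or has an $\hat{n}$ or dressed left end, which is precisely where the corrected end-elimination result is needed.
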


\begin{proof}[Proof of Lemma \ref{lem:step1_1}]
First, we show that these $\mathbf{A}^k$ have zero coefficients unless both endpoints are $\hat{c}^\dagger(0,0),\hat{c}(0,0)$. 
Take a $k$-support input $\mathbf{A}^k_i=e^1_ie^2_{i+1}\cdots e^k_{i+k-1}$ and assume $e^1\neq \hat{c}^\dagger(0,0),\hat{c}(0,0)$. 
We first consider the case $e^k=\hat{c}(x_k,y_k)$. For this case, consider the $(k+1)$-support output
\begin{align*}
\mathbf{B}^{k+1}_i=\big( \prod^{k-1}_{j=1} e^j_{i+j-1} \big) I(x_k,y_k)_{i+k-1}\hat{c}(0,0)_{i+k}.
\end{align*}
Then $\mathbf{A}^k_i$ and $\mathbf{B}^{k+1}_i$ are related by
\begin{align}
&[\big( \prod^{k-1}_{j=1} e^j_{i+j-1} \big) 
\hat{c}(x_k,y_k)_{i+k-1},\hat{c}^\dagger_{i+k-1} \hat{c}_{i+k}]\nonumber \\
=&\big( \prod^{k-1}_{j=1} e^j_{i+j-1} \big) 
I(x_k,y_k)_{i+k-1}\hat{c}(0,0)_{i+k},    
\end{align}
There is no other $k$-support input $\mathbf{A}^k$ whose commutators with $\hat{H}$ generate these $\mathbf{B}^{k+1}_i$. 
Then, 
\begin{align}
    r_{\mathbf{B}^{k+1}_i}=t q_{\mathbf{A}^k_i}=0
\end{align}
holds for all $i$. 
The cases $e^k=\hat{n}(x_k,y_k)$ and $e^k=\hat{c}^\dagger(x_k,y_k)$ can be treated similarly by choosing the outputs
\begin{align*}
\big( \prod^{k-1}_{j=1} e^j_{i+j-1} \big) \hat{c}(x_k,y_k)_{i+k-1}\hat{c}(0,0)_{i+k}
\end{align*}
and
\begin{align*}
\big( \prod^{k-1}_{j=1} e^j_{i+j-1} \big) I(x_k,y_k)_{i+k-1}\hat{c}^\dagger(0,0)_{i+k},
\end{align*}
respectively; an analogous uniqueness argument gives $q_{\mathbf{A}^k_i}=0$ in those cases as well.
Since $t \neq 0$, $\mathbf{A}^k$ has a zero coefficient when $e^1\neq \hat{c}^\dagger(0,0),\hat{c}(0,0)$. An analogous argument can be made for the other endpoint $e^k$, so type~\ref{enum:input_1} inputs $\mathbf{A}^k$ have a zero coefficient unless both endpoints are $\hat{c}^\dagger(0,0),\hat{c}(0,0)$.

Next, we show that type~\ref{enum:input_1} inputs $\mathbf{A}^k$ have zero coefficients unless middle terms $e^2,\dots,e^{k-1}$ are $I(0,0)$. Take a $k$-support input
\begin{align*}
\mathbf{A}^k_i=\hat{c}^\dagger_i \big( \prod^{k-1}_{j=2} e^j_{i+j-1} \big)\hat{c}_{i+k-1}.
\end{align*}
We consider
\begin{align*}
\mathbf{B}^{k+1}_i=\hat{c}^\dagger_i\big( \prod^{k-1}_{j=2} e^j_{i+j-1} \big) I_{i+k-1}\hat{c}_{i+k},
\end{align*}
which is generated from the commutator of $\mathbf{A}^k_i$ and $\hat{c}^\dagger_{i+k-1} \hat{c}_{i+k}$ in $\hat{H}$:
\begin{align}
&[\hat{c}^\dagger_i\big( \prod^{k-1}_{j=2} e^j_{i+j-1} \big) 
\hat{c}_{i+k-1},\hat{c}^\dagger_{i+k-1} \hat{c}_{i+k}] \nonumber \\
=&\hat{c}^\dagger_i\big( \prod^{k-1}_{j=2} e^j_{i+j-1} \big) 
I_{i+k-1}\hat{c}_{i+k}.  
\end{align}
We now ask whether the same output can be generated from any other input whose coefficient has not already been shown to vanish. By the first part of the proof, such an input must have $\hat{c}^\dagger$ or $\hat{c}$ at both ends. Since $\mathbf{B}^{k+1}_i$ has support from $i$ to $i+k$, the only possible competing $k$-support input with nonzero coefficient must be supported on the interval $[i+1,i+k]$ and must use the hopping term $\hat{c}^\dagger_i\hat{c}_{i+1}$ to create the left endpoint. This is possible only if $e^2=I$, in which case the competing input is
\begin{align*}
\hat{c}^\dagger_{i+1}\big( \prod^{k-1}_{j=3} e^j_{i+j-1} \big)I_{i+k-1}\hat{c}_{i+k}.
\end{align*}
Indeed, it gives the same output through
\begin{align}
&[\hat{c}^\dagger_{i+1}\big( \prod^{k-1}_{j=3} e^j_{i+j-1} \big) 
I_{i+k-1}\hat{c}_{i+k},\hat{c}^\dagger_{i} \hat{c}_{i+1}] \nonumber \\
=&-\hat{c}^\dagger_iI_{i+1}\big( \prod^{k-1}_{j=3} e^j_{i+j-1} \big) 
I_{i+k-1}\hat{c}_{i+k}.
\end{align}
If $e^2\neq I$, no such competing input exists; all other possible inputs have already been excluded by the end-point analysis above. Hence $\mathbf{B}^{k+1}_i$ is generated uniquely by $\mathbf{A}^k_i$, and the condition $r_{\mathbf{B}^{k+1}_i}=0$ implies $q_{\mathbf{A}^k_i}=0$. Thus, this $\mathbf{A}^k_i$ can have a nonzero coefficient only when $e^2=I$. The remaining cases $(e^1,e^k)=(\hat{c},\hat{c}^\dagger),(\hat{c}^\dagger,\hat{c}^\dagger),(\hat{c},\hat{c})$ follow similarly. 
By repeating similar arguments for $e^3,\dots,e^{k-1}$, we find that $q_{\mathbf{A}^k_i}=0$ holds for all $i$
except for $\mathbf{A}^k$ such that middle terms $e^2,\dots,e^{k-1}$ are identity $I$. 
Therefore, $\mathbf{A}^k_i=\hat{c}^\dagger_i \hat{c}_{i+k-1},\hat{c}_i \hat{c}^\dagger_{i+k-1},\hat{c}^\dagger_i \hat{c}^\dagger_{i+k-1}$, and $\hat{c}_i \hat{c}_{i+k-1}$ remains as candidates for non-zero coefficient $\mathbf{A}^k$ of type~\ref{enum:input_1}.

Finally, we derive relations among the coefficients of the type~\ref{enum:input_1} candidates $\mathbf{A}^k$. 
Commutators that contribute to $\mathbf{B}^{k+1}_i=\hat{c}^\dagger_i \hat{c}_{i+k}$ are written as
\begin{align}
[\hat{c}^\dagger_i 
\hat{c}_{i+k-1},\hat{c}^\dagger_{i+k-1} \hat{c}_{i+k}]&=\hat{c}^\dagger_i\hat{c}_{i+k}, \\   
[\hat{c}^\dagger_{i+1} 
\hat{c}_{i+k},\hat{c}^\dagger_{i} \hat{c}_{i+1}]&= -\hat{c}^\dagger_i\hat{c}_{i+k},    
\end{align}
which leads to
\begin{align}
q_{\hat{c}^\dagger_i 
\hat{c}_{i+k-1}}
-
q_{\hat{c}^\dagger_{i+1} 
\hat{c}_{i+k}}&=0.
\end{align}
Similarly, by considering the commutators that contribute to $\mathbf{B}^{k+1}_i=\hat{c}_i\hat{c}^\dagger_{i+k} ,\hat{c}^\dagger_i \hat{c}^\dagger_{i+k},\hat{c}_i \hat{c}_{i+k}$, we obtain $q_{\hat{c}^\dagger_i 
\hat{c}_{i+k-1}}=
q_{\hat{c}^\dagger_{i+1} 
\hat{c}_{i+k}},
q_{\hat{c}_i \hat{c}^\dagger_{i+k-1} 
}=
q_{\hat{c}_{i+1} \hat{c}^\dagger_{i+k} 
},
q_{\hat{c}^\dagger_i 
\hat{c}^\dagger_{i+k-1}}=
-q_{\hat{c}^\dagger_{i+1} 
\hat{c}^\dagger_{i+k}},
q_{\hat{c}_i 
\hat{c}_{i+k-1}}=
-q_{\hat{c}_{i+1} 
\hat{c}_{i+k}}$. 
\end{proof}

From Lemma~\ref{lem:step1_1}, for $\mathbf{A}^k_i=\hat{c}^\dagger_i\hat{c}_{i+k-1}, \hat{c}_i\hat{c}_{i+k-1}^\dagger, \hat{c}_i\hat{c}_{i+k-1}$, and $\hat{c}^{\dagger}_i\hat{c}^{\dagger}_{i+k-1}$,
\begin{align}
q_{\hat{c}^\dagger_{i} 
\hat{c}_{i+k-1}}&=
q_{\hat{c}^\dagger_{i+k-1}\hat{c}_{i+2k-2}},
\label{cdc_k1}\\
q_{\hat{c}_{i} 
\hat{c}^\dagger_{i+k-1}}&=
q_{\hat{c}_{i+k-1}\hat{c}^\dagger_{i+2k-2}},
\end{align}
\begin{align}
q_{\hat{c}_{i} 
\hat{c}_{i+k-1}}&=
\begin{cases}
    q_{\hat{c}_{i+k-1}\hat{c}_{i+2k-2}}, & (k\colon\textnormal{odd});\\
    -q_{\hat{c}_{i+k-1}\hat{c}_{i+2k-2}}, & (k\colon\textnormal{even})
\end{cases}
\label{cc_k1}\\
q_{\hat{c}^{\dagger}_{i} 
\hat{c}^{\dagger}_{i+k-1}}&=
\begin{cases}
    q_{\hat{c}^{\dagger}_{i+k-1}\hat{c}^{\dagger}_{i+2k-2}}, & (k\colon\textnormal{odd});\\
    -q_{\hat{c}^{\dagger}_{i+k-1}\hat{c}^{\dagger}_{i+2k-2}}, & (k\colon\textnormal{even})
\end{cases}
\end{align}
hold respectively.

\subsubsection{Inputs of type~\ref{enum:input_2}}
Next, we treat $k$-support inputs $\mathbf{A}^k_i$ of type~\ref{enum:input_2}.
The conclusion of step 1 analysis for this type is summarized in the following Lemma:
\begin{lemma}[Step 1 analysis for type~\ref{enum:input_2}]
Assume $\hat{Q}^k$ is a $k$-local conserved quantity of the one-dimensional Holstein model~\eqref{Eq.1D_H_model} with $t\neq 0$, $g\neq 0$, and $\omega \neq0$.
$q_{\mathbf{A}^k_i}=0$ holds for any $\mathbf{A}^k_i$ of type~\ref{enum:input_2}.
\label{lem:step1_2}
\end{lemma}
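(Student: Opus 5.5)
Our plan is to copy the endpoint argument from the first part of the proof of Lemma~\ref{lem:step1_1}. By reflection symmetry, it suffices to treat a type~\ref{enum:input_2} input $\mathbf{A}^k_i=e^1_i e^2_{i+1}\cdots e^k_{i+k-1}$ whose right end $e^k=f(x_k,y_k)$ has fermionic part $f\in\{\hat{c},\hat{c}^\dagger,\hat{n}\}$. Its left end must then be $e^1=I(x_1,y_1)$ with $(x_1,y_1)\neq(0,0)$, since $e^1$ is neither trivial nor fermionic. We extend the input to the right using the hopping term $\hat{c}^\dagger_{i+k-1}\hat{c}_{i+k}$ (or $\hat{c}^\dagger_{i+k}\hat{c}_{i+k-1}$). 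The outputs are chosen as in Lemma~\ref{lem:step1_1}:
\begin{align*}
\mathbf{B}^{k+1}_i=\Big(\prod_{j=1}^{k-1}e^j_{i+j-1}\Big)\,\tilde f(x_k,y_k)_{i+k-1}\,\hat{c}^{(\dagger)}(0,0)_{i+k},
\end{align*}
with $\tilde f=I$ for $f=\hat{c},\hat{c}^\dagger$ and $\tilde f=\hat{c}$ for $f=\hat{n}$. By the fermion parity property, the commutator acts only on sites $i+k-1,i+k$ and leaves the prefix $\prod_{j<k} e^j$ untouched. Its coefficient is therefore $\pm t\,q_{\mathbf{A}^k_i}$.

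The key step is uniqueness: $\mathbf{B}^{k+1}_i$ must be generated by no other $k$-support input. Since the output has width $k+1$, only the hopping terms can produce it. They can do so either by extending a $k$-support input on $[i,i+k-1]$ to the right, or by extending an input on $[i+1,i+k]$ to the left.

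Left extension cannot work. The hopping term $\hat{c}^\dagger_i\hat{c}_{i+1}$ (or its conjugate) would place a fermionic operator $\hat{c}$ or $\hat{c}^\dagger$ at site $i$. But $\mathbf{B}^{k+1}_i$ has $I(x_1,y_1)$ there, whose fermionic part is the identity. This mismatch is exactly what distinguishes type~\ref{enum:input_2} from type~\ref{enum:input_1}.

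Right extension from another input on $[i,i+k-1]$ is also impossible. That input agrees with $\mathbf{A}^k_i$ on sites $i,\dots,i+k-2$ and carries the same bosonic part $(x_k,y_k)$ at site $i+k-1$. Its fermionic part at $i+k-1$ is then fixed by the single-site fermionic commutation relations~\eqref{commutator_c}--\eqref{commutator_cdaggern}, because we specified both the fermionic factor at $i+k-1$ and the new operator at $i+k$. We will check the small table of cases: $f=\hat{c}$ with output $I\,\hat{c}$ comes only from $\hat{c}$; $f=\hat{c}^\dagger$ with $I\,\hat{c}^\dagger$ comes only from $\hat{c}^\dagger$; $f=\hat{n}$ with $\hat{c}\,\hat{c}$ comes only from $\hat{n}$.

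This uniqueness check is the main obstacle. The subtle point is that the choice of output must give a unique preimage, and the candidates $f=I$ and $f=\hat{n}$ could compete, for example both producing $\hat{c}$-type outputs. If a collision appears in some case, we will switch to the conjugate hopping term, or produce $\hat{c}^\dagger$ at $i+k$ instead. The combinations above appear to avoid such collisions, and we will verify this directly from~\eqref{commutator_c}--\eqref{commutator_cdaggern}.

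With uniqueness in hand, $r_{\mathbf{B}^{k+1}_i}=\pm t\,q_{\mathbf{A}^k_i}=0$, and $t\neq0$ gives $q_{\mathbf{A}^k_i}=0$. The case where the fermionic end is on the left follows by the mirror argument with the hopping term at sites $i-1,i$.
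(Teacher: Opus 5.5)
Your approach is the paper's. You put the fermionic end on the right, extend with the hopping term on the bond $(i+k-1,i+k)$, and use that the resulting $(k+1)$-support output has no other preimage to get $t\,q_{\mathbf{A}^k_i}=0$. Your explicit exclusion of left extensions (the output has the fermionic identity at site $i$) and of competing right extensions is what the paper compresses into ``the remaining cases follow similarly.''

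One entry of your case table is wrong. Write $m=i+k-1$. For $f=\hat{n}$, hopping never produces $\hat{c}\,\hat{c}$ on sites $m$, $m+1$, because hopping conserves fermion number:
$[\hat{n}_m,\hat{c}^\dagger_m\hat{c}_{m+1}]=\hat{c}^\dagger_m\hat{c}_{m+1}$ and $[\hat{n}_m,\hat{c}^\dagger_{m+1}\hat{c}_m]=\hat{c}_m\hat{c}^\dagger_{m+1}$.
So the coefficient of $(\prod_{j<k}e^j)\,\hat{c}(x_k,y_k)_m\hat{c}_{m+1}$ does not involve $q_{\mathbf{A}^k_i}$ at all. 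The same slip occurs in the paper's proof of Lemma~\ref{lem:step1_1}, from which you copied this choice.

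Take instead $\mathbf{B}^{k+1}_i=(\prod_{j<k}e^j)\,\hat{c}^\dagger(x_k,y_k)_m\hat{c}_{m+1}$. Alternatively, choose $\hat{c}^\dagger$ at site $i+k$ in your displayed formula, which gives $\hat{c}(x_k,y_k)_m\hat{c}^\dagger_{m+1}$. Among single-site factors at $m$, only $\hat{n}$ produces either output. $I$ commutes with hopping, and $\hat{c}_m$, $\hat{c}^\dagger_m$ either leave the fermionic identity at $m$ or give zero. With this correction your uniqueness argument goes through. Your worry about $f=I$ competing with $f=\hat{n}$ also disappears, since $I$ produces nothing.
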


\begin{proof}
Owing to the inversion symmetry of the Hamiltonian, it is sufficient to consider the cases $\mathbf{A}_i^{k} = I(x_1,y_1)_i \big( \prod^{k-1}_{j=2} e^j_{i+j-1} \big) \hat{A}(x_k,y_k)_{i+k-1}$ with $\hat{A}=\hat{c},\hat{n}, \hat{c}^\dagger$. 
If $\hat{A}=\hat{c}$, then we have
\begin{align}
r_{\mathbf{B}^{k+1}_i}=t q_{\mathbf{A}^k_i}=0
\end{align}
with
\begin{align*}
\mathbf{B}^{k+1}_i=I(x_1,y_1)_i \big( \prod^{k-1}_{j=2} e^j_{i+j-1} \big) I(x_k,y_k)_{i+k-1} \hat{c}_{i+k}.
\end{align*}
The remaining cases follow similarly.
Therefore, type~\ref{enum:input_2} inputs $\mathbf{A}^k$ have zero coefficients.
\end{proof}

\subsubsection{Inputs of type~\ref{enum:input_3}}
Finally, we examine $k$-support inputs $\mathbf{A}^k_i$ of type~\ref{enum:input_3}. 
None of this type of input generates $\mathbf{B}^{k+1}$. 
Then, we cannot obtain any valid relation for type~\ref{enum:input_3} inputs $\mathbf{A}^k_i$ at this point.

\subsection{Proof step~2: Basic relations for products with width $k$.}
We next derive further constraints from the conditions on $k$-support outputs, i.e., $r_{\mathbf{B}^k_i} = 0$ for all $\mathbf{B}^k_i$. We note that only $k$-support operators $\{\mathbf{A}^k_j\}$ and $(k-1)$-support operators $\{\mathbf{A}^{k-1}_j\}$ may contribute to $\mathbf{B}^k_i$. In this step, we show that among the type~\ref{enum:input_1} inputs only a small family of candidates can remain, and that their coefficients are explicitly related to certain $(k-1)$-support coefficients; we then derive the corresponding constraints on those $(k-1)$-support inputs, which will be used in step~3; finally, we prove that all type~\ref{enum:input_3} inputs have zero coefficients.

\subsubsection{Inputs of type~\ref{enum:input_1}}
Again, we treat $\mathbf{A}^k$ of type~\ref{enum:input_1}. The conclusion of step~2 analysis for this type is summarized in the following Lemma: 
\begin{lemma}[Step 2 analysis for type~\ref{enum:input_1}]
Assume $\hat{Q}^k$ is a $k$-local conserved quantity of the one-dimensional Holstein model~\eqref{Eq.1D_H_model} with $t\neq 0$, $g\neq 0$, and $\omega \neq0$. 
The coefficients of $\mathbf{A}^k_i$ of type~\ref{enum:input_1} satisfy the following relation for any $i$:
\begin{align}
    q_{\hat{c}^\dagger_i 
\hat{c}_{i+k-1}}
    &=\frac{t}{g}q_{\hat{c}^\dagger_{i+1} 
\hat{c}(0,1)_{i+k-1}},\label{cdcb1}\\
    q_{\hat{c}^\dagger_{i+k-1} 
\hat{c}_{i+2k-2}}
    &=\frac{t}{g}q_{\hat{c}^\dagger(0,1)_{i+k-1} 
\hat{c}_{i+2k-3}},\label{cdcb2}\\
q_{\hat{c}_i 
\hat{c}^\dagger_{i+k-1}}
    &=\frac{t}{g}q_{\hat{c}_{i+1} 
\hat{c}^\dagger(0,1)_{i+k-1}},\\
    q_{\hat{c}_{i+k-1} 
\hat{c}^\dagger_{i+2k-2}}
    &=\frac{t}{g}q_{\hat{c}(0,1)_{i+k-1} 
\hat{c}^\dagger_{i+2k-3}},\\
q_{\hat{c}_i \hat{c}_{i+k-1}}&=q_{\hat{c}^\dagger_i \hat{c}^\dagger_{i+k-1}}=0.
\end{align}
\label{lem:step2_1}
\end{lemma}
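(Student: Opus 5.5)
The plan is to read off the relations from the conditions $r_{\mathbf{B}^k}=0$ for carefully chosen $k$-support outputs. Each chosen output should carry a fermion operator at both ends and a single bosonic factor $\hat{b}$ (i.e.\ $(0,1)$) at one end. By Lemmas~\ref{lem:step1_1} and~\ref{lem:step1_2}, the only $k$-support inputs still allowed are $\hat{c}^\dagger_i\hat{c}_{i+k-1}$, $\hat{c}_i\hat{c}^\dagger_{i+k-1}$, $\hat{c}^\dagger_i\hat{c}^\dagger_{i+k-1}$ and $\hat{c}_i\hat{c}_{i+k-1}$, together with type~\ref{enum:input_3} inputs. Type~\ref{enum:input_3} inputs cannot produce an output with $\hat{c}$ or $\hat{c}^\dagger$ at an end site. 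The hopping term preserves the support of a $k$-support input only if it acts inside it, and the $\omega$-term maps $\hat{c}^\dagger_i\hat{c}_{i+k-1}$ to zero. So the only $k$-support contribution to such an output comes from the interaction term $g\hat{n}_{j}(\hat{b}_j^\dagger+\hat{b}_j)$ acting at an endpoint of one of the four candidates.

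For $\mathbf{B}^k_i=\hat{c}^\dagger_i\hat{c}(0,1)_{i+k-1}$, I would enumerate the contributions as follows. First, $[\hat{c}^\dagger_i\hat{c}_{i+k-1},g\hat{n}_{i+k-1}\hat{b}_{i+k-1}]=g\,\hat{c}^\dagger_i\hat{c}(0,1)_{i+k-1}$ by \eqref{commutator_cn}. Second, the $(k-1)$-support input $\hat{c}^\dagger_{i+1}\hat{c}(0,1)_{i+k-1}$, commuted with the hopping term $t\,\hat{c}^\dagger_i\hat{c}_{i+1}$, gives $-t\,\hat{c}^\dagger_i\hat{c}(0,1)_{i+k-1}$ by \eqref{commutator_cdagger}. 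I would then argue that no other input contributes. A $(k-1)$-support input supported on $[i,i+k-2]$ cannot contribute, because hopping cannot transport the boson to site $i+k-1$. Any other $k$-support input ending with $\hat{c}(0,1)$ already has zero coefficient by Lemma~\ref{lem:step1_1}. The $\omega$ and $g$ terms acting on $(k-1)$-support inputs do not enlarge the support. This yields $g\,q_{\hat{c}^\dagger_i\hat{c}_{i+k-1}}=t\,q_{\hat{c}^\dagger_{i+1}\hat{c}(0,1)_{i+k-1}}$, which is \eqref{cdcb1}. Placing the boson at the left end instead, via $\mathbf{B}^k=\hat{c}^\dagger(0,1)_{i+k-1}\hat{c}_{i+2k-2}$, gives \eqref{cdcb2}. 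Here I would track the sign from $[\hat{c}^\dagger,\hat{n}]=-\hat{c}^\dagger$ and from the hopping $\hat{c}^\dagger_{i+2k-3}\hat{c}_{i+2k-2}$. The two $\hat{c}\,\hat{c}^\dagger$ relations follow by the same argument with daggers exchanged.

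For $\hat{c}_i\hat{c}_{i+k-1}$ the same two-end bookkeeping gives $q_{\hat{c}_i\hat{c}_{i+k-1}}=\pm\frac{t}{g}q_{\hat{c}_{i+1}\hat{c}(0,1)_{i+k-1}}$ and $q_{\hat{c}_i\hat{c}_{i+k-1}}=\pm\frac{t}{g}q_{\hat{c}(0,1)_i\hat{c}_{i+k-2}}$. These must be confronted with the alternating relation $q_{\hat{c}_i\hat{c}_{i+k-1}}=-q_{\hat{c}_{i+1}\hat{c}_{i+k}}$ from Lemma~\ref{lem:step1_1}. The same applies to $\hat{c}^\dagger\hat{c}^\dagger$, with the additional use of $\hat{b}^\dagger$-versions of the outputs, obtained from the $\hat{b}^\dagger$ part of the interaction.

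This is the step I expect to be the main obstacle. The difficulty is that the $(k-1)$-support coefficients appearing here are not yet constrained. My first attempt would be to use further $k$-support outputs whose end factors are $\hat{c}(0,1)$ at one end and $\hat{c}$ with a different boson content, or $\hat{n}$-dressed factors, at the other end. These should tie $q_{\hat{c}_{i+1}\hat{c}(0,1)_{i+k-1}}$ and $q_{\hat{c}(0,1)_{i+1}\hat{c}_{i+k-1}}$ to each other with a definite relative sign. The goal is to show that the anticommutation sign picked up when two identical fermion operators are reordered conflicts with the $(-1)$ translation relation, forcing $q_{\hat{c}_i\hat{c}_{i+k-1}}=0$. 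If that fails, I would fall back on comparing the $\hat{c}_i\hat{c}_{i+k-1}$ input with its translate using $\hat{b}$ versus $\hat{b}^\dagger$ outputs, since $g$ enters both with the same sign.
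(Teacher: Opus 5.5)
Your derivation of the four $c^\dagger c$-type relations matches the paper's. Each of the outputs $\hat{c}^\dagger_i\hat{c}(0,1)_{i+k-1}$ and $\hat{c}^\dagger(0,1)_{i+k-1}\hat{c}_{i+2k-2}$, and their $\hat{c}\,\hat{c}^\dagger$ analogues, receives exactly one $g$-contribution from a surviving $k$-support input and one end-hopping contribution from a $(k-1)$-support input.

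The gap is the last claim, $q_{\hat{c}_i\hat{c}_{i+k-1}}=q_{\hat{c}^\dagger_i\hat{c}^\dagger_{i+k-1}}=0$. You name it as the obstacle but give no working mechanism, and the one you suggest fails. Consider a $k$-support output whose two end factors are both something other than a bare $\hat{c}$ or $\hat{c}^\dagger$. It gets nothing from $(k-1)$-support inputs, because end-hopping always creates a bare $\hat{c}$ or $\hat{c}^\dagger$ at the new end. Its only $k$-support contributions come from coefficients already killed by Lemmas~\ref{lem:step1_1} and~\ref{lem:step1_2}. Such outputs therefore give only $0=0$. More fundamentally, the coefficients you want to tie together, $q_{\hat{c}_{i+1}\hat{c}(0,1)_{i+k-1}}$ and $q_{\hat{c}(0,1)_{i+1}\hat{c}_{i+k-1}}$ (or $q_{\hat{c}(0,1)_i\hat{c}_{i+k-2}}$), carry their phonon on different sites. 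The phonons are dispersionless, and no $k$-support output contains both coefficients. The $\hat{b}^\dagger$ fallback only produces a parallel family of unknowns with the same structure.

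The missing idea is to pin the phonon on one site and slide the fermion pair past it. Pair the right-dressed relation for $\hat{c}_i\hat{c}_{i+k-1}$ with the left-dressed relation for $\hat{c}_{i+k-1}\hat{c}_{i+2k-2}$, both dressed on the shared site $i+k-1$. This gives $g\,q_{\hat{c}_i\hat{c}_{i+k-1}}+t\,q_{\mathcal{C}^{(0)}_i}=0$ and $g\,q_{\hat{c}_{i+k-1}\hat{c}_{i+2k-2}}+t\,q_{\mathcal{C}^{(k-2)}_i}=0$, where $\mathcal{C}^{(m)}_i=\hat{c}_{i+1+m}\hat{b}_{i+k-1}\hat{c}_{i+k-1+m}$.

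Apart from the $g$-relation, the only $k$-support output containing $q_{\mathcal{C}^{(0)}_i}$ is its right extension. In general, for $0\le m\le k-3$, the output $\hat{c}_{i+1+m}\hat{b}_{i+k-1}\hat{c}_{i+k+m}$ is produced only by right-hopping on $\mathcal{C}^{(m)}_i$ and left-hopping on $\mathcal{C}^{(m+1)}_i$, both with coefficient $+t$. Type~\ref{enum:input_3} inputs cannot contribute, since a single commutator cannot turn a phonon-dressed identity endpoint into a bare $\hat{c}$. Hence $q_{\mathcal{C}^{(m)}_i}=-q_{\mathcal{C}^{(m+1)}_i}$, and so $q_{\hat{c}_i\hat{c}_{i+k-1}}=(-1)^{k-2}q_{\hat{c}_{i+k-1}\hat{c}_{i+2k-2}}$. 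Iterating the step~1 relation $k-1$ times instead gives the factor $(-1)^{k-1}$, so both hold only if the coefficient vanishes.

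The contradiction is thus a parity mismatch between $k-2$ and $k-1$ hops, each carrying a factor $-1$ for the $cc$ type. It is not a reordering sign of identical fermions. For the $c^\dagger c$ type every hop carries $+1$, which is why those candidates survive to step~3.
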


\begin{proof}[Proof of Lemma \ref{lem:step2_1}]
By Lemma~\ref{lem:step1_1}, $k$-support input $\mathbf{A}^k$ of type~\ref{enum:input_1} has zero coefficients in most cases.
Below, we further divide the remaining type~\ref{enum:input_1} candidates $\mathbf{A}^k_i=\hat{c}^\dagger_i \hat{c}_{i+k-1},\hat{c}_i \hat{c}^\dagger_{i+k-1},\hat{c}_i \hat{c}_{i+k-1},\hat{c}^\dagger_i \hat{c}^\dagger_{i+k-1}$ into two classes according to the symmetry between creation and annihilation operators. We call $\hat{c}_i \hat{c}_{i+k-1}$ and $\hat{c}^\dagger_i \hat{c}^\dagger_{i+k-1}$ the $cc$ type, and $\hat{c}^\dagger_i \hat{c}_{i+k-1}$ and $\hat{c}_i \hat{c}^\dagger_{i+k-1}$ the $c^\dagger c$ type. We first show that all $cc$-type coefficients vanish. We then analyze the $c^\dagger c$ type and show that step~2 leaves only these candidates, whose coefficients are related to certain $(k-1)$-support coefficients.

\paragraph*{\underline{$cc$ type}}
Now, we consider $\mathbf{A}^k_i=\hat{c}_i 
\hat{c}_{i+k-1}$. We take $\mathbf{B}^k_i = \hat{c}_i 
\hat{c}_{i+k-1}\hat{b}_{i+k-1}$ and analyze what commutators generate this $\mathbf{B}^k_i$.
Taking into account the constraints on $\mathbf{A}^k$ examined in step 1, all the commutators that contribute to this $\mathbf{B}^k_i$ are
\begin{align}
[\hat{c}_i 
\hat{c}_{i+k-1},\hat{n}(0,1)_{i+k-1}]&=\hat{c}_i 
\hat{c}_{i+k-1}\hat{b}_{i+k-1}, \\   
[\hat{c}_{i+1} 
\hat{c}_{i+k-1}\hat{b}_{i+k-1},\hat{c}^\dagger_{i+1}\hat{c}_i]&=\hat{c}_i 
\hat{c}_{i+k-1}\hat{b}_{i+k-1},
\end{align}
which gives the following equation:
\begin{align}
    r_{\hat{c}_i 
\hat{c}_{i+k-1}\hat{b}_{i+k-1}} &=gq_{\hat{c}_i 
\hat{c}_{i+k-1}}+tq_{\hat{c}_{i+1} 
\hat{c}_{i+k-1}\hat{b}_{i+k-1}}&=0.
\end{align}
In other words,
\begin{align}
    q_{\hat{c}_i 
\hat{c}_{i+k-1}}
    &=-\frac{t}{g}q_{\hat{c}_{i+1} 
\hat{c}_{i+k-1}\hat{b}_{i+k-1}}
\label{cc_L}
\end{align}
holds.

Also, by considering commutators that generate $\mathbf{B}^k_{i+k-1} = \hat{c}_{i+k-1}\hat{b}_{i+k-1}
\hat{c}_{i+2k-2}$, we obtain
\begin{align}
    q_{\hat{c}_{i+k-1} 
\hat{c}_{i+2k-2}}
    &=-\frac{t}{g}q_{\hat{c}_{i+k-1}\hat{b}_{i+k-1}
\hat{c}_{i+2k-3}}.
\label{cc_R}
\end{align}

To relate these two coefficients, we introduce
\begin{align}
\mathcal{C}^{(m)}_i
:=\hat{c}_{i+1+m}\hat{b}_{i+k-1}\hat{c}_{i+k-1+m}
\quad (m=0,1,\ldots,k-2).
\end{align}
Equivalently, this definition reads
\begin{align}
\mathcal{C}^{(m)}_i=
\begin{cases}
\hat{c}_{i+1}\hat{c}(0,1)_{i+k-1}, & (m=0),\\
\hat{c}_{i+1+m}I(0,1)_{i+k-1}\hat{c}_{i+k-1+m}, & (1\leq m\leq k-3),\\
\hat{c}(0,1)_{i+k-1}\hat{c}_{i+2k-3}, & (m=k-2).
\end{cases}
\end{align}
Here the middle line is absent when $k=3$.

For $m=0,1,\ldots,k-3$, focus on the coefficient of
$\mathbf{B}^k_{i+1+m}=\hat{c}_{i+1+m}\hat{b}_{i+k-1}\hat{c}_{i+k+m}$.
The relevant commutators are
\begin{align}
[\mathcal{C}^{(m)}_i,\hat{c}^\dagger_{i+k-1+m}\hat{c}_{i+k+m}]
&=\hat{c}_{i+1+m}\hat{b}_{i+k-1}\hat{c}_{i+k+m},
\\
[\mathcal{C}^{(m+1)}_i,\hat{c}^\dagger_{i+2+m}\hat{c}_{i+1+m}]
&=\hat{c}_{i+1+m}\hat{b}_{i+k-1}\hat{c}_{i+k+m},
\end{align}
which yield
\begin{align}
q_{\mathcal{C}^{(m)}_i}&=-q_{\mathcal{C}^{(m+1)}_i}
\quad (m=0,1,\ldots,k-3).
\label{cc_C_recursion}
\end{align}
Therefore,
\begin{align}
q_{\mathcal{C}^{(0)}_i}
&=(-1)^{k-2}q_{\mathcal{C}^{(k-2)}_i}.
\label{cc_C_chain}
\end{align}
Equivalently, 
\begin{align}
q_{\hat{c}_{i+1} 
\hat{c}(0,1)_{i+k-1}}&=
\begin{cases}
    -q_{\hat{c}(0,1)_{i+k-1}\hat{c}_{i+2k-3}}, & (k\colon\textnormal{odd});\\
    q_{\hat{c}(0,1)_{i+k-1}\hat{c}_{i+2k-3}}, & (k\colon\textnormal{even})
\end{cases}
\end{align}
for $k\geq 3$.

Furthermore, combining this equation with Eq.~\eqref{cc_L} and Eq.~\eqref{cc_R}, we obtain the following relation:
\begin{align}
q_{\hat{c}_{i} 
\hat{c}_{i+k-1}}&=
\begin{cases}
    -q_{\hat{c}_{i+k-1}\hat{c}_{i+2k-2}}, & (k\colon\textnormal{odd});\\
    q_{\hat{c}_{i+k-1}\hat{c}_{i+2k-2}}, & (k\colon\textnormal{even}).
\end{cases}
\end{align}

Comparing this result with Eq.~\eqref{cc_k1}, we find
\begin{align}
q_{\hat{c}_{i} 
\hat{c}_{i+k-1}}&=q_{\hat{c}_{i+k-1}\hat{c}_{i+2k-2}}=0
\end{align}
for all $i$ regardless of whether $k$ is even or odd. We can also show that $\mathbf{A}^k_i=\hat{c}^\dagger_i 
\hat{c}^\dagger_{i+k-1}$ has zero coefficients for all $i$ in a similar manner.

\paragraph*{\underline{$c^\dagger c$ type}}
Next, we treat $\mathbf{A}^k_i=\hat{c}^\dagger_i 
\hat{c}_{i+k-1}$. Considering the coefficients of 
$\mathbf{B}^k_i = \hat{c}^\dagger_i 
\hat{c}(0,1)_{i+k-1}$ and $\mathbf{B}^k_{i+k-1} = \hat{c}^\dagger(0,1)_{i+k-1} 
\hat{c}_{i+2k-2}$, we obtain
\begin{align}
    gq_{\hat{c}^\dagger_i 
\hat{c}_{i+k-1}}-tq_{\hat{c}^\dagger_{i+1} 
\hat{c}(0,1)_{i+k-1}}&=0, \label{cdc_1}\\
    -gq_{\hat{c}^\dagger_{i+k-1} 
\hat{c}_{i+2k-2}}+tq_{\hat{c}^\dagger(0,1)_{i+k-1} 
\hat{c}_{i+2k-3}}&=0, \label{cdc_k}
\end{align}
which directly leads Eq.~\ref{cdcb1}, \ref{cdcb2}.
Similar argument is true for $\hat{c}_i\hat{c}_{i+k-1}^\dagger$ and $\hat{c}_{i+k-1} 
\hat{c}^\dagger_{i+2k-2}$.
\end{proof}

As a byproduct of the above analysis of type~\ref{enum:input_1} inputs, we also
obtain constraints on some coefficients of $(k-1)$-support inputs
$\mathbf{A}^{k-1}$. Indeed, combining Eq.~\eqref{cdc_k1}, Eq.~\eqref{cdc_1}, and
Eq.~\eqref{cdc_k} and considering the coefficient of
$\mathbf{B}^k_{i+1} = \hat{c}^\dagger_{i+1}I(0,1)_{i+k-1}\hat{c}_{i+k}$ when
$k\geq 4$ and so on, we obtain
\begin{align}
q_{\hat{c}^\dagger_{i+1} 
\hat{c}(0,1)_{i+k-1}}
&=q_{\hat{c}^\dagger_{i+2}I(0,1)_{i+k-1}\hat{c}_{i+k}} \nonumber \\
&=\dots \nonumber \\
&=q_{\hat{c}^\dagger(0,1)_{i+k-1} 
\hat{c}_{i+2k-3}},
\label{cdc_k-1}\\
q_{\hat{c}_{i+1} 
\hat{c}^\dagger(0,1)_{i+k-1}}
&=q_{\hat{c}_{i+2}I(0,1)_{i+k-1}\hat{c}^\dagger_{i+k}} \nonumber \\
&=\dots \nonumber \\
&=q_{\hat{c}(0,1)_{i+k-1} 
\hat{c}^\dagger_{i+2k-3}}.
\end{align}

\subsubsection{Constraints on $\mathbf{A}^{k-1}$}
In addition, we derive constraints on coefficients of $(k-1)$-support inputs written as 
$\mathbf{A}^{k-1}_i=C_1(x_1,y_1)_i C_2(x_2,y_2)_{i+k-2}$ with
$(C_1,C_2)\in \{(\hat{c},\hat{c}^\dagger),(\hat{c}^\dagger,\hat{c})\}$. These constraints are useful in step 3
when analyzing inputs for $\mathbf{B}^{k-1}$.

The conclusion of this analysis is summarized in the following Lemma: 
\begin{lemma}
Assume $\hat{Q}^k$ is a $k$-local conserved quantity of the one-dimensional Holstein model~\eqref{Eq.1D_H_model} with $t\neq 0$, $g\neq 0$, and $\omega \neq0$. 
The coefficients of $\mathbf{A}^{k-1}_i=C_1(x_1,y_1)_i C_2(x_2,y_2)_{i+k-2}$ with
$(C_1,C_2)\in \{(\hat{c},\hat{c}^\dagger),(\hat{c}^\dagger,\hat{c})\}$ satisfy the following relation for any $i$:
\begin{align}
q_{\hat{c}^\dagger_{i+1} 
\hat{c}(0,1)_{i+k-1}}
&=q_{\hat{c}^\dagger(0,1)_{i+k-1} 
\hat{c}_{i+2k-3}},\label{constr:cdc_b}\\
q_{\hat{c}^\dagger_{i+1} 
\hat{c}(1,0)_{i+k-1}}
&=q_{\hat{c}^\dagger(1,0)_{i+k-1} 
\hat{c}_{i+2k-3}},\label{constr:cdc_bdagger}\\
q_{\hat{c}_{i+1} 
\hat{c}^\dagger(0,1)_{i+k-1}}
&=q_{\hat{c}(0,1)_{i+k-1} 
\hat{c}^\dagger_{i+2k-3}},\label{constr:ccd_b}\\
q_{\hat{c}_{i+1} 
\hat{c}^\dagger(1,0)_{i+k-1}}
&=q_{\hat{c}(1,0)_{i+k-1} 
\hat{c}^\dagger_{i+2k-3}},\label{constr:ccd_bdagger}\\
q_{\hat{c}^\dagger_i\hat{c}_{i+k-2}}&=q_{\hat{c}^\dagger_{i+1}\hat{c}_{i+k-1}},\label{constr:cdc_shift}\\ 
q_{\hat{c}_i\hat{c}^\dagger_{i+k-2}}&=q_{\hat{c}_{i+1}\hat{c}^\dagger_{i+k-1}},\label{constr:ccd_shift}\\
q_{C_1(x_1,y_1)_i C_2(x_2,y_2)_{i+k-2}}&=0. \quad (\textnormal{otherwise})
\end{align}
\label{lem:constraints}
\end{lemma}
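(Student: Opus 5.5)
The plan is to use only the conditions $r_{\mathbf{B}^k_j}=0$ for $k$-support outputs whose two ends carry a single fermion operator each, $\hat{c}$ or $\hat{c}^\dagger$. We already know from Lemmas~\ref{lem:step1_1}, \ref{lem:step1_2} and \ref{lem:step2_1} that the only $k$-support inputs that can survive are $\hat{c}^\dagger_i\hat{c}_{i+k-1}$ and $\hat{c}_i\hat{c}^\dagger_{i+k-1}$, together with type~\ref{enum:input_3} inputs. Type~\ref{enum:input_3} inputs have purely bosonic ends, so they cannot produce an output with $\hat{c}$ or $\hat{c}^\dagger$ at both ends of a $k$-site window; hopping would only enlarge the support. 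Hence, for an output $\mathbf{B}^k_j=C_1(x_1,y_1)_j\,C_2(x_2,y_2)_{j+k-1}$, the only possible sources are the following.
\begin{enumerate}[label={(\alph*)}]
    \item The surviving $k$-support inputs, acted on by $\hat{H}_{\mathrm{int}}$ or $\hat{H}_{\mathrm{pho}}$ at an endpoint; these give boson degree one and zero respectively.
    \item $(k-1)$-support inputs of the form $C_1(x,y)\,C_2(x',y')$, extended by one hopping term at the left or right end.
\end{enumerate}
Other $(k-1)$-support inputs (with an $\hat{n}$, or with nontrivial middle sites) extended by hopping produce outputs with a nontrivial middle, so they do not enter these particular equations.

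First I would treat outputs with nontrivial middle sites, of the form $\hat{c}^\dagger_j (\cdots)\hat{c}_{j+k-1}$. Such an output is produced only by extending $\hat{c}^\dagger_{j+1}(\cdots)\hat{c}_{j+k-1}$ on the left or $\hat{c}^\dagger_{j}(\cdots)\hat{c}_{j+k-2}$ on the right. This is the same kind of argument as in Lemma~\ref{lem:step1_1}, now one level down. It gives a two-term relation between translates, and chaining such relations yields equalities like \eqref{cdc_k-1}.

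The main part concerns the end-site bosonic labels. Take the output $\hat{c}^\dagger(x_1,y_1)_j\,\hat{c}(x_2,y_2)_{j+k-1}$ with a bosonic monomial at the right end. It receives contributions from the extension of $\hat{c}^\dagger_{j+1}\hat{c}(x_2,y_2)_{j+k-1}$ and from the extension of $\hat{c}^\dagger(x_1,y_1)_j\,\hat{c}_{j+k-2}$, but right-extension requires the site $j+k-2$ to carry trivial bosons after hopping, which forces consistency with $(x_2,y_2)$. I would set up, for each fixed pair of bosonic labels, the recursion along the chain of outputs. Where an end must simultaneously carry a bosonic monomial and be reached by hopping, the recursion terminates in a single-term equation of the form $t\,q=0$. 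This happens unless the monomial is exactly of degree one, which is what the step-2 $k$-support terms can feed through $g$, or of degree zero. This gives $q=0$ for all labels other than $(0,0)$, $(0,1)$ and $(1,0)$ at one end.

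For $(0,1)$, the chain of equalities is exactly \eqref{cdc_k-1}, giving \eqref{constr:cdc_b} and \eqref{constr:ccd_b}. The $(1,0)$ analogues \eqref{constr:cdc_bdagger} and \eqref{constr:ccd_bdagger} follow the same way, with $\hat{b}^\dagger$ produced by $\hat{n}(1,0)$ in $\hat{H}_{\mathrm{int}}$. For $(0,0)$ on both ends, the output $\hat{c}^\dagger_j\hat{c}_{j+k-1}$ receives contributions from the two hopping extensions of $(k-1)$-support terms. It also receives a $k$-support contribution only via $\hat{H}_{\mathrm{pho}}$, and that contribution vanishes because there are no bosons. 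This yields the shift relations \eqref{constr:cdc_shift} and \eqref{constr:ccd_shift}.

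The hardest step is the exhaustive bookkeeping for general $(x_1,y_1,x_2,y_2)$, where bosonic labels sit at both ends simultaneously. Here I would argue by induction on $x_1+y_1+x_2+y_2$. The $\hat{H}_{\mathrm{pho}}$ commutator multiplies by $\omega(x-y)$ rather than changing the operator. The $g$-term shifts the degree by one. So the top-degree component at a given end can only be matched by hopping, which forces it to vanish. This is where $g\neq0$ and $\omega\neq0$ are not needed, and only $t\neq0$ is used.
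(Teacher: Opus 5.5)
Your strategy matches the paper's in most respects: uniqueness arguments on $k$-support outputs produced by one outward hopping, the chain \eqref{cdc_k-1} for the degree-one labels, and the two-term shift relation when both labels are $(0,0)$. Your treatment of inputs with one trivial end is also correct. The gap is the case you yourself flag as hardest, where both $(x_1,y_1)$ and $(x_2,y_2)$ are nonzero.

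In that case, neither family of outputs you write down contains these inputs. A hopping term always puts a boson-free $\hat{c}$ or $\hat{c}^\dagger$ on the new site and leaves the old end's label in the interior. So $\hat{c}^\dagger(x_1,y_1)_i\hat{c}(x_2,y_2)_{i+k-2}$ feeds neither an identity-middle output $C_1(\cdot)C_2(\cdot)$ nor an output with boson-free ends. For the same reason, your sentence about $\hat{c}^\dagger(x_1,y_1)_j\hat{c}(x_2,y_2)_{j+k-1}$ is off. When $(x_2,y_2)\neq(0,0)$, this output receives nothing from right-extension, and it receives the left-extension only if $(x_1,y_1)=(0,0)$.

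The proposed induction on $x_1+y_1+x_2+y_2$ does not supply the missing equation either. $\hat{H}_{\mathrm{pho}}$ and $\hat{H}_{\mathrm{int}}$ acting on a $(k-1)$-support input give a $(k-1)$-support output, so they never enter the conditions $r_{\mathbf{B}^k}=0$. Moreover, the only surviving type~\ref{enum:input_1} $k$-support inputs are boson-free. So there is no degree-shifting coupling in these equations to induct along.

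The missing ingredient is the right choice of output: $\mathbf{B}^k_i=\hat{c}^\dagger(x_1,y_1)_i\,I(x_2,y_2)_{i+k-2}\,\hat{c}_{i+k-1}$. Right-hopping produces it by keeping the left label at the end and pushing the right label into the interior. Apart from the input itself, it has only two possible sources:
\begin{itemize}
\item the left-hopping of $\hat{c}^\dagger_{i+1}I(x_2,y_2)_{i+k-2}\hat{c}_{i+k-1}$, which requires $(x_1,y_1)=(0,0)$;
\item the $g$-term acting at site $i$ on $\hat{c}^\dagger_i\hat{c}_{i+k-1}$, which requires $(x_1,y_1)\in\{(0,1),(1,0)\}$ and $(x_2,y_2)=(0,0)$.
\end{itemize}

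Intersect this with the mirror condition from the left-hopping output $\hat{c}^\dagger_{i-1}I(x_1,y_1)_i\hat{c}(x_2,y_2)_{i+k-2}$. The only surviving label pairs are then both trivial, or exactly one label in $\{(0,1),(1,0)\}$ with the other trivial. When both labels are nonzero, the equation is simply $tq=0$, with no induction needed. With this output added, the rest of your plan goes through as in the paper.
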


\begin{proof}
As an example, we consider the case 
$\mathbf{A}^{k-1}_i=\hat{c}^\dagger(x_1,y_1)_i\hat{c}(x_2,y_2)_{i+k-2}$.
We first consider the output obtained by extending this input to the right,
\begin{align*}
\mathbf{B}^{k}_i=\hat{c}^\dagger(x_1,y_1)_iI(x_2,y_2)_{i+k-2}\hat{c}_{i+k-1}.
\end{align*}
The commutator
\begin{align}
&[\hat{c}^\dagger(x_1,y_1)_i\hat{c}(x_2,y_2)_{i+k-2},\hat{c}^\dagger_{i+k-2}\hat{c}_{i+k-1}] \nonumber \\
=&\hat{c}^\dagger(x_1,y_1)_iI(x_2,y_2)_{i+k-2}\hat{c}_{i+k-1}
\end{align}
contributes to $r_{\mathbf{B}^{k}_i}$. By Lemmas~\ref{lem:step1_1} and~\ref{lem:step1_2}, this output can have another input only in one of the following two cases:
\begin{enumerate}[label=(R\arabic*)]
\item $(x_1,y_1)=(0,0)$;
\item $(x_1,y_1)\in\{(0,1),(1,0)\}$ and $(x_2,y_2)=(0,0)$.
\end{enumerate}
Otherwise this is the unique input, and hence $q_{\mathbf{A}^{k-1}_i}=0$.

We next consider the output obtained by extending this input to the left,
\begin{align*}
\mathbf{B}^{k}_{i-1}=\hat{c}^\dagger_{i-1}I(x_1,y_1)_i\hat{c}(x_2,y_2)_{i+k-2}.
\end{align*}
The commutator
\begin{align}
&[\hat{c}^\dagger(x_1,y_1)_i\hat{c}(x_2,y_2)_{i+k-2},\hat{c}^\dagger_{i-1}\hat{c}_{i}] \nonumber \\
=&-\hat{c}^\dagger_{i-1}I(x_1,y_1)_i\hat{c}(x_2,y_2)_{i+k-2}
\end{align}
contributes to $r_{\mathbf{B}^{k}_{i-1}}$. An analogous uniqueness argument shows that this output can have another input only in one of the following two cases:
\begin{enumerate}[label=(L\arabic*)]
\item $(x_2,y_2)=(0,0)$;
\item $(x_2,y_2)\in\{(0,1),(1,0)\}$ and $(x_1,y_1)=(0,0)$.
\end{enumerate}

Combining the right- and left-extension constraints, the coefficient of $\mathbf{A}^{k-1}_i$ can be nonzero only in one of the following three cases:
\begin{enumerate}[label=(C\arabic*)]
\item $(x_1,y_1)=(x_2,y_2)=(0,0)$;
\item $(x_1,y_1)=(0,0)$ and $(x_2,y_2)\in\{(0,1),(1,0)\}$;
\item $(x_1,y_1)\in\{(0,1),(1,0)\}$ and $(x_2,y_2)=(0,0)$.
\end{enumerate}
It remains to record the relations among the coefficients in these cases.

First, when $(x_1,y_1)=(x_2,y_2)=(0,0)$, the additional input obtained by shifting the support by one site gives
\begin{align}
[\hat{c}^\dagger_{i+1}\hat{c}_{i+k-1},\hat{c}^\dagger_{i}\hat{c}_{i+1}]
&=-\hat{c}^\dagger_i\hat{c}_{i+k-1},
\end{align}
which yields
\begin{align}
q_{\hat{c}^\dagger_i\hat{c}_{i+k-2}}&=q_{\hat{c}^\dagger_{i+1}\hat{c}_{i+k-1}}.
\end{align}
Second, when $(x_1,y_1)=(0,0)$ and $(x_2,y_2)=(0,1)$ or $(1,0)$, the surviving coefficients are related to the corresponding coefficients with the nontrivial phonon factor on the left end. Using Eq.~\eqref{cdc_k-1}, we obtain
\begin{align}
q_{\hat{c}^\dagger_{i+1} 
\hat{c}(0,1)_{i+k-1}}
&=q_{\hat{c}^\dagger(0,1)_{i+k-1} 
\hat{c}_{i+2k-3}},\\
q_{\hat{c}^\dagger_{i+1} 
\hat{c}(1,0)_{i+k-1}}
&=q_{\hat{c}^\dagger(1,0)_{i+k-1} 
\hat{c}_{i+2k-3}}.
\end{align}
The third case, where $(x_1,y_1)=(0,1)$ or $(1,0)$ and $(x_2,y_2)=(0,0)$, gives silimar two relations after relabeling the lattice sites.

An analogous argument applies to the case $(C_1,C_2) = (\hat{c}^\dagger,\hat{c})$. Thus, in addition to the two relations above, we also obtain
\begin{align}
q_{\hat{c}_{i+1} 
\hat{c}^\dagger(0,1)_{i+k-1}}
&=q_{\hat{c}(0,1)_{i+k-1} 
\hat{c}^\dagger_{i+2k-3}},\\
q_{\hat{c}_{i+1} 
\hat{c}^\dagger(1,0)_{i+k-1}}
&=q_{\hat{c}(1,0)_{i+k-1} 
\hat{c}^\dagger_{i+2k-3}},\\
q_{\hat{c}_i\hat{c}^\dagger_{i+k-2}}&=q_{\hat{c}_{i+1}\hat{c}^\dagger_{i+k-1}}.
\end{align}
All remaining coefficients of $\mathbf{A}^{k-1}_i=C_1(x_1,y_1)_i C_2(x_2,y_2)_{i+k-2}$ vanish.
\end{proof}

\subsubsection{Inputs of type~\ref{enum:input_3}}
Next, we treat $\mathbf{A}^k_i$ of type~\ref{enum:input_3}.
The conclusion of step~2 analysis for this type is summarized in the following Lemma:
\begin{lemma}[Step 2 analysis for type~\ref{enum:input_3}]
Assume $\hat{Q}^k$ is a $k$-local conserved quantity of the one-dimensional Holstein model~\eqref{Eq.1D_H_model} with $t\neq 0$, $g\neq 0$, and $\omega \neq0$.
$q_{\mathbf{A}^k_i}=0$ holds for any $\mathbf{A}^k_i$ of type~\ref{enum:input_3}.
\label{lem:step2_3}
\end{lemma}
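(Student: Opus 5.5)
The plan is to exploit the two Hamiltonian terms that act on a purely bosonic end site: the coupling $g\hat n_j(\hat b_j^\dagger+\hat b_j)$ and the phonon term $\omega\hat b^\dagger_j\hat b_j$. A type~\ref{enum:input_3} input has the form $\mathbf{A}^k_i=I(x_1,y_1)_i\,(\prod_{j=2}^{k-1}e^j_{i+j-1})\,I(x_k,y_k)_{i+k-1}$ with $(x_1,y_1),(x_k,y_k)\neq(0,0)$. No hopping term can change a bosonic end site into anything else without enlarging the support. The strategy is therefore to show that the left-end bosonic factor of any surviving input must commute with both $\hat b_i^\dagger+\hat b_i$ and $\hat b^\dagger_i\hat b_i$, which forces it to be a multiple of the identity.

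\textbf{First: the coupling term.} I fix the middle factors $e^2,\dots,e^k$ and the total degree $d=x_1+y_1\geq1$. I then look at the $k$-support outputs
\begin{equation*}
\mathbf{B}^k_i=\hat n(a,d-1-a)_i\big(\textstyle\prod_{j=2}^{k}e^j_{i+j-1}\big),\qquad a=0,\dots,d-1 .
\end{equation*}
By Eq.~\eqref{commutator_boson}, $[(\hat b^\dagger)^x\hat b^y,\hat b^\dagger]=y(\hat b^\dagger)^x\hat b^{y-1}$ and $[(\hat b^\dagger)^x\hat b^y,\hat b]=-x(\hat b^\dagger)^{x-1}\hat b^y$. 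Hence this output receives contributions from the inputs with left end $I(a,d-a)$ and $I(a+1,d-a-1)$, with coefficients $g(d-a)$ and $-g(a+1)$.

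Next I must check that no other input reaches these outputs. Any $k$-support input with $\hat n$, $\hat c$ or $\hat c^\dagger$ at the left end that could feed in has already been killed: by Lemma~\ref{lem:step1_2}, or by Lemma~\ref{lem:step1_1}, since the right end of $\mathbf{B}$ is bosonic. A $(k-1)$-support input extended by hopping produces $\hat c$ or $\hat c^\dagger$ at the new end, never $\hat n$. A hopping term acting inside the support of a $k$-support input with fermionic left end cannot leave $\hat n$ there with bosonic right end either.

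These equations are equivalent to saying that the operator $\Phi_i:=\sum_{(x,y)}q_{I(x,y)_i\cdots}(\hat b^\dagger_i)^x\hat b_i^y$, with the rest fixed, commutes with $\hat X_i:=\hat b_i+\hat b_i^\dagger$. Equivalently, $\Phi_i=\varphi(\hat X_i)$ for some polynomial $\varphi$; the kernel of the recursion is one-dimensional for each degree.

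\textbf{Second: the phonon term.} I use the outputs whose left end is still purely bosonic. Here I change basis at site $i$ to $\hat X_i^a\hat P_i^b$ with $\hat P_i=i(\hat b^\dagger_i-\hat b_i)$, which is a legitimate basis of the single-site bosonic algebra. Commutators with hopping terms and with terms supported away from site $i$ leave the left factor $\varphi(\hat X_i)$ untouched. The coupling term at site $i$ produces $\hat n_i$ and is thus a different output. The phonon term at site $i$ gives
\begin{equation*}
[\varphi(\hat X_i),\omega\hat b^\dagger_i\hat b_i]\propto\omega\,\varphi'(\hat X_i)\hat P_i+(\text{lower order in }\hat P_i\text{ after symmetrization}).
\end{equation*}
Thus the component of $[\hat Q^k,\hat H]$ whose left factor is linear in $\hat P_i$, projected onto a fixed remainder string, equals $\omega\,\varphi'(\hat X_i)\hat P_i\otimes(\text{rest})$. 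Since $\omega\neq0$, this forces $\varphi'=0$, so $\varphi$ is constant. That contradicts the assumption that the left end is not $I(0,0)$. Hence all type~\ref{enum:input_3} coefficients vanish; the right end could be treated symmetrically, but it is not needed.

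\textbf{Main obstacle.} The delicate point is the uniqueness claims in both substeps: verifying that no $(k-1)$-support input and no differently-ended $k$-support input produces the same outputs. This relies on Lemmas~\ref{lem:step1_1} and~\ref{lem:step1_2}, and on the fermion-parity factorization $[A_iB_j,h_i]=[A_i,h_i]B_j$. A second point needing care is making the ``linear in $\hat P_i$'' projection rigorous. I would do this by grading the single-site algebra by $\hat P$-degree in the Weyl-symmetrized basis and checking that only the phonon term raises it.
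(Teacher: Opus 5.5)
Your argument is correct, but it takes a different route from the paper's.

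The paper first removes every fermionic factor from the interior of a type~iii input. It peels inward from the right end with hopping terms, using that type~ii inputs already vanish. It then applies the coupling term at \emph{every} site to get $q_{\cdots I(x_j,y_j)\cdots}\propto q_{\cdots I(x_j-1,y_j+1)\cdots}$, which reduces each coefficient to that of $\prod_j I(0,x_j+y_j)_{i+j-1}$. That coefficient is killed because the operator is an eigenoperator of $[\,\cdot\,,\hat{H}_{\mathrm{pho}}]$ with eigenvalue $\omega\sum_j(x_j+y_j)>0$.

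You never clean the interior. For each fixed remainder $R$ you work only at the left endpoint. There the coupling forces $\Phi^R_i$ into the commutant of $\hat{X}_i$. The site-$i$ phonon term is then isolated, because every other contribution to an output with fermionic identity at site $i$ leaves the site-$i$ factor a function of $\hat{X}_i$ alone. Your uniqueness checks hold. They use only Lemmas~1 and~2 together with $k\geq 3$, which ensures the hopping on $(i,i+1)$ cannot also alter the right end.

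What each route buys:
\begin{itemize}
\item Your route exposes the mechanism: the joint commutant of $\hat{X}_i$ and $\hat{b}_i^\dagger\hat{b}_i$ is trivial. It also needs fewer uniqueness checks.
\item The paper must clean the interior for two reasons. Its coupling comparisons at interior sites need to have no competitors. And the reduced string needs a strictly positive phonon eigenvalue, whereas phonon terms on a general remainder would add $\omega\sum_m(y_m-x_m)$, which could cancel it.
\item In exchange, the paper never leaves the normal-ordered basis.
\end{itemize}

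Your worry about making the $\hat{P}$-projection rigorous also disappears in that basis. Write $\Phi^R_i=\sum_{d\geq1}c^R_d\,{:}\hat{X}_i^d{:}$ with normal-ordered powers. The coefficient of $(\hat{b}_i^\dagger)^a\hat{b}_i^{d-a}\otimes R$ in $[\hat{Q}^k,\hat{H}]$ is then $\binom{d}{a}\bigl[\omega(d-2a)c^R_d+\Lambda^R_d\bigr]$, where $\Lambda^R_d$ does not depend on $a$. Comparing $a=0$ with $a=1$ gives $2\omega c^R_d=0$.

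One small misstatement: a hopping on $(i,i+1)$ does change a bosonic end site without enlarging the support. It puts $\hat{c}$ or $\hat{c}^\dagger$ there, and that is why those outputs are distinct from the ones you use.
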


\begin{proof}
First, for
$\mathbf{A}^k_i=I(x_1,y_1)_i\big( \prod^{k-1}_{j=2} A^j(x_j,y_j)_{i+j-1} \big)I(x_k,y_k)_{i+k-1}$
with $(x_1,y_1),(x_k,y_k) \neq (0,0)$, we show that the coefficient vanishes
unless $A^2=\cdots=A^{k-1}=I$. 
As an example, let us first consider the case $k=3$. One representative choice is
$\mathbf{A}^{k=3}_i=I(x_1,y_1)_i n(x_2,y_2)_{i+1}I(x_3,y_3)_{i+2}$.
Considering $\mathbf{B}_i^{k=3} = I(x_1,y_1)_i \hat{c}^\dagger(x_{2},y_{2})_{i+1} \hat{c}(x_3,y_3)_{i+2}$, 
\begin{align}
&[I(x_1,y_1)_i \hat{n}(x_{2},y_{2})_{i+1} I(x_3,y_3)_{i+2},\hat{c}^\dagger_{i+1} \hat{c}_{i+2}]\nonumber \\
=&I(x_1,y_1)_i  \hat{c}^\dagger(x_2,y_2)_{i+1} \hat{c}(x_3,y_3)_{i+2}
\end{align}
contributes to the coefficient. Taking into account bosonic degrees of freedom at the ends,
$\mathbf{A}^{k-1=2}$ does not contribute. Type~\ref{enum:input_2} inputs such as
$\mathbf{A}^{k=3}_i=I(x_1,y_1)_i  \hat{c}^\dagger(x_{2},y_{2})_{i+1} \hat{c}(x_3,y_3+1)_{i+2}$
may also contribute through
\begin{align}
&[I(x_1,y_1)_i  \hat{c}^\dagger(x_{2},y_{2})_{i+1} \hat{c}(x_3,y_3+1)_{i+2},\hat{n}(1,0)_{i+2}]\nonumber \\
=&(y_3+1)I(x_1,y_1)_i  \hat{c}^\dagger(x_{2},y_{2})_{i+1} \hat{c}(x_3,y_3)_{i+2}\nonumber \\
&+I(x_1,y_1)_i  \hat{c}^\dagger(x_{2},y_{2})_{i+1} \hat{c}(x_3+1,y_{3}+1)_{i+2},
\end{align}
but this type $\mathbf{A}^{k=3}$ has zero coefficients by the step 1 analysis. Here we used
$[\hat{c}_j(\hat{b}^\dagger)^{x_k}_j\hat{b}^{y_k+1}_j,\hat{n}_j\hat{b}^\dagger_j]=\hat{c}_j(\hat{b}^\dagger)^{x_k}_j\hat{b}^{y_k+1}_j\hat{b}^\dagger_j =(y_k+1)\hat{c}(x_k,y_k)_j+\hat{c}(x_k+1,y_k+1)_j$. 
Therefore, the coefficient of
$I(x_1,y_1)_{i}  \hat{n}(x_{2},y_{2})_{i+1} I(x_3,y_3)_{i+2}$
is zero.

\begin{widetext}
In general, for
$\mathbf{A}^k_i=I(x_1,y_1)_i\big( \prod^{k-1}_{j=2} A^j(x_j,y_j)_{i+j-1} \big)I(x_k,y_k)_{i+k-1}$ with $A^{k-1}\neq I$, by considering 
\begin{align}
\begin{cases}
[\mathbf{A}^k_i,\hat{c}^\dagger_{i+k-2} \hat{c}_{i+k-1}],
& (A^{k-1}=\hat{c},\hat{n});\\
[\mathbf{A}^k_i,\hat{c}^\dagger_{i+k-1} \hat{c}_{i+k-2}],
& (A^{k-1}=\hat{c}^\dagger),
\end{cases}
\end{align}
one can show that the coefficient is zero. Hence, it is sufficient to consider the case $A^{k-1}=I$.

Next, we show that the coefficient of $\mathbf{A}^k_i$ is zero unless $A^{k-2}=I$ when $k\geq 4$. 
Take $\mathbf{A}^{k=4}_i=I(x_1,y_1)_i \hat{n}(x_2,y_2)I(x_3,y_3)I(x_4,y_4)$ as an example. Considering 
$\mathbf{B}_i^{k=4} = I(x_1,y_1)_i  \hat{c}^\dagger(x_{2},y_{2})_{i+1}\hat{c}(x_{3},y_{3})_{i+2} I(x_4,y_4)_{i+3}$,
\begin{align}
&[I(x_1,y_1)_i  \hat{n}(x_{2},y_{2})_{i+1} I(x_{3},y_{3})_{i+2} I(x_4,y_4)_{i+3},\hat{c}^\dagger_{i+1} \hat{c}_{i+2}] \nonumber\\
=&I(x_1,y_1)_i \hat{c}^\dagger(x_{2},y_{2})_{i+1}\hat{c}(x_{3},y_{3})_{i+2} I(x_4,y_4)_{i+3}
\end{align}
contributes to the coefficient. By the preceding analysis, it is the only input whose
coefficient is not already known to vanish, and therefore
$r_{I(x_1,y_1)_i \hat{c}^\dagger(x_{2},y_{2})_{i+1}\hat{c}(x_{3},y_{3})_{i+2} I(x_4,y_4)_{i+3}}
=t q_{I(x_1,y_1)_i  \hat{n}(x_{2},y_{2})_{i+1} I(x_{3},y_{3})_{i+2} I(x_4,y_4)_{i+3}}=0$
holds. Similarly, when $A^{k-2}\neq I$, the coefficient of
\begin{align*}
\mathbf{A}^k_i=I(x_1,y_1)_i\big( \prod^{k-2}_{j=2} A^j(x_j,y_j)_{i+j-1} \big)I(x_{k-1},y_{k-1})_{i+k-2}I(x_k,y_k)_{i+k-1}\end{align*}
can be shown to zero, in general.

Repeating this argument, for any $k$, all type~\ref{enum:input_3} operators
$\mathbf{A}^k$ have zero coefficients unless
$(A_1=)A_2=\cdots=A_{k-1}(=A_k)=I$.

Now, we treat 
$\mathbf{A}^k_i=\prod^{k}_{j=1} I(x_j,y_j)_{i+j-1}$.
First, we suppose $x_j \neq 0$ with $1\leq j\leq k$. In this case, commutators that contribute to $\mathbf{B}^k_i=\big(\prod^{j}_{l=1} I(x_l,y_l)_{i+l-1}\big)\hat{n}(x_j-1,y_j+1)_{i+j-1}\big(\prod^{k}_{l=j+1} I(x_l,y_l)_{i+l-1}\big)$ are written as
\begin{align}
&[\prod^{k}_{l=1} I(x_l,y_l)_{i+l-1} ,\hat{n}(0,1)_{i+j-1}]\nonumber \\
=&-x_j\big(\prod^{j}_{l=1} I(x_l,y_l)_{i+l-1}\big)\hat{n}(x_j-1,y_j)_{i+j-1}\big(\prod^{k}_{l=j+1} I(x_l,y_l)_{i+l-1}\big),
\\
&[\big(\prod^{j}_{l=1} I(x_l,y_l)_{i+l-1}\big)\hat{n}(x_j-1,y_j+1)_{i+j-1}\big(\prod^{k}_{l=j+1} I(x_l,y_l)_{i+l-1}\big) ,\hat{n}(1,0)_{i+j-1}] \nonumber\\
=&(y_j+1)\big(\prod^{j}_{l=1} I(x_l,y_l)_{i+l-1}\big)\hat{n}(x_j-1,y_j)_{i+j-1}\big(\prod^{k}_{l=j+1} I(x_l,y_l)_{i+l-1}\big).
\end{align}
\end{widetext}
These are all candidates that may carry nonzero coefficients. Therefore, 
$q_{I(x_1,y_1)_i\cdots I(x_j,y_j)_{i+j-1}\cdots I(x_k,y_k)_{i+k-1}}\propto q_{I(x_1,y_1)_i\cdots I(x_j-1,y_j+1)_{i+j-1}\cdots I(x_k,y_k)_{i+k-1}}$ holds.

By repeating this argument, we can show that for any
$\mathbf{A}^k_i=\prod^{k}_{j=1} I(x_j,y_j)_{i+j-1}$, 
$\mathbf{A'}^k_i=\prod^{k}_{j=1} I(0,x_j+y_j)_{i+j-1}$ has a proportional coefficient. 
Also, commutators that contribute to $\mathbf{B}^k=\mathbf{A'}^k_i=\prod^{k}_{j=1} I(0,x_j+y_j)_{i+j-1}$ are written as
\begin{widetext}
\begin{align}
    [\prod^{k}_{j'=1} I(0,x_{j'}+y_{j'})_{i+j'-1},I(1,1)_{i+j-1}]
    &=(x_j+y_j)\prod^{k}_{j'=1} I(0,x_{j'}+y_{j'})_{i+j'-1}
\end{align}
for any $j = 1,2,\cdots,k$ with $x_j+y_j \neq 0$. 
Hence,
\begin{align}
\omega(x_1+y_1+x_2+y_2+\cdots+x_k+y_k)q_{\prod^{k}_{j=1} I(0,x_j+y_j)_{i+j-1}}&=0
\end{align}
\end{widetext}
holds. 
By the end constraints, $x_1+y_1, x_k+y_k > 0$. Therefore, the coefficient of
$\prod^{k}_{j=1} I(0,x_j+y_j)_{i+j-1}$
vanishes, and so does the coefficient of the proportional operator
$\prod^{k}_{j=1} I(x_j,y_j)_{i+j-1}$.

Hence, all inputs of type~\ref{enum:input_3} have zero coefficients.
\end{proof}

\subsection{Proof step 3: Basic relations for products with width $k-1$.}
Finally, by analyzing inputs for $\mathbf{B}^{k-1}$, we use the step~2 constraints on $\mathbf{A}^{k-1}$ to eliminate the last remaining type~\ref{enum:input_1} candidates. More precisely, we prove that the coefficients of the related $k$-support inputs $\mathbf{A}^k_i=\hat{c}^\dagger_i\hat{c}_{i+k-1},\hat{c}_i\hat{c}^\dagger_{i+k-1}$ vanish for $3\leq k\leq L/2$. We discuss the cases $k=3$ and $4 \leq k \leq L/2$ separately, and in both cases conclude that all remaining type~\ref{enum:input_1} inputs have zero coefficients.

\subsubsection{Step 3 for $k=3$}
We first consider the case $k=3$. 

\begin{lemma}[Step 3 analysis for the case $k=3$]
Assume $\hat{Q}^{k=3}$ is a $3$-local conserved quantity of the one-dimensional Holstein model~\eqref{Eq.1D_H_model} with $t\neq 0$, $g\neq 0$, and $\omega \neq0$.
$q_{\mathbf{A}^{k=3}_i}=0$ holds for any $\mathbf{A}^{k=3}_i$ of type~\ref{enum:input_1}.
\label{lem:step3_k3}
\end{lemma}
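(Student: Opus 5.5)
The plan is to use the $2$-support output conditions $r_{\mathbf{B}^2}=0$ to eliminate the only type~\ref{enum:input_1} candidates that survive Lemmas~\ref{lem:step1_1} and~\ref{lem:step2_1}. For $k=3$ these are $\hat{c}^\dagger_i\hat{c}_{i+2}$ and $\hat{c}_i\hat{c}^\dagger_{i+2}$, and their coefficients are already tied to $2$-support coefficients. By Lemma~\ref{lem:step1_1} (applied to $k=3$), $q_{\hat{c}^\dagger_i\hat{c}_{i+2}}=:\beta$ is independent of $i$. By Lemma~\ref{lem:step2_1}, $\beta=\frac{t}{g}q_{\hat{c}^\dagger_{i+1}\hat{c}(0,1)_{i+2}}=\frac{t}{g}q_{\hat{c}^\dagger(0,1)_{i+2}\hat{c}_{i+3}}$. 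Lemma~\ref{lem:constraints}, with $k-1=2$, gives $q_{\hat{c}^\dagger_{i+1}\hat{c}(0,1)_{i+2}}=q_{\hat{c}^\dagger(0,1)_{i+2}\hat{c}_{i+3}}=:\alpha$, independent of $i$. The same lemma gives $q_{\hat{c}^\dagger_i\hat{c}_{i+1}}=:\gamma$, also independent of $i$. It therefore suffices to prove $\alpha=0$. The case $\hat{c}_i\hat{c}^\dagger_{i+2}$ will be identical after exchanging the roles of $\hat{c}$ and $\hat{c}^\dagger$, i.e.\ using the $(\hat{c},\hat{c}^\dagger)$ relations of Lemma~\ref{lem:constraints}.

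I will write down $r_{\mathbf{B}}=0$ for the two outputs $\mathbf{B}^2_i=\hat{c}^\dagger_i\hat{c}(0,1)_{i+1}$ and $\mathbf{B}^2_i=\hat{c}^\dagger(0,1)_i\hat{c}_{i+1}$, listing every input that can produce them.
\begin{itemize}
\item[(a)] The input equal to the output itself, through $\hat{H}_{\mathrm{pho}}$. Since $[\hat{b},\hat{b}^\dagger\hat{b}]=\hat{b}$, this contributes $\omega\alpha$ with the same sign in both equations.
\item[(b)] The input $\hat{c}^\dagger_i\hat{c}_{i+1}$, through $g\,\hat{n}_{i+1}(\hat{b}_{i+1}+\hat{b}^\dagger_{i+1})$ and $g\,\hat{n}_i(\hat{b}_i+\hat{b}^\dagger_i)$ respectively. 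It contributes $\pm g\gamma$ with opposite signs in the two equations, because of $[\hat{c},\hat{n}]=\hat{c}$ versus $[\hat{c}^\dagger,\hat{n}]=-\hat{c}^\dagger$.
\item[(c)] The $1$-support inputs $\hat{n}(0,1)_{i+1}$ and $\hat{n}(0,1)_i$, through the hopping term. Their contributions have the form $\mp t\,q_{\hat{n}(0,1)_j}$.
\item[(d)] The $3$-support inputs. Those of type~\ref{enum:input_2} and~\ref{enum:input_3} vanish by Lemmas~\ref{lem:step1_2} and~\ref{lem:step2_3}. The type~\ref{enum:input_1} survivor $\hat{c}^\dagger_{i}\hat{c}_{i+2}$ only produces bosonless outputs, since its commutator with the hopping term has no boson, so it cannot reach these outputs.
\item[(e)] Other $2$-support inputs. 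For example, $\hat{n}(0,1)_i\hat{n}_{i+1}$ commutes fermionically with the relevant hopping. Inputs with extra boson powers, such as $\hat{c}^\dagger(0,1)_i\hat{c}(1,1)_{i+1}$, are excluded by Lemma~\ref{lem:constraints} or produce different boson monomials.
\end{itemize}

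Schematically, the two conditions read
\begin{align}
\omega\alpha+g\gamma-t\,q_{\hat{n}(0,1)_{i+1}}&=0,\\
\omega\alpha-g\gamma+t\,q_{\hat{n}(0,1)_{i+1}}&=0,
\end{align}
where the second equation is the one for the output $\hat{c}^\dagger(0,1)_{i+1}\hat{c}_{i+2}$, shifted by one site. This shift makes the $1$-support coefficient $q_{\hat{n}(0,1)_{i+1}}$ appear in both equations. Adding them cancels both the unknown $1$-support coefficient and $\gamma$, which leaves $2\omega\alpha=0$. Since $\omega\neq0$, this gives $\alpha=0$, and hence $\beta=\frac{t}{g}\alpha=0$. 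The $cc$-type candidates already vanish by Lemma~\ref{lem:step2_1}, so every type~\ref{enum:input_1} coefficient is zero.

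I expect the main obstacle to be the exhaustive bookkeeping in the list above, in particular getting the relative signs right. The whole argument depends on the $\omega$ terms adding while the $g$ terms and the hopping terms cancel, so each sign must be checked explicitly. I will do this by computing each commutator directly from \eqref{commutator_c}--\eqref{commutator_boson}. For the completeness of the list, I will use the fact that the output has a single $\hat{b}$ and a single $\hat{c}^\dagger\hat{c}$ pair on adjacent sites. This restricts the inputs to those differing from it by exactly one Hamiltonian term, and I will check each such preimage against the vanishing results of Lemmas~\ref{lem:step1_1}--\ref{lem:step2_3}.
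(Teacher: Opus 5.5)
Your route is the paper's. You use the same two outputs and the same three contributions: the $\omega$ term, the $g$ term acting on $\hat{c}^\dagger_i\hat{c}_{i+1}$, and the $1$-support $\hat{n}(0,1)_{i+1}$ via hopping. You then add the two equations to cancel the $g$ and $t$ terms, exactly as the paper does. However, your exhaustiveness claim (e) is false, and the paper's list of commutators has the same omission. The $2$-support input $\hat{n}_iI(0,1)_{i+1}=\hat{n}_i\hat{b}_{i+1}$ feeds the first output through hopping on the same bond: by \eqref{commutator_nc}, $[\hat{n}_i\hat{b}_{i+1},\hat{c}^\dagger_i\hat{c}_{i+1}]=\hat{c}^\dagger_i\hat{c}_{i+1}\hat{b}_{i+1}$. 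Its mirror $I(0,1)_{i+1}\hat{n}_{i+2}$ feeds the second output with the opposite sign. The correct equations are therefore
\begin{align*}
\omega\alpha+g\gamma-t\,q_{\hat{n}(0,1)_{i+1}}+t\,q_{\hat{n}_iI(0,1)_{i+1}}&=0,\\
\omega\alpha-g\gamma+t\,q_{\hat{n}(0,1)_{i+1}}-t\,q_{I(0,1)_{i+1}\hat{n}_{i+2}}&=0.
\end{align*}
Their sum is $2\omega\alpha+t\,(q_{\hat{n}_iI(0,1)_{i+1}}-q_{I(0,1)_{i+1}\hat{n}_{i+2}})=0$, which does not force $\alpha=0$. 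None of the lemmas you invoke covers these coefficients. Lemma~\ref{lem:constraints} treats only $2$-support inputs with $\hat{c}$ or $\hat{c}^\dagger$ at both ends, and the other lemmas concern $3$-support inputs. Your planned check of each preimage against those lemmas would therefore find this one and fail to dispose of it.

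The repair is one more extension argument of the kind used in Lemma~\ref{lem:constraints}. The $3$-support output $\hat{c}^\dagger_{i-1}\hat{c}_iI(0,1)_{i+1}$ receives $-t\,q_{\hat{n}_iI(0,1)_{i+1}}$ from $[\hat{n}_i\hat{b}_{i+1},\hat{c}^\dagger_{i-1}\hat{c}_i]$. No other $2$-support input reaches it. Every $3$-support preimage is either of type~\ref{enum:input_2} or of type~\ref{enum:input_1} with a boson on an end, so it vanishes by Lemmas~\ref{lem:step1_1} and~\ref{lem:step1_2}. Hence $q_{\hat{n}_iI(0,1)_{i+1}}=0$. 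The mirror output $I(0,1)_{i+1}\hat{c}^\dagger_{i+2}\hat{c}_{i+3}$ gives $q_{I(0,1)_{i+1}\hat{n}_{i+2}}=0$ in the same way. With these, your sum does yield $2\omega\alpha=0$ and the rest of your argument goes through. The same patch is needed in the $\hat{c}_i\hat{c}^\dagger_{i+2}$ case, since $\hat{n}_iI(0,1)_{i+1}$ also feeds $\hat{c}_i\hat{c}^\dagger(0,1)_{i+1}$ via $\hat{c}^\dagger_{i+1}\hat{c}_i$. Your treatment of the boson-dressed preimages is correct. For instance, $\hat{c}^\dagger_i\hat{c}(0,2)_{i+1}$ and $\hat{c}^\dagger(1,0)_i\hat{c}(0,1)_{i+1}$ do reach the first output through the $g$ term after normal ordering, and Lemma~\ref{lem:constraints} kills them.
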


\begin{proof}
By Lemma \ref{lem:step2_1}, $3$-support input $\mathbf{A}^{k=3}$ of type~\ref{enum:input_1} has zero coefficients in most cases.
Below, we will prove that the remaining type~\ref{enum:input_1} candidates $\mathbf{A}^{k=3}_i=\hat{c}^\dagger_i \hat{c}_{i+2},\hat{c}_i \hat{c}^\dagger_{i+2}$ also have zero coefficients. 

For
$\mathbf{B}^{k-1=2}_{i+1}=\hat{c}^\dagger_{i+1}\hat{c}(0,1)_{i+2}$, the relevant commutators are
\begin{align}
[\hat{c}^\dagger_{i+1}\hat{c}(0,1)_{i+2},I(1,1)_{i+2}]&=\hat{c}^\dagger_{i+1}\hat{c}(0,1)_{i+2},\\
[\hat{c}^\dagger_{i+1}\hat{c}_{i+2},\hat{n}(0,1)_{i+2}]&=\hat{c}^\dagger_{i+1}\hat{c}(0,1)_{i+2},\\
[\hat{n}(0,1)_{i+2},\hat{c}^\dagger_{i+1}\hat{c}_{i+2}]&=-\hat{c}^\dagger_{i+1}\hat{c}(0,1)_{i+2},
\end{align}
which implies
\begin{align}
\omega q_{\hat{c}^\dagger_{i+1}\hat{c}(0,1)_{i+2}}
+g q_{\hat{c}^\dagger_{i+1}\hat{c}_{i+2}}
-t q_{\hat{n}(0,1)_{i+2}}&=0.
\end{align}
Similarly, considering 
$\mathbf{B}^{k-1=2}_{i+2}=\hat{c}^\dagger(0,1)_{i+2}\hat{c}_{i+3}$,
\begin{align}
\omega q_{\hat{c}^\dagger(0,1)_{i+2}\hat{c}_{i+3}}
-g q_{\hat{c}^\dagger_{i+2}\hat{c}_{i+3}}
+t q_{\hat{n}(0,1)_{i+2}}&=0
\end{align}
holds. 
Combining these equations with step~2 relation~\eqref{cdc_k-1} and $(k-1)$-support constraints~\eqref{constr:cdc_shift}, we obtain
\begin{align}
q_{\hat{c}^\dagger_{i+1}\hat{c}(0,1)_{i+2}}&=q_{\hat{c}^\dagger(0,1)_{i+2}\hat{c}_{i+3}}=0
\end{align}
and therefore, again from the step~2 relation~\eqref{cdc_1},
\begin{align}
q_{\hat{c}^\dagger_{i}\hat{c}_{i+2}}&=0
\end{align}
follows.

Similar reasoning for
$\mathbf{A}^{k=3}_i=\hat{c}_{i}\hat{c}^\dagger_{i+2}$, using coefficients such as
$\mathbf{A}^{k-1=2}_{i+1}=\hat{c}_{i+1}\hat{c}^\dagger(0,1)_{i+2}$, also yields zero
coefficients.
\end{proof}

\subsubsection{Step 3 for $4 \leq k \leq L/2$}
Next, consider the case $4 \leq k \leq L/2$. 

\begin{lemma}[Step 3 analysis for the case $4 \leq k \leq L/2$]
Assume $\hat{Q}^{k}$ is a $k$-local conserved quantity of the one-dimensional Holstein model~\eqref{Eq.1D_H_model} with $t\neq 0$, $g\neq 0$, and $\omega \neq 0$.
$q_{\mathbf{A}^{k}_i}=0$ holds for any $\mathbf{A}^{k}_i$ of type~\ref{enum:input_1}.
\label{lem:step3_kgeneral}
\end{lemma}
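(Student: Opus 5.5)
By Lemma~\ref{lem:step2_1}, the only type~\ref{enum:input_1} candidates left are $\hat{c}^\dagger_i\hat{c}_{i+k-1}$ and $\hat{c}_i\hat{c}^\dagger_{i+k-1}$. Their coefficients are tied by Eq.~\eqref{cdcb1} to the $(k-1)$-support coefficients $q_{\hat{c}^\dagger_{i+1}\hat{c}(0,1)_{i+k-1}}$, which equal $q_{\hat{c}^\dagger(0,1)_{i+k-1}\hat{c}_{i+2k-3}}$ by Lemma~\ref{lem:constraints}. The strategy mirrors the case $k=3$ (Lemma~\ref{lem:step3_k3}). We look for $(k-1)$-support outputs $\mathbf{B}^{k-1}$ whose conditions $r_{\mathbf{B}^{k-1}}=0$ involve these coefficients together with a $\omega$-term. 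We then combine the resulting equations with the translation relations \eqref{cdc_k-1} and \eqref{constr:cdc_shift} and conclude that $q_{\hat{c}^\dagger_{i+1}\hat{c}(0,1)_{i+k-1}}=0$. Eq.~\eqref{cdcb1} then gives $q_{\hat{c}^\dagger_i\hat{c}_{i+k-1}}=0$, and the $\hat{c}\hat{c}^\dagger$ case follows by the same argument with $\hat{c}\leftrightarrow\hat{c}^\dagger$.

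The first output to use is $\mathbf{B}^{k-1}_{i+1}=\hat{c}^\dagger_{i+1}\hat{c}(0,1)_{i+k-1}$. It receives contributions from three sources:
\begin{itemize}
\item $\omega\, q_{\hat{c}^\dagger_{i+1}\hat{c}(0,1)_{i+k-1}}$, via $I(1,1)_{i+k-1}$;
\item $g\, q_{\hat{c}^\dagger_{i+1}\hat{c}_{i+k-1}}$, via $\hat{n}(0,1)_{i+k-1}$;
\item hopping terms acting on $(k-2)$-support inputs, namely $\hat{c}^\dagger_{i+2}\cdots\hat{c}(0,1)_{i+k-1}$ and $\hat{c}^\dagger_{i+1}\cdots I(0,1)_{i+k-2}$-type operators.
\end{itemize}
Unlike the $k=3$ case, for $k\ge4$ the hopping contributions do not reduce to a single $\hat{n}(0,1)$ term. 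Instead they come from $(k-2)$-support inputs whose coefficients are a priori unknown. The symmetric output $\hat{c}^\dagger(0,1)_{i+k-1}\hat{c}_{i+2k-3}$ gives an analogous equation, with the opposite sign of the $g$-term (compare the $k=3$ case).

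Rather than try to control all $(k-2)$-support coefficients, I would use the chain structure. Consider the family of $(k-1)$-support outputs $\hat{c}^\dagger_{i+1+m}I(0,1)_{i+k-1}\hat{c}_{i+k-1+m}$ for $m=0,\dots,k-2$, which are the same objects that appear in \eqref{cdc_k-1}. For each $m$ we write the $r=0$ condition. The $\omega$-term contributes $\omega q$ of the corresponding chain element, which is the same common value $q_\ast$ by \eqref{cdc_k-1}. The $g$-term appears only at the two ends $m=0$ and $m=k-2$, with the $k$-support coefficients $q_{\hat{c}^\dagger_{i+1}\hat{c}_{i+k-1}}$ and $q_{\hat{c}^\dagger_{i+k-1}\hat{c}_{i+2k-3}}$ respectively. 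These are $(k-1)$-support inputs, and they are equal by \eqref{constr:cdc_shift} up to translation; I would check the index bookkeeping here carefully. The hopping contributions come from $(k-2)$-support inputs of the form $\hat{c}^\dagger_{a}\,I(0,1)_{i+k-1}\hat{c}_{b}$ with $b-a=k-3$ (plus endpoint variants). The key point is that each such input appears in exactly two adjacent chain equations with opposite signs, once through a left hop and once through a right hop. Summing all $k-1$ chain equations should therefore telescope these unknowns away. Inputs where $I(0,1)_{i+k-1}$ lies outside $[a,b]$ have support $k-1$ or more, and Lemma~\ref{lem:constraints} should already kill them or show they are not independent. The sum leaves
\begin{align}
(k-1)\,\omega\, q_\ast + g\bigl(q_{\hat{c}^\dagger_{i+1}\hat{c}_{i+k-1}}-q_{\hat{c}^\dagger_{i+k-1}\hat{c}_{i+2k-3}}\bigr)=0,
\end{align}
and hence $q_\ast=0$ because $\omega\ne0$.

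The main obstacle is to verify this telescoping. Specifically, I must show that every non-chain contribution to these $(k-1)$-support outputs either cancels in pairs or comes from an input already shown to vanish. Candidate sources are the $\hat{n}(1,0)$ and $\hat{n}(0,1)$ actions on intermediate-site inputs and the competing $(k-1)$-support inputs with phonon factors at other sites. This requires $k\le L/2$, so that the chain does not wrap around the periodic lattice. If the naive sum fails to cancel some term, I would instead weight the equations by alternating signs, adapted to the hopping sign convention as in \eqref{cc_C_recursion}. Alternatively, I would first use additional outputs such as $\hat{c}^\dagger_{a}\hat{n}(0,1)_{i+k-1}\hat{c}_b$ to express the stray $(k-2)$-support coefficients in terms of chain quantities.
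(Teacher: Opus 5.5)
Your proposal is correct and is essentially the paper's proof. The paper writes the $k-1$ conditions for the outputs $\hat{c}^\dagger_{i+1+m}I(0,1)_{i+k-1}\hat{c}_{i+k-1+m}$ ($m=0,\dots,k-2$), sets $a=\omega q_\ast$ via \eqref{cdc_k-1} and $b=g\,q_{\hat{c}^\dagger_{i+1}\hat{c}_{i+k-1}}=g\,q_{\hat{c}^\dagger_{i+k-1}\hat{c}_{i+2k-3}}$ via \eqref{constr:cdc_shift}, and the $(k-2)$-support hopping unknowns $c_2,\dots,c_{k-1}$ telescope exactly as you predict, with no alternating signs needed, leaving $(k-1)a=0$. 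One small correction: stray inputs such as $\hat{c}^\dagger_{i+1}\hat{c}_{i+k-2}I(0,1)_{i+k-1}$ are not literally covered by Lemma~\ref{lem:constraints}, because their right end is bosonic; they vanish instead by the left-extension uniqueness argument, since every competing $k$-support input is of type~\ref{enum:input_2} and is killed by Lemma~\ref{lem:step1_2}. The paper also leaves this point implicit.
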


\begin{proof}
By Lemma \ref{lem:step2_1}, $k$-support input $\mathbf{A}^k$ of type~\ref{enum:input_1} has zero coefficients in most cases.
Below, we will prove that the remaining type~\ref{enum:input_1} candidates $\mathbf{A}^k_i=\hat{c}^\dagger_i \hat{c}_{i+k-1},\hat{c}_i \hat{c}^\dagger_{i+k-1}$ also have zero coefficients. 

We first look at $k=4$ as a representative example.
Again, by proving that coefficients such as
$\mathbf{A}^{k-1=3}_{i+1}=\hat{c}^\dagger_{i+1}\hat{c}(0,1)_{i+3}$ vanish, we obtain the
vanishing of the related coefficient for
$\mathbf{A}^{k=4}_{i}=\hat{c}^\dagger_{i}\hat{c}_{i+3}$.

As in the $k=3$ case, by considering inputs for
$\mathbf{B}^{k-1=3}_{i+1} = \hat{c}^\dagger_{i+1}\hat{c}(0,1)_{i+3}$ and
$\mathbf{B}^{k-1=3}_{i+3} =\hat{c}^\dagger(0,1)_{i+3}\hat{c}_{i+5}$, we obtain
\begin{align}
\omega q_{\hat{c}^\dagger_{i+1}\hat{c}(0,1)_{i+3}}
+g q_{\hat{c}^\dagger_{i+1}\hat{c}_{i+3}}
-t q_{\hat{c}^\dagger_{i+2}\hat{c}(0,1)_{i+3}}=0,\\
\omega q_{\hat{c}^\dagger(0,1)_{i+3}\hat{c}_{i+5}}
-g q_{\hat{c}^\dagger_{i+3}\hat{c}_{i+5}}
+t q_{\hat{c}^\dagger(0,1)_{i+3}\hat{c}_{i+4}}=0.
\end{align}
Furthermore, commutators that generate $\mathbf{B}^{k-1=3}_{i+2}=\hat{c}^\dagger_{i+2}I(0,1)_{i+3}\hat{c}_{i+4}$ are written as
\begin{align}
[\hat{c}^\dagger_{i+2}\hat{c}(0,1)_{i+3},\hat{c}^\dagger_{i+3}\hat{c}_{i+4}]&=\hat{c}^\dagger_{i+2}I(0,1)_{i+3}\hat{c}_{i+4},\\
[\hat{c}^\dagger(0,1)_{i+3}\hat{c}_{i+4},\hat{c}^\dagger_{i+2}\hat{c}_{i+3}]&=-\hat{c}^\dagger_{i+2}I(0,1)_{i+3}\hat{c}_{i+4},\\
[\hat{c}^\dagger_{i+2}I(0,1)_{i+3}\hat{c}_{i+4},I(1,1)]&=\hat{c}^\dagger_{i+2}I(0,1)_{i+3}\hat{c}_{i+4},
\end{align}
which give the following relation:
\begin{align}
\omega q_{\hat{c}^\dagger_{i+2}I(0,1)_{i+3}\hat{c}_{i+4}}
+t q_{\hat{c}^\dagger_{i+2}\hat{c}(0,1)_{i+3}}
-t q_{\hat{c}^\dagger(0,1)_{i+3}\hat{c}_{i+4}}&=0.
\end{align}
Based on these relations together with the step~2 relation~\eqref{cdc_k-1} and $(k-1)$-support constraints~\eqref{constr:cdc_shift}, we define
$a \coloneq \omega q_{\hat{c}^\dagger_{i+1}\hat{c}(0,1)_{i+3}}
= \omega q_{\hat{c}^\dagger(0,1)_{i+3}\hat{c}_{i+5}}
=\omega q_{\hat{c}^\dagger_{i+2}I(0,1)_{i+3}\hat{c}_{i+4}}$, $
b \coloneq g q_{\hat{c}^\dagger_{i+1}\hat{c}_{i+3}}
=g q_{\hat{c}^\dagger_{i+3}\hat{c}_{i+5}}$, $c_2 \coloneq t q_{\hat{c}^\dagger_{i+2}\hat{c}(0,1)_{i+3}}$,
$c_3 \coloneq t q_{\hat{c}^\dagger(0,1)_{i+3}\hat{c}_{i+4}}$. Then,
\begin{equation}
\renewcommand{\arraystretch}{1.2}
\begin{array}{rrrrr}
  a & +  b & - c_2 &   & =0 \\
  a &      & +  c_2 & -  c_3& = 0 \\
  a & -  b &        & +  c_3& = 0 \\ \hline
  3a&    &      &       & =0
\end{array}
\end{equation}
holds. Thus $a=0$, and hence
\begin{align}
    q_{\hat{c}^\dagger_{i}\hat{c}_{i+3}}=0.
\end{align}

For $k\geq 5$, a similar strategy applies: proving zero coefficients for terms such
as $\mathbf{A}^{k-1}_{i+1}=\hat{c}^\dagger_{i+1}\hat{c}(0,1)_{i+k-1}$ implies the
vanishing of the related coefficient of
$\mathbf{A}^{k}_{i}=\hat{c}^\dagger_{i}\hat{c}_{i+k-1}$.

In general, by analyzing inputs for
$\mathbf{B}^{k-1}_{i+1} = \hat{c}^\dagger_{i+1}\hat{c}(0,1)_{i+k-1}$ and
$\mathbf{B}^{k-1}_{i+k-1} =\hat{c}^\dagger(0,1)_{i+k-1}\hat{c}_{i+2k-3}$, we obtain
\begin{widetext}
\begin{align}
\omega q_{\hat{c}^\dagger_{i+1}\hat{c}(0,1)_{i+k-1}}
+g q_{\hat{c}^\dagger_{i+1}\hat{c}_{i+k-1}}
-t q_{\hat{c}^\dagger_{i+2}\hat{c}(0,1)_{i+k-1}}&=0,
\end{align}
\begin{align}
\omega q_{\hat{c}^\dagger(0,1)_{i+k-1}\hat{c}_{i+2k-3}}
-g q_{\hat{c}^\dagger_{i+k-1}\hat{c}_{i+2k-3}}
+t q_{\hat{c}^\dagger(0,1)_{i+k-1}\hat{c}_{i+2k-4}}&=0.
\end{align}
Moreover, for $k\geq 5$, considering commutators that contribute 
$\mathbf{B}^{k-1}=\hat{c}^\dagger_{i+2}I(0,1)_{i+k-1}\hat{c}_{i+k},\hat{c}^\dagger_{i+3}I(0,1)_{i+k-1}\hat{c}_{i+k+1},\ldots,\hat{c}^\dagger_{i+k-2}I(0,1)_{i+k-1}\hat{c}_{i+2k-4}$
gives the following relations:
\begin{align}
\omega q_{\hat{c}^\dagger_{i+2}I(0,1)_{i+k-1}\hat{c}_{i+k}}
+t q_{\hat{c}^\dagger_{i+2}\hat{c}(0,1)_{i+k-1}}
-t q_{\hat{c}^\dagger_{i+3}I(0,1)_{i+k-1}\hat{c}_{i+k}}&=0,
\end{align}
\begin{align}
\omega q_{\hat{c}^\dagger_{i+3}I(0,1)_{i+k-1}\hat{c}_{i+k+1}}
+t q_{\hat{c}^\dagger_{i+3}I(0,1)_{i+k-1}\hat{c}_{i+k}}
-t q_{\hat{c}^\dagger_{i+4}I(0,1)_{i+k-1}\hat{c}_{i+k+1}}&=0,
\end{align}
\begin{align}
    \vdots \nonumber
\end{align}
\begin{align}
\omega q_{\hat{c}^\dagger_{i+k-2}I(0,1)_{i+k-1}\hat{c}_{i+2k-4}}
+t q_{\hat{c}^\dagger_{i+k-2}I(0,1)_{i+k-1}\hat{c}_{i+2k-5}}
-t q_{\hat{c}^\dagger_{i+k-1}I(0,1)_{i+k-1}\hat{c}_{i+2k-4}}&=0.
\end{align}
\end{widetext}
Again, we define
$a \coloneq \omega q_{\hat{c}^\dagger_{i+1}\hat{c}(0,1)_{i+k-1}}
=\omega q_{\hat{c}^\dagger_{i+2}I(0,1)_{i+k-1}\hat{c}_{i+k}}=\dots
= \omega q_{\hat{c}^\dagger(0,1)_{i+k-1}\hat{c}_{i+2k-3}}
$, $
b \coloneq g q_{\hat{c}^\dagger_{i+1}\hat{c}_{i+k-1}}
=g q_{\hat{c}^\dagger_{i+k-1}\hat{c}_{i+2k-3}}$, $c_2 \coloneq t q_{\hat{c}^\dagger_{i+2}\hat{c}(0,1)_{i+k-1}}$,
$c_3 \coloneq t q_{\hat{c}^\dagger_{i+3}I(0,1)_{i+k-1}\hat{c}_{i+k}}$,...,$c_{k-1} \coloneq t q_{\hat{c}^\dagger(0,1)_{i+k-1}\hat{c}_{i+2k-4}}$.
Then
\begin{equation}
\renewcommand{\arraystretch}{1.2}
\begin{array}{rrrrrrrr}
  a & +b & -c_2 &  &  &   &     & =  0 \\
  a &   &  +c_2 & -c_3 &  &  &  & =  0 \\
  a &   &   & +c_3  &  -c_4   &  &  &    = 0 \\
    &   & \vdots &   &     &   &     &       \\
  a &   &   &  &  & +c_{k-2} & -c_{k-1} & =  0 \\
  a & -b &  &   &           &  & +c_{k-1} & =  0 \\ \hline
  (k-1)a &   &   &   &           &   &  & =  0
\end{array}
\end{equation}
holds. Thus $a=0$, and hence
\begin{align}
    q_{\hat{c}^\dagger_{i}\hat{c}_{i+k-1}}=0.
\end{align}

By an analogous argument, the coefficient of
$\mathbf{A}^k_i=\hat{c}_{i}\hat{c}^\dagger_{i+k-1}$ also vanishes.
Therefore, all inputs of type~\ref{enum:input_1} have zero coefficients.
\end{proof}

Combining Lemma~\ref{lem:step1_2}, \ref{lem:step2_3}, \ref{lem:step3_k3}, and \ref{lem:step3_kgeneral}, we obtain Theorem~\ref{nontrivial}

\section{Short support conserved quantities}\label{sec:short}

We now examine trivial conserved quantities, i.e., $k$-local conserved quantities
with $k=1,2$. Obvious examples are the Hamiltonian itself $\hat{H}$, and the total
fermion number
$\hat{N}\coloneq \sum_i\hat{n}_i=\sum_i\hat{c}^\dagger_i\hat{c}_i$. As shown below, these exhaust
all independent trivial conserved quantities.

\begin{theorem}
    In the one-dimensional Holstein model~\eqref{Eq.1D_H_model} with
    $t\neq 0$, $g\neq 0$, and $\omega \neq 0$, the only independent
    $k$-local conserved quantities for $1\leq k\leq 2$ are
    the Hamiltonian $\hat{H}$ and the 
total fermion number $\hat{N}$.
\label{trivial}
\end{theorem}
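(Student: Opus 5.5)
We treat $k=1$ and $k=2$ separately, reusing the three-step strategy of Sec.~\ref{sec:proof} with the same basis and the same classification of $k$-support inputs into types~\ref{enum:input_1}--\ref{enum:input_3}. The difference is that chains of relations are now allowed to close up into the known conserved quantities instead of forcing zero.

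\textbf{Case $k=1$.} Write $\hat{Q}^1=\sum_i\sum q_{f(x,y)_i}f(x,y)_i$.
\begin{itemize}[leftmargin=*]
\item Commuting with the hopping terms produces $2$-support outputs. Requiring these to vanish kills every single-site operator that is not of the form $I(x,y)$ or $\hat{n}(0,0)$; this is the argument of Lemmas~\ref{lem:step1_1} and~\ref{lem:step1_2} with $k=1$. It also forces the coefficient of $\hat{n}_i$ to be independent of $i$.
\item For the purely bosonic terms $I(x,y)_i$, use $\hat{n}_i(\hat{b}_i+\hat{b}^\dagger_i)$ and $\hat{b}^\dagger_i\hat{b}_i$ exactly as at the end of the proof of Lemma~\ref{lem:step2_3}. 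The $\omega$-term gives $\omega(y-x)$ for $I(x,y)$ itself. The $g$-term gives outputs $\hat{n}(x\mp1,\cdot)$ which cannot be cancelled, except those coming from $\hat{n}_i$, whose commutator with the Hamiltonian has no on-site bosonic part.
\item Hence all $q_{I(x,y)}$ with $(x,y)\neq(0,0)$ vanish, and $\hat{Q}^1\propto\hat{N}$ up to identity.
\end{itemize}

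\textbf{Case $k=2$.}
\begin{itemize}[leftmargin=*]
\item Step 1 on $3$-support outputs is verbatim: Lemma~\ref{lem:step1_1} and~\ref{lem:step1_2} still hold for $k=2$. The surviving $2$-support inputs are $\hat{c}^\dagger_i\hat{c}_{i+1}$, $\hat{c}_i\hat{c}^\dagger_{i+1}$, $\hat{c}_i\hat{c}_{i+1}$, $\hat{c}^\dagger_i\hat{c}^\dagger_{i+1}$ and type~\ref{enum:input_3} products $I(x_1,y_1)_iI(x_2,y_2)_{i+1}$. Their translation relations are $q_{\hat{c}^\dagger_i\hat{c}_{i+1}}$ constant and $q_{\hat{c}_i\hat{c}_{i+1}}$ alternating in sign.
\item The type~\ref{enum:input_3} products die by the $\omega$/$g$ argument of Lemma~\ref{lem:step2_3}, since both ends are bosonic and nontrivial.
\item For the $cc$ type, the output $\hat{c}_i\hat{c}(0,1)_{i+1}$ receives $g\,q_{\hat{c}_i\hat{c}_{i+1}}$ from the interaction. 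Its only other sources are $1$-support inputs of the form $\hat{c}(\cdot)$, which were killed at step 1. Hence $q_{cc}=0$, and likewise for $c^\dagger c^\dagger$.
\item For the $c^\dagger c$ type, set $q_{\hat{c}^\dagger_i\hat{c}_{i+1}}=\alpha$ and $q_{\hat{c}_i\hat{c}^\dagger_{i+1}}=\beta$. Hermiticity is not assumed, so both are free for now. The $2$-support outputs $\hat{c}^\dagger_i\hat{c}(0,1)_{i+1}$ and $\hat{c}^\dagger(0,1)_{i}\hat{c}_{i+1}$ relate $\alpha$ to $q_{\hat{n}(0,1)_i}$ and $q_{\hat{n}(1,0)_i}$ via $g\alpha=t\,q_{\hat{n}(0,1)}$-type equations. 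These are the $k=2$ analogue of Lemma~\ref{lem:step2_1}, in which the $(k-1)$-support partner becomes the on-site $\hat{n}(0,1)$.
\item Next, the $1$-support outputs $\hat{n}(0,1)_i$ and $I(0,1)_i$ involve $\omega\,q_{\hat{n}(0,1)}$, $g\,q_{\hat{n}_i}$ and $\omega\,q_{I(0,1)}$. Solving these, together with the $k=1$-type analysis applied to the remaining $1$-support part, gives that all coefficients are fixed by $\alpha$, $\beta$ and $q_{\hat{n}}$.
\item The relations should force $\alpha=-\beta$ up to the $\hat{N}$ ambiguity, since $\hat{c}_i\hat{c}^\dagger_{i+1}=-\hat{c}^\dagger_{i+1}\hat{c}_i$. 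The residual conserved combination then matches $\hat{H}/t$ term by term: hopping with coefficient $t$, $\hat{n}(\hat{b}+\hat{b}^\dagger)$ with $g$, and $\hat{b}^\dagger\hat{b}$ with $\omega$. The only remaining freedom is the $\hat{N}$ coefficient.
\end{itemize}

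\textbf{Expected obstacle.} The main difficulty is this last bookkeeping in the $k=2$ case. Unlike Step~3 for $k\ge3$, the linear system must admit a one-parameter family of solutions (namely $\hat{H}$) rather than only zero. One must also check that no hopping term with a phonon dressing, such as $\hat{c}^\dagger_i\hat{c}(0,1)_{i+1}$, survives with a coefficient independent of $\alpha$. For this I would reuse the three-equation elimination of Lemma~\ref{lem:step3_k3}, now with the shift relation \eqref{constr:cdc_shift} replaced by the on-site $\hat{n}(0,1)$ equations. Checking that the determinant is nonzero except in the $\hat{H}$ direction is exactly where $t,g,\omega\neq0$ enters.
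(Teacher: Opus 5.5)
Your overall architecture matches the paper: step-1 endpoint reduction, $cc$ killed by the $g$-term, hopping coefficients tied to the on-site $\hat n(0,1)$, type~iii killed by the $\omega$/$g$ argument. However, the $k=1$ case has a genuine hole.

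\textbf{The $k=1$ case.} Your first bullet claims that the hopping terms kill every single-site operator other than $I(x,y)$ and $\hat n(0,0)$. This is false for $\hat c_i$ and $\hat c^\dagger_i$. The commutator $[\hat c_i,\hat H_{\mathrm{hop}}]=t(\hat c_{i-1}+\hat c_{i+1})$ is again 1-support, and the output $\hat c_j$ is fed by both $\hat c_{j-1}$ and $\hat c_{j+1}$. Hopping therefore only gives $q_{\hat c_{j-1}}+q_{\hat c_{j+1}}=0$. No argument using only hopping and the $\omega$-term (which commutes with $\hat c_i$) can do better: at $g=0$ the operator $\sum_j\cos(\pi j/2)\,\hat c_j$ is conserved whenever $4\mid L$.

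The missing step is the paper's. The output $\hat c(0,1)_i$ is produced only by $[\hat c_i,g\hat n_i\hat b_i]$, since the dressed $\hat c(x,y)_i$ were already removed by hopping. Hence $q_{\hat c_i}=0$, and likewise $q_{\hat c^\dagger_i}=0$; this is exactly where $g\neq0$ enters at $k=1$. The same misconception appears in your $cc$ bullet, where it is harmless: by fermion-number counting, no 1-support input can produce $\hat c_i\hat c(0,1)_{i+1}$ at all.

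Your bosonic bullet is also misstated. The $g$-term outputs \emph{can} cancel among $I(x,y)$ with equal $x+y$; for example, $(\hat b_i+\hat b_i^\dagger)^2$ commutes with the $g$-term. It becomes correct if you make the ordering explicit:
\begin{itemize}
\item The output $I(x,y)_i$ arises only from the $\omega$-term, with coefficient $\omega(y-x)$, so all $x\neq y$ die.
\item For $I(x,x)$, the output $\hat n(x-1,x)_i$ then has no surviving partner.
\end{itemize}
That is a valid and slightly shorter alternative to the paper's proportionality chain down to $I(0,x+y)$.

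\textbf{The $k=2$ case.} The crucial relation $\alpha=-\beta$ is asserted (``should force \ldots\ up to the $\hat N$ ambiguity'') rather than derived. Moreover, $\hat N$ is 1-support, so it cannot affect 2-support coefficients. The relation follows in one line from outputs you already listed:
\begin{itemize}
\item $\hat c^\dagger_i\hat c(0,1)_{i+1}$ gives $g\alpha=t\,q_{\hat n(0,1)_{i+1}}$;
\item $\hat c_i\hat c^\dagger(0,1)_{i+1}$ gives $g\beta=-t\,q_{\hat n(0,1)_{i+1}}$.
\end{itemize}
Hence the 2-support part of $\hat Q^2$ equals $(\alpha/t)\hat H_{\mathrm{hop}}$. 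Then $\hat Q^2-(\alpha/t)\hat H$ is a 1-local conserved quantity, and the repaired $k=1$ result finishes the proof.

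Your ``expected obstacle'' is therefore misplaced. Dressed hoppings such as $\hat c^\dagger_i\hat c(0,1)_{i+1}$ are already eliminated by the step-1 endpoint argument you invoked. There is no Lemma~\ref{lem:step3_k3}-type system or determinant to analyze, and the on-site bookkeeping is not needed. If you do carry it out, the correct relation is $\omega\, q_{\hat n(0,1)_i}=g\,q_{I(1,1)_i}$; $\hat n_i$ does not contribute to the output $\hat n(0,1)_i$.
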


Combining this theorem with Theorem~\ref{nontrivial}, we obtain Theorem~\ref{withtrivial}.

\begin{proof}[Proof of Theorem~\ref{trivial}]
We prove the statement separately for the cases $k=2$ and $k=1$.
\subsubsection{Proof for $k=2$ case}
We show that the only 2-local conserved quantity is the Hamiltonian $\hat{H}$ itself up to the freedom of adding 1-local conserved quantities.

First, we consider $\mathbf{A}^2_i = e_i^1 e_{i+1}^2$ of type~\ref{enum:input_1} and
type~\ref{enum:input_2}. Without loss of generality, we assume
$e^2\neq I(x,y)$. We first discuss explicitly the case in which the right
end is $e^2=\hat{c}(x_2,y_2)$. Namely, take
\begin{align}
\mathbf{A}^2_i=e^1_i\hat{c}(x_2,y_2)_{i+1}.
\end{align}
The commutator with the hopping term on the right gives
\begin{align}
[e^1_i\hat{c}(x_2,y_2)_{i+1},
\hat{c}^\dagger_{i+1}\hat{c}_{i+2}]
&=e^1_iI(x_2,y_2)_{i+1}\hat{c}_{i+2}.
\end{align}
If $e^1\neq \hat{c}^\dagger(0,0),\hat{c}(0,0)$, then this output cannot be
produced by any other $2$-support input whose coefficient has not already been
excluded. Hence the coefficient of
$e^1_i\hat{c}(x_2,y_2)_{i+1}$ must vanish. Thus a nonzero coefficient is
possible only when the left endpoint is $\hat{c}^\dagger(0,0)$ or
$\hat{c}(0,0)$.

The cases $e^2=\hat{n}(x_2,y_2)$ and
$e^2=\hat{c}^\dagger(x_2,y_2)$ are treated similarly by considering $\mathbf{B}^3_i=e^1_i\hat{c}(x_2,y_2)_{i+1}\hat{c}_{i+2}, e^1_iI(x_2,y_2)_{i+1}\hat{c}^\dagger_{i+2}$,
respectively.

These outputs again have no competing inputs unless the left endpoint is
$\hat{c}^\dagger(0,0)$ or $\hat{c}(0,0)$. An analogous argument applied to the
left hopping term shows that the right endpoint must also be
$\hat{c}^\dagger(0,0)$ or $\hat{c}(0,0)$. Therefore, coefficients of
type~\ref{enum:input_1} and type~\ref{enum:input_2} inputs $\mathbf{A}^2$
vanish unless both endpoints are $\hat{c}^\dagger(0,0)$ or $\hat{c}(0,0)$. In
particular, all type~\ref{enum:input_2} candidates have zero coefficients.

Moreover, the surviving particle-nonconserving inputs $\mathbf{A}^2_i=\hat{c}^\dagger_i\hat{c}_{i+1}, \hat{c}_i\hat{c}^\dagger_{i+1}$ have uniform
coefficients. To see this, take
$\mathbf{A}^2_i=\hat{c}^\dagger_i\hat{c}_{i+1}$ as a representative case and
consider the output $\mathbf{B}^2_i=\hat{c}^\dagger_i\hat{c}_{i+2}$. The
commutators contributing to this output are
\begin{align}
[\hat{c}^\dagger_i\hat{c}_{i+1},
\hat{c}^\dagger_{i+1}\hat{c}_{i+2}]
&=\hat{c}^\dagger_i\hat{c}_{i+2},
\\
[\hat{c}^\dagger_{i+1}\hat{c}_{i+2},
\hat{c}^\dagger_i\hat{c}_{i+1}]
&=-\hat{c}^\dagger_i\hat{c}_{i+2}.
\end{align}
Thus the coefficient of $\mathbf{B}^2_i$ in $[Q,H]$ gives
$q_{\hat{c}^\dagger_i\hat{c}_{i+1}}
=q_{\hat{c}^\dagger_{i+1}\hat{c}_{i+2}}$. Repeating this comparison over the
lattice yields
\begin{align}
q_{\hat{c}^\dagger_{i}\hat{c}_{i+1}}
=q_{\hat{c}^\dagger_{i+1}\hat{c}_{i+2}}=\dots .
\end{align}
The analogous comparison for the opposite hopping direction gives
\begin{align}
q_{\hat{c}_{i}\hat{c}^\dagger_{i+1}}
=q_{\hat{c}_{i+1}\hat{c}^\dagger_{i+2}}=\dots .
\end{align}

Next, consider particle-nonconserving inputs $\mathbf{A}^2_i=\hat{c}_{i}\hat{c}_{i+1}, \hat{c}^\dagger_{i}\hat{c}^\dagger_{i+1}$. For the representative
case $\mathbf{A}^2_i=\hat{c}_{i}\hat{c}_{i+1}$, the output
$\mathbf{B}^2_i=\hat{c}_{i}\hat{c}(0,1)_{i+1}$ is generated only by
\begin{align}
[\hat{c}_{i}\hat{c}_{i+1},\hat{n}(0,1)_{i+1}]
&=\hat{c}_{i}\hat{c}(0,1)_{i+1}.
\end{align}
Hence $q_{\hat{c}_{i}\hat{c}_{i+1}}=0$. The Hermitian-conjugate case is treated
similarly: using
\begin{align}
[\hat{c}^\dagger_{i}\hat{c}^\dagger_{i+1},\hat{n}(0,1)_{i+1}]
&=-\hat{c}^\dagger_{i}\hat{c}^\dagger(0,1)_{i+1},
\end{align}
we also obtain $q_{\hat{c}^\dagger_{i}\hat{c}^\dagger_{i+1}}=0$.

It remains to relate the coefficients of the hopping terms to those of the
phonon-dressed number operators. Again, take
$\mathbf{A}^2_i=\hat{c}^\dagger_i\hat{c}_{i+1}$ as a representative case. For
the output $\mathbf{B}^2_i=\hat{c}^\dagger_i\hat{c}(0,1)_{i+1}$, the relevant
commutators are
\begin{align}
[\hat{c}^\dagger_{i}\hat{c}_{i+1},\hat{n}(0,1)_{i+1}]
&=\hat{c}^\dagger_{i}\hat{c}(0,1)_{i+1},
\\
[\hat{n}(0,1)_{i+1},\hat{c}^\dagger_{i}\hat{c}_{i+1}]
&=-\hat{c}^\dagger_{i}\hat{c}(0,1)_{i+1}.
\end{align}
Comparing the coefficient of this output in $[Q,H]$ gives
$gq_{\hat{c}^\dagger_i\hat{c}_{i+1}}=tq_{\hat{n}(0,1)_{i+1}}$. The analogous
comparison with $\hat{n}(1,0)_{i+1}$ gives
$gq_{\hat{c}^\dagger_i\hat{c}_{i+1}}=tq_{\hat{n}(1,0)_{i+1}}$. Applying the
analogous argument to the opposite hopping direction yields the corresponding
relation with $\hat{c}_i\hat{c}^\dagger_{i+1}$. Therefore,
\begin{align}
gq_{\hat{c}^\dagger_{i} 
\hat{c}_{i+1}}
&=tq_{\hat{n}(0,1)_{i+1}}
=tq_{\hat{n}(1,0)_{i+1}}
=-gq_{\hat{c}_{i}\hat{c}^\dagger_{i+1} 
}.
\end{align}

Finally, we relate $\hat{n}(0,1)_i$ and $\hat{n}(1,0)_i$ to the phonon-number
operator. For $\hat{n}(0,1)_i$, the output $\hat{n}(0,1)_i$ is generated by
\begin{align}
[\hat{n}(0,1)_{i},I(1,1)_i]&=\hat{n}(0,1)_{i},
\\
[I(1,1)_i,\hat{n}(0,1)_{i}]&=-\hat{n}(0,1)_{i}.
\end{align}
Coefficient comparison gives
$\omega q_{\hat{n}(0,1)_i}=gq_{I(1,1)_i}$. The analogous calculation for
$\hat{n}(1,0)_i$ gives $gq_{I(1,1)_i}=\omega q_{\hat{n}(1,0)_i}$. Hence,
\begin{align}
\omega q_{\hat{n}(0,1)_{i}}
=g q_{I(1,1)_i}
=\omega q_{\hat{n}(1,0)_{i}}.
\end{align}

\begin{widetext}
Next, we consider type~\ref{enum:input_3} inputs $\mathbf{A}^2$.
Here, all candidates are
$\mathbf{A}^2_i=I(x_1,y_1)_iI(x_2,y_2)_{i+1}$ with
$(x_1,y_1),(x_2,y_2) \neq(0,0)$. Suppose first that $x_1\neq 0$. For the
output
$\mathbf{B}^2_i=\hat{n}(x_1-1,y_1)_i I(x_2,y_2)_{i+1}$, the relevant
commutators are
\begin{align}
[I(x_1,y_1)_iI(x_2,y_2)_{i+1},\hat{n}(0,1)_i]
&=-x_1\hat{n}(x_1-1,y_1)_iI(x_2,y_2)_{i+1},
\\
[I(x_1-1,y_1+1)_iI(x_2,y_2)_{i+1},\hat{n}(1,0)_i]
&=(y_1+1)\hat{n}(x_1-1,y_1)_iI(x_2,y_2)_{i+1}.
\end{align}
Thus
\begin{align}
q_{I(x_1,y_1)_iI(x_2,y_2)_{i+1}}\propto q_{I(x_1-1,y_1+1)_iI(x_2,y_2)_{i+1}}.
\end{align}
An analogous argument at site $i+1$ gives the corresponding relation for
$x_2$. Repeating these comparisons, any coefficient of
$\mathbf{A}^2_i=I(x_1,y_1)_iI(x_2,y_2)_{i+1}$ is proportional to that of
\begin{align}
\mathbf{A'}^2_i=I(0,x_1+y_1)_iI(0,x_2+y_2)_{i+1}.
\end{align}
It remains to show that this proportional coefficient vanishes. For the
left-site phonon-number term, we have
\begin{align}
[\hat{b}^{x_1+y_1}_i\hat{b}^{x_2+y_2}_{i+1},\hat{b}^\dagger_i\hat{b}_{i}]
&=(x_1+y_1)\hat{b}^{x_1+y_1}_i\hat{b}^{x_2+y_2}_{i+1}.
\end{align}
The analogous contribution from the right-site phonon-number term is
\begin{align}
[\hat{b}^{x_1+y_1}_i\hat{b}^{x_2+y_2}_{i+1},\hat{b}^\dagger_{i+1}\hat{b}_{i+1}]
&=(x_2+y_2)\hat{b}^{x_1+y_1}_i\hat{b}^{x_2+y_2}_{i+1}.
\end{align}
Therefore,
\begin{align}
\omega(x_1+y_1+x_2+y_2)q_{I(0,x_1+y_1)_iI(0,x_2+y_2)_{i+1}}&=0,
\end{align}
which implies $q_{I(0,x_1+y_1)_iI(0,x_2+y_2)_{i+1}}=0$.
Together with the proportionality relation above, this gives
$q_{I(x_1,y_1)_iI(x_2,y_2)_{i+1}}=0$.
Hence, all type~\ref{enum:input_3} candidates have zero coefficients.

Therefore, for coefficients of operators forming $Q^2$,
\begin{align}
q_{\hat{c}^\dagger_{i}\hat{c}_{i+1}}
:q_{\hat{c}_{i}\hat{c}^\dagger_{i+1}}
:q_{\hat{n}(0,1)_i}:q_{\hat{n}(1,0)_i}
:q_{I(1,1)_i}
&=t:-t:g:g:\omega
\end{align}
\begin{align}
q_{A^2}&=0\quad (\textnormal{otherwise})
\end{align}
holds. In other words, up to the freedom of adding 1-local conserved
quantities, every 2-local conserved quantity is a constant multiple
of the Hamiltonian.
\end{widetext}

\subsubsection{Proof for $k=1$ case}
First, we consider the candidates
$\mathbf{A}^1=\hat{c}(x,y), \hat{c}^\dagger(x,y), \hat{n}(x,y)$.
We discuss $\mathbf{A}^1_i=\hat{n}(x,y)_i$ explicitly. The commutator with the
right hopping term gives
\begin{align}
[\hat{n}(x,y)_i,\hat{c}^\dagger_i\hat{c}_{i+1}]
&=\hat{c}^\dagger(x,y)_i\hat{c}_{i+1}.
\end{align}
If $(x,y)\neq(0,0)$, this output is generated only by this input, and therefore
$q_{\hat{n}(x,y)_i}=0$. When $(x,y)=(0,0)$, however, the same output can also be
generated from the neighboring input:
\begin{align}
[\hat{n}(0,0)_{i+1},\hat{c}^\dagger_i\hat{c}_{i+1}]
&=-\hat{c}^\dagger_i\hat{c}_{i+1}.
\end{align}
Thus coefficient comparison gives
\begin{align}
q_{\hat{n}_{i}}-q_{\hat{n}_{i+1}}&=0.
\end{align}
Similarly, the commutators
\begin{align}
[\hat{c}(x,y)_i,\hat{c}^\dagger_i\hat{c}_{i+1}]
&=I(x,y)_i\hat{c}_{i+1},
\\
[\hat{c}^\dagger(x,y)_i,\hat{c}^\dagger_{i+1}\hat{c}_{i}]
&=-I(x,y)_i\hat{c}^\dagger_{i+1}
\end{align}
show that the coefficients of $\hat{c}(x,y)_i$ and
$\hat{c}^\dagger(x,y)_i$ vanish for $(x,y)\neq(0,0)$; the case
$(x,y)=(0,0)$ remains for now.

Next, we consider commutators that generate $\mathbf{B}^1$. Commutators comes from inputs that
do not commute with the particle-number operator, such as
$\mathbf{A}^1=\hat{c}_{\sigma},\hat{c}^\dagger_{\sigma}$,
include
\begin{align}
[\hat{c}_{i},\hat{n}(0,1)_i]&=\hat{c}(0,1)_i
\\
[\hat{c}^\dagger_{i},\hat{n}(0,1)_i]&=-\hat{c}^\dagger(0,1)_i
\end{align}
and each of these is the unique input for the corrsponding output, implying that these $\mathbf{A}^k$ have zero coefficients.

Finally, we treat $\mathbf{A}^1=I(x,y)$. 
When $x\neq0$, commutators that contribute to $\mathbf{B}^2=\hat{n}(x-1,y)$ are written as
\begin{align}
[I(x,y)_i,\hat{n}(0,1)]&=-x\hat{n}(x-1,y),
\\
[I(x-1,y+1)_i,\hat{n}(1,0)]&=(y+1)\hat{n}(x-1,y),
\end{align}
which gives $q_{I(x,y)_i}\propto q_{I(x-1,y+1)_i}$. Iterating this relation yields $q_{I(x,y)_i}\propto q_{I(0,x+y)_i}$. In addition, for 
$\mathbf{B}^1=I(0,x+y)$,
\begin{align}
[I(0,x+y),I(1,1)]&=(x+y)I(0,x+y)
\end{align}
is the unique commutator that generates the output. Hence,
$q_{I(0,x+y)_i}=0$ holds. Together with the proportionality relation above,
this gives $q_{I(x,y)_i}=0$.

Therefore, among $1$-local conserved quantities, only constant multiples of the 
total fermion number 
$\hat{N}=\sum_i\hat{n}_{i}$ remain.
\end{proof}

\section{Holstein--Hubbard model}\label{sec:HH}
We now consider the one-dimensional Holstein--Hubbard model, whose Hamiltonian is
\begin{align}
 \hat{H}   &=  \hat{H}'_{\mathrm{hop}} +\hat{H}'_{\mathrm{int}}+\hat{H}_{\mathrm{pho}}+\hat{H}_{\mathrm{e-e}}  , 
\label{Eq.1D_HH_model}
\end{align}
with
\begin{align}
\hat{H}'_{\mathrm{hop}}&= t  \sum_{i=1}^L \sum_\sigma (\hat{c}^\dagger_{i,\sigma} \hat{c}_{i+1,\sigma} +h.c. ),
\end{align}
\begin{align}
\hat{H}'_{\mathrm{int}}&= g  \sum_{i=1}^L \sum_\sigma   \hat{n}_{i,\sigma}  \bigl (\hat{b}_i^\dagger + \hat{b}_i \bigr),
\end{align}
\begin{align}
\hat{H}_{\mathrm{pho}}&=  \omega \sum_{i=1}^L  \hat{b}^\dagger_i \hat{b}_i, 
\end{align}
\begin{align}
\hat{H}_{\mathrm{e-e}}&=U\sum_{i=1}^L\sum_{\sigma \neq \sigma'} \hat{n}_{i,\sigma} \hat{n}_{i,\sigma'},
\end{align}
where $\sigma,\sigma'\in \{\uparrow,\downarrow\}$ denote spin indices, and $\hat{c}^\dagger_{i,\sigma}$ ($\hat{c}_{i,\sigma}$) is the creation (annihilation) operator for an electron with spin $\sigma$ at site $i$. We write $\hat{n}_{i,\sigma}=\hat{c}^\dagger_{i,\sigma}\hat{c}_{i,\sigma}$ for the corresponding fermion-number operator. The parameter $U$ denotes the strength of the Coulomb interaction between electrons with opposite spins. When $U=0$, i.e., in the absence of the Coulomb interaction term, the model reduces to the spinful Holstein model.

The Holstein--Hubbard model is also an important electron--phonon coupled system. While the Holstein model is a minimal model describing electron--phonon interactions, the Holstein--Hubbard model is a minimal model describing the competition between those interactions and electron--electron interactions. This model is also believed to be nonintegrable and exhibits diffusive transport properties~\cite{HolsteinRev,HolsteinHubbard,HHcompetition,HHsuperconductivity,HHphasediagram,HHDMRG,HHthermalization}.

Before stating the result, let us recall the conserved quantities that are
trivial from the structure of the model. The energy, represented by the
Hamiltonian itself, is conserved by definition. In addition, the model conserves
the total fermion number. In the spinful case, this particle-number conservation
appears together with the global spin symmetry: the $1$-local quantities
\begin{align}
\hat{N}_{(\sigma,\sigma')}
\coloneqq \sum_{i=1}^L \hat{c}^\dagger_{i,\sigma}\hat{c}_{i,\sigma'}
\end{align}
commute with the Hamiltonian. The diagonal components
$\hat{N}_{(\uparrow,\uparrow)}$ and $\hat{N}_{(\downarrow,\downarrow)}$ are the
particle-number operators for each spin species, whose sum gives the total
fermion number. The off-diagonal components $\hat{N}_{(\uparrow,\downarrow)}$ and
$\hat{N}_{(\downarrow,\uparrow)}$ generate global spin-flip transformations.
These operators commute with the Hamiltonian because the hopping,
electron--phonon coupling, phonon, and onsite Coulomb terms are all invariant
under the corresponding global spin rotations.

The following theorem states that, in a sense similar to the result discussed in
Sec.~\ref{sec:main_result}, these trivial conserved quantities exhaust all local
conserved quantities of the one-dimensional Holstein--Hubbard model:

\begin{theorem}\label{t:HHwithtrivial}
In the one-dimensional Holstein--Hubbard model~\eqref{Eq.1D_HH_model} satisfying $t\neq 0$, $g\neq 0$, and $\omega \neq 0$, $k$-local conserved quantities with $k\leq L/2$ are restricted to linear combinations of $\hat{H}$ and $\hat{N}_{(\sigma, \sigma')}$.
\end{theorem}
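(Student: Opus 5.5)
The plan is to reproduce the three-step architecture of Sec.~\ref{sec:proof} for the spinful chain, and then redo the short-support analysis of Theorem~\ref{trivial}. The local basis is enlarged. At each site $j$ an element is $f_{j\uparrow}f_{j\downarrow}(\hat{b}^\dagger_j)^x\hat{b}^y_j$ with $f_{j\sigma}\in\{I,\hat{c}_{j\sigma},\hat{c}^\dagger_{j\sigma},\hat{n}_{j\sigma}\}$. Every Hamiltonian term, including $U\hat{n}_{i\uparrow}\hat{n}_{i\downarrow}$, is fermion-parity even, so the factorization $[A_iB_j,h_i]=[A_i,h_i]B_j$ still holds. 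The classification into types~\ref{enum:input_1}--\ref{enum:input_3} is now made according to whether an end site carries a nontrivial fermionic factor in either spin component.

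\textbf{Step~1.} Only the spin-diagonal hoppings $\hat{c}^\dagger_{i\sigma}\hat{c}_{i+1,\sigma}$ produce $(k+1)$-support outputs. Running the endpoint argument of Lemma~\ref{lem:step1_1} separately for each spin component at the ends shows the following. For a type~\ref{enum:input_1} input to survive, each end must carry exactly one operator $\hat{c}_\sigma$ or $\hat{c}^\dagger_\sigma$ for a single spin, with the other spin trivial and no phonon factor. Otherwise one of the two hopping channels produces an output that no other input can generate. The middle-site argument likewise forces all interior sites to be $I$. The survivors are therefore $C_{i\sigma}C'_{i+k-1,\sigma'}$, with shift relations identical to those of Lemma~\ref{lem:step1_1}. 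When $\sigma\neq\sigma'$, moving one end requires a hopping of spin $\sigma$ and moving the other end a hopping of spin $\sigma'$, but the two-term cancellation still pairs the two ends, so the same relations hold. Type~\ref{enum:input_2} inputs are eliminated exactly as in Lemma~\ref{lem:step1_2}.

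\textbf{Step~2.} Since $\hat{n}_\sigma(\hat{b}^\dagger+\hat{b})$ is spin diagonal, the $cc$-type argument of Lemma~\ref{lem:step2_1} goes through unchanged with spin labels attached. The $c^\dagger c$-type relations~\eqref{cdcb1}--\eqref{cdcb2} and~\eqref{cdc_k-1} carry over for each pair $(\sigma,\sigma')$.

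A new ingredient is that $U$ now contributes to $k$-support outputs. Commuting $\hat{c}^\dagger_{i\sigma}\hat{c}_{i+k-1,\sigma'}$ with $U\hat{n}_{i\uparrow}\hat{n}_{i\downarrow}$ yields $\hat{c}^\dagger_{i\sigma}\hat{n}_{i\bar\sigma}\hat{c}_{i+k-1,\sigma'}$. That output is also generated by neighbouring $(k-1)$- and $k$-support inputs carrying $\hat{n}_{\bar\sigma}$ at an end. I will therefore add these $U$-dressed outputs to the bookkeeping. I expect them only to introduce further $(k-1)$-support coefficients, which are then killed by endpoint arguments as in Lemma~\ref{lem:constraints}. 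Importantly, they should not alter the phonon-dressed chain relations, because those involve the outputs $C\,C'(0,1)$, on which $U$ acts trivially when the doubly-occupied factor is absent.

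The type~\ref{enum:input_3} analysis of Lemma~\ref{lem:step2_3} repeats verbatim. The pure-phonon operators are killed by $\omega\sum_j(x_j+y_j)\neq0$, and fermionic middle factors are removed by spin-resolved hoppings.

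\textbf{Step~3 and short support.} The linear system of Lemmas~\ref{lem:step3_k3} and~\ref{lem:step3_kgeneral} (the one giving $(k-1)a=0$) is built from the outputs $\hat{c}^\dagger_{j\sigma}I(0,1)\hat{c}_{j',\sigma'}$, where only hopping, $g$ and $\omega$ enter. It should therefore be reproduced for each $(\sigma,\sigma')$, giving $a=0$ and hence the vanishing of all $k$-support coefficients for $3\le k\le L/2$.

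For $k=2,1$, I will follow the proof of Theorem~\ref{trivial}. For $k=2$, the hopping-type coefficients $q_{\hat{c}^\dagger_{i\sigma}\hat{c}_{i+1,\sigma'}}$ are tied to $q_{\hat{c}^\dagger_{i\sigma}\hat{c}_{i\sigma'}(0,1)}$ and then to $I(1,1)$. Off-diagonal spin components would require an operator $\hat{c}^\dagger_\sigma\hat{c}_{\sigma'}(0,1)$ with $\sigma\neq\sigma'$. Such an operator is not balanced by $I(1,1)$, so it is forced to vanish. The Hubbard coefficient is fixed by the output $\hat{c}^\dagger_{i\sigma}\hat{n}_{i\bar\sigma}\hat{c}_{i+1,\sigma}$, which yields the ratio $U:t$. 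Together these reproduce $\hat{H}$ modulo $1$-local terms. For $k=1$, the survivors are $\sum_i\hat{c}^\dagger_{i\sigma}\hat{c}_{i\sigma'}$ with uniform coefficients, plus possibly $\hat{n}_{i\uparrow}\hat{n}_{i\downarrow}$. The latter is excluded via its hopping commutator, which no other input can cancel.

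The main obstacle is the $U$ term in Step~2 and in the $k=2$ case. It generates outputs containing $\hat{n}_{\bar\sigma}$ at a site adjacent to the support. I must check that none of these reopens a channel for a competing input in the chains used in Lemma~\ref{lem:constraints}, and that the $U$-dressed inputs do not have to be carried into Step~3. I would first try to show that every such dressed input has a uniqueness witness among the $(k+1)$- or $k$-support outputs produced by hopping.
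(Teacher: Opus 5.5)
Your plan follows the paper's proof: the same spinful basis, Steps 1--3 with spin labels carried along, and the same short-support analysis. The gap is the one genuinely new ingredient, the Coulomb term. You leave it open, and the route you propose for closing it would not work.

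\textbf{Why your proposed route fails.} You plan to track the $U$-dressed outputs and remove the resulting dressed inputs by uniqueness witnesses. Take $D=\hat c^\dagger_{i\sigma}\hat n_{i\bar\sigma}\hat c_{i+k-2,\sigma'}$.
\begin{itemize}
\item Its right extension by the spin-$\sigma'$ hopping is exactly the $U$-image of $\hat c^\dagger_{i\sigma}\hat c_{i+k-1,\sigma'}$. The corresponding equation is $t\,q_D=2U\,q_{\hat c^\dagger_{i\sigma}\hat c_{i+k-1,\sigma'}}$.
\item For $k\ge4$, every other nonzero $k$-support output of $D$ (left extensions by either spin) is shared with an interior-dressed $(k-1)$-support input, e.g.\ $\hat c^\dagger_{i-1,\sigma}\hat n_{i\bar\sigma}\hat c_{i+k-3,\sigma'}$. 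Step~1 does not control such inputs.
\item $D$ has no $(k+1)$-support outputs at all.
\end{itemize}
So $D$ has no uniqueness witness. Its coefficient is tied to the very coefficient you are trying to kill, which makes ``kill the dressed inputs first'' circular.

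\textbf{How to close it.} You never need these equations. A vanishing proof uses only a specific list of conditions $r_B=0$, and discarding the others is harmless. What must be checked is only that $U$ adds nothing to the $r_B$ on that list.

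Your remark that $U$ ``acts trivially'' on $C\,C'(0,1)$ is the right instinct but the wrong statement, since $[C_{i\sigma}C'_{\sigma'}(0,1),\hat n_\uparrow\hat n_\downarrow]\neq0$. What matters is the image of $\mathrm{ad}_{\hat n_\uparrow\hat n_\downarrow}$. For local Fock states $a,b$ with double occupancies $d_a,d_b$, one has $\langle a|[X,\hat n_\uparrow\hat n_\downarrow]|b\rangle=X_{ab}(d_b-d_a)$. Hence this map preserves support and sends every single-site factor into the span of $\hat c_\sigma\hat n_{\bar\sigma}$, $\hat c^\dagger_\sigma\hat n_{\bar\sigma}$, $\hat c_\uparrow\hat c_\downarrow$, $\hat c^\dagger_\uparrow\hat c^\dagger_\downarrow$, times the phonon factor. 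Now check each group of outputs:
\begin{itemize}
\item Step~1 outputs have support $k+1$, so $U$ cannot reach them.
\item Every output used in Lemmas~\ref{lem:step2_1}, \ref{lem:constraints}, \ref{lem:step3_k3}, \ref{lem:step3_kgeneral} carries only single-spin factors $I,\hat c_\sigma,\hat c^\dagger_\sigma$ (with phonons), none of which lies in that image.
\item In the type~\ref{enum:input_3} arguments, each output either contains no factor from the image (the pure-phonon and $\hat n_\sigma$-dressed ones), or has a fermion-free end and a hopping-created end $\hat c^{(\dagger)}_\sigma(x,y)$. In the latter case $U$ could only act on an interior site of a same-support type~\ref{enum:input_2} input, whose coefficient already vanishes.
\end{itemize}
Hence every Holstein equation holds verbatim with spin labels, and the $U$-dressed sector can simply be ignored. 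This is the content the paper compresses into its one-line remarks that the Coulomb term ``does not affect'' the argument.

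\textbf{On $k=2$.} Your handling of $\sigma\neq\sigma'$, via the unbalanced $\omega$-equation for $\hat c^\dagger_\sigma\hat c_{\sigma'}(0,1)$, is the correct mechanism and is more accurate than the paper's wording. The $1$-support spin-flip term $\hat c^\dagger_\sigma\hat c_{\sigma'}(0,1)_{i+1}$ does feed $\hat c^\dagger_{i\sigma}\hat c_{\sigma'}(0,1)_{i+1}$ through the spin-$\sigma$ hopping, so the paper's claim that there is no competitor is wrong. Before using that $\omega$-equation, first remove $\hat c^\dagger_\sigma\hat c_{\sigma'}(1,1)$ and $\hat c^\dagger_\sigma\hat c_{\sigma'}(0,2)$ by hopping uniqueness, since they also enter it through the $g$-term.
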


The proof of this theorem is presented in Appendix~\ref{sec:HH_proof}. Its structure is largely parallel to the proof of Theorem~\ref{withtrivial}, with the minor modifications arising from the additional spin degree of freedom. In particular, this theorem holds regardless of whether the Coulomb interaction is present.
 
\section{Discussion}\label{sec:discussion}
In this paper, we established the nonintegrability of the one-dimensional Holstein and Holstein--Hubbard models by proving the absence of $k$-local conserved quantities with $k \leq L/2$, apart from the Hamiltonian itself and the total fermion-number operators. Our results imply that phenomena such as ballistic transport, when observed in these systems, cannot be attributed to previously unidentified local conserved quantities. To the best of our knowledge, this is the first proof of nonintegrability for fermion--boson coupled systems in terms of the absence of local conserved quantities.

An important direction for future work is to extend our method to other physically relevant fermion--boson coupled systems. In the Holstein model, the fermion--boson coupling appears as an onsite term, which plays a crucial role in the structure of the proof. It would therefore be highly desirable to generalize the argument to models in which the coupling enters two-body hopping terms, as in the Su--Schrieffer--Heeger (SSH) model~\cite{SSH}. Another important extension is to go beyond the dispersionless optical phonons considered in the present work. In particular, it would be interesting to investigate models with acoustic phonons, where the phonon sector can generate spatially extended and effectively many-body bosonic terms in real space. Clarifying whether similar nonintegrability proofs can be established in such settings would broaden the scope of rigorous results on conservation laws in coupled quantum many-body systems.

\begin{acknowledgments}
The authors thank Naoto Shiraishi for fruitful discussions and Hosho Katsura for pointing us to some useful references.
We also appreciate Kota Mitsumoto, Souta Shimozono, and Kei Oriyose for helping us perform numerical calculations.
M.Y. was supported by KAKENHI Grant No. JP25KJ0815 from JSPS.
\end{acknowledgments}

\appendix

\section{Integrable cases}\label{sec:integrable_case}

As shown in Theorem~\ref{nontrivial}, the Holstein model is nonintegrable when
$t$, $g$, and $\omega$ are all nonzero. By contrast, when at least one of these
parameters vanishes, the model acquires an extensive number of independent local
conserved quantities and becomes integrable in this sense. In particular, the
limits $t \to 0$ and $g \to 0$ correspond to the atomic limit and the weak-coupling
limit, respectively, both of which are known to be exactly solvable~\cite{LangFirsov,HolsteinETH}.
The case $\omega=0$ is less physical, but it also possesses an extensive number
of conserved quantities. As we discuss below, this case is moreover clearly
distinguished from the nonintegrable regime by standard indicators of quantum chaos.

\subsection{The case $t=0$}

When $t=0$, the Hamiltonian contains only onsite terms. In this case, it can be
diagonalized by the Lang--Firsov transformation~\cite{LangFirsov}. Introducing the operator
\begin{align}
\hat{S}&=\frac{g}{\omega}\sum_i \hat{n}_i(\hat{b}_i^\dagger-\hat{b}_i),
\end{align}
and defining the transformed Hamiltonian by
\begin{align}
\hat{H}' &= e^{-\hat{S}} \hat{H} e^{\hat{S}},
\end{align}
one obtains
\begin{align}
\hat{H}' &=
-\frac{g^2}{\omega}\sum_i \hat{n}_i^2
+\omega\sum_i \hat{b}_i^\dagger \hat{b}_i \nonumber\\
&=
-\frac{g^2}{\omega}\sum_i \hat{c}_i^\dagger \hat{c}_i
+\omega\sum_i \hat{b}_i^\dagger \hat{b}_i.
\end{align}
Thus the transformed Hamiltonian is written as a sum of independent quadratic
fermionic and bosonic terms. Correspondingly, for $\hat{H}'$, both the fermion
number operator $\hat{n}_i=\hat{c}_i^\dagger\hat{c}_i$ and the boson number
operator $\hat{b}_i^\dagger\hat{b}_i$ are conserved at each site. Applying the
inverse Lang--Firsov transformation maps these conserved quantities back to
local conserved quantities of the original Hamiltonian. Hence the original
Holstein model in this limit is integrable, possessing an extensive number of
independent local conserved quantities.

\subsection{The case $g=0$}
When $g=0$, the Hamiltonian reduces to
\begin{align}
\hat{H}&=
t \sum_i (\hat{c}^\dagger_i \hat{c}_{i+1}+\hat{c}^\dagger_{i+1}\hat{c}_i)
+\omega \sum_i \hat{b}_i^\dagger \hat{b}_i,
\end{align}
namely, a sum of a free-fermion tight-binding Hamiltonian and localized phonons.
This model is therefore directly diagonalizable.

As for local conserved quantities, the boson number operator at each site $\hat{b}_i^\dagger \hat{b}_i$ is
obviously conserved. In addition, for the fermionic sector, one finds for each
$k\geq 3$ that $k$-local quantity
\begin{align}
\hat{Q}^k =& \sum_i
q_{c^\dagger c}\hat{c}^\dagger_i \hat{c}_{i+k-1}
+q_{c c^\dagger}\hat{c}^\dagger_{i+k-1}\hat{c}_i \nonumber\\
&+\sum_i (-1)^i
\left(
q_{c^\dagger c^\dagger}\hat{c}^\dagger_i \hat{c}^\dagger_{i+k-1}
+q_{c c}\hat{c}_i \hat{c}_{i+k-1}
\right)
\end{align}
is conserved, where
$q_{c^\dagger c}$, $q_{c c^\dagger}$,
$q_{c^\dagger c^\dagger}$, and $q_{c c}$ are site-independent constants.

\subsection{The case $\omega=0$}
When $\omega=0$, it is also easy to see that the system has an extensive number
of independent local conserved quantities. Indeed, at each site the operator
$\hat{b}_i^\dagger+\hat{b}_i$, which corresponds to the phonon displacement,
commutes with the Hamiltonian. In this sense, the system is integrable when
counted by the number of independent local conserved quantities.

\subsection{Numerical analysis of level statistics}
To illustrate the parameter dependence of integrability, we numerically investigate the level spacing statistics~\cite{quantumchaosRev,chaosRMT} of the Holstein model by exact diagonalization. 
For the ordinary one-dimensional Holstein model with periodic boundary
conditions, both the total fermion number
$\hat{N}=\sum_i \hat{n}_i$ and the crystal momentum $k$ associated with lattice
translation symmetry are conserved. The model also has time-reversal symmetry as
an antiunitary symmetry. Previous work~\cite{HolsteinETH} studied the energy-level statistics
in the symmetry sector with fermion number $\hat{N}=1$ and momentum
$k=2\pi/L$, and found that the resulting distribution is close to that of the
Gaussian orthogonal ensemble (GOE)~\cite{quantumchaosRev,chaosRMT}.

Following that analysis, we again focus on the $\hat{N}=1$ sector and
consider the momentum sector $k=2\pi/L$. Taking 
$t=-1.0$, $g=1/\sqrt{2}$, system size $L=7$, and a boson cutoff $M=3$, meaning that the maximal number of phonons per site is three, we restrict attention to the energy window
\begin{align}
\frac{E_{\mathrm{av}} - E_i}{E_{\mathrm{av}} - E_{\mathrm{min}}}
< \frac{2}{3},
\qquad (E_i < E_{\mathrm{av}}),
\end{align}
\begin{align}
\frac{E_i - E_{\mathrm{av}}}{E_{\mathrm{max}} - E_{\mathrm{av}}}
< \frac{2}{3},
\qquad (E_i > E_{\mathrm{av}}),
\end{align}
where $E_{\mathrm{av}}$, $E_{\mathrm{min}}$, and $E_{\mathrm{max}}$ denote the
average, minimum, and maximum eigenenergies, respectively, within the given momentum sector. 
Reference~\cite{HolsteinETH} studied the corresponding setting for $L=8$,
but here we instead used $L=7$ because of computational limitations.

For a nonintegrable case such as $\omega=0.5$,  the distribution of neighboring energy-level spacings $P(s)$~\cite{quantumchaosRev,levelspacing} and the ratio of consecutive level spacings $P(r)$~\cite{levelspacingratio,levelspacingratio2} are plotted in Fig.~\ref{fig:level_stats}(a) and (b), where $s$ denotes the unfolded neighboring energy-level spacing $s_i \propto E_{i+1}-E_i$ and $r$ denotes the ratio of consecutive level spacings
\begin{align}
    r_i \coloneq \min\{
\frac{E_{i+2}-E_{i+1}}{E_{i+1}-E_i},
\frac{E_{i+1}-E_i}{E_{i+2}-E_{i+1}}
\}.
\end{align}
Consistently with the
previous literature, both distributions are well described by the GOE prediction.

By contrast, for the integrable case $\omega=0$, the
corresponding distribution of neighboring energy-level spacings and the ratio of consecutive level spacings are shown in
Fig.~\ref{fig:level_stats}(c) and (d). In this case both distributions agree well with Poisson statistics, supporting the view that the system behaves in a manner analogous to
more conventional integrable models.

\begin{figure*}[t]
    \centering
    \begin{minipage}[t]{0.48\textwidth}
        \centering
        \includegraphics[width=\linewidth]{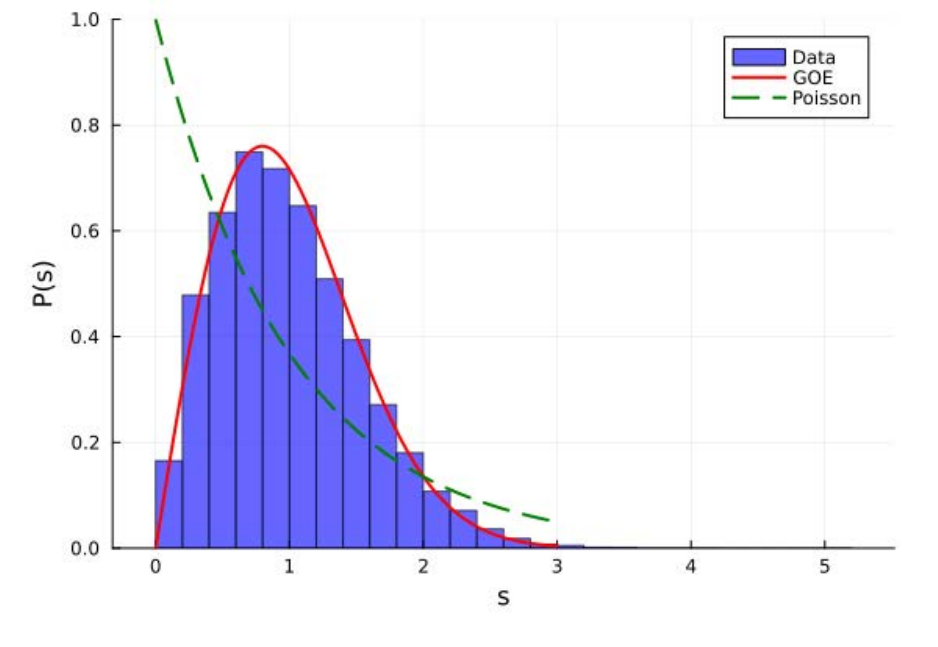}
        \\[-0.5ex]
        \small (a) Neighboring energy-level spacings for $\omega=0.5$.
        \label{fig:Ps_L7_k1}
    \end{minipage}\hfill
    \begin{minipage}[t]{0.48\textwidth}
        \centering
        \includegraphics[width=\linewidth]{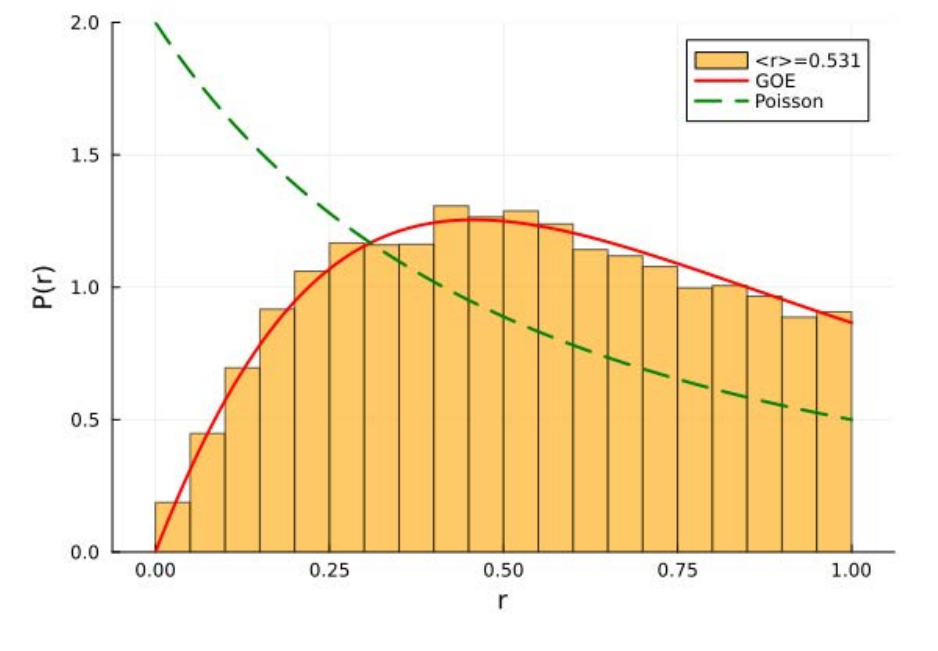}
        \\[-0.5ex]
        \small (b) Ratio of consecutive level spacings for $\omega=0.5$.
        \label{fig:Pr_L7_k1}
    \end{minipage}

    \vspace{1ex}

    \begin{minipage}[t]{0.48\textwidth}
        \centering
        \includegraphics[width=\linewidth]{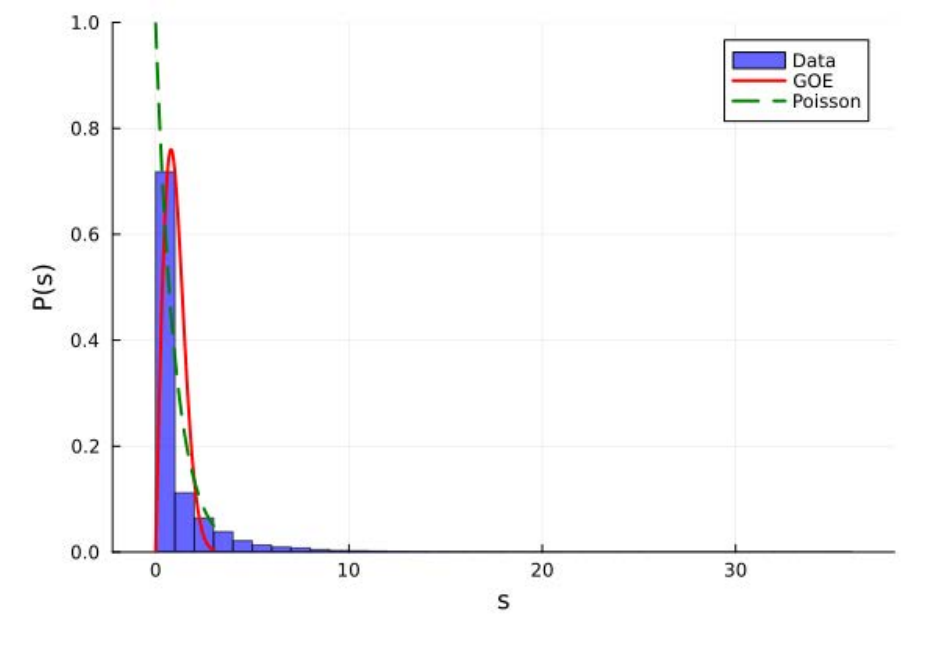}
        \\[-0.5ex]
        \small (c) Neighboring energy-level spacings for $\omega=0.0$.
        \label{fig:Ps_L7_g0_k1}
    \end{minipage}\hfill
    \begin{minipage}[t]{0.48\textwidth}
        \centering
        \includegraphics[width=\linewidth]{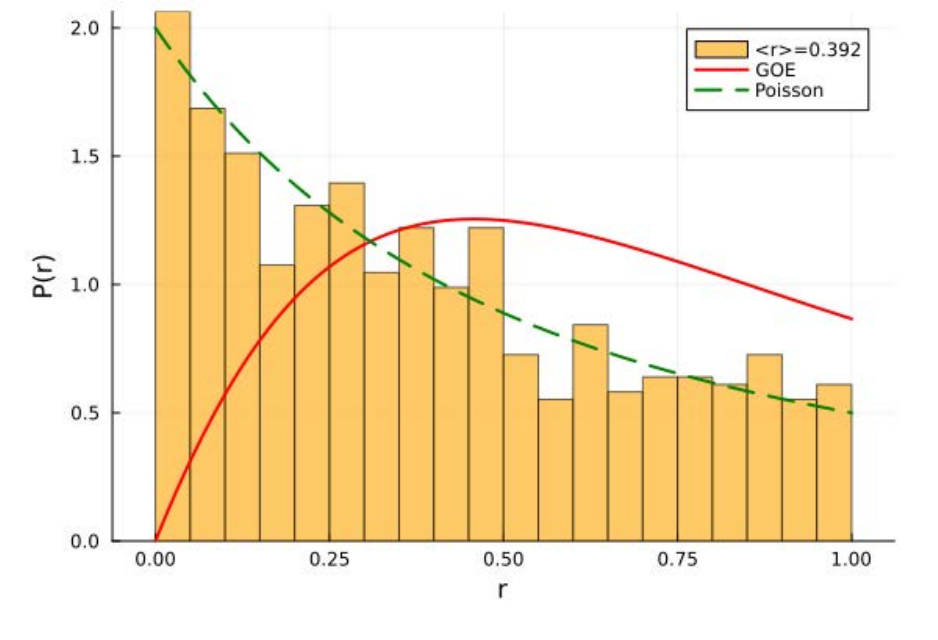}
        \\[-0.5ex]
        \small (d) Ratio of consecutive level spacings for $\omega=0.0$.
        \label{fig:Pr_L7_g0_k1}
    \end{minipage}
    \caption{Level-spacing statistics of the Holstein model~\eqref{Eq.1D_H_model}. Panels (a) and (b) show the nonintegrable case $\omega=0.5$, while panels (c) and (d) show the integrable case $\omega=0.0$.}
    \label{fig:level_stats}
\end{figure*}

\section{The model with a fermionic chemical potential}
\label{sec:withchem}
Now, we consider adding a fermionic chemical potential to the Holstein model~\eqref{Eq.1D_H_model},
\begin{align}
\hat{H}   &=  \hat{H}_{hop} +\hat{H}_{int}+\hat{H}_{pho} +\mu \sum_i\hat{n}_i.
\label{Eq.withchem}
\end{align}
The following theorem also holds in this case.

\begin{theorem}\label{t:chem}
The one-dimensional Holstein model with fermion chemical potential~\eqref{Eq.withchem} satisfying $t\neq 0$, $g\neq 0$, and $\omega \neq 0$ has no $k$-local conserved quantity with $3\leq k\leq L/2$.
\end{theorem}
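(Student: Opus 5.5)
The key observation is that the chemical-potential term $\mu\sum_i\hat{n}_i=\mu\hat{N}$ commutes with the original Holstein Hamiltonian~\eqref{Eq.1D_H_model}, since $\hat{N}$ is conserved there (Theorem~\ref{trivial}). I would therefore try a direct reduction first and fall back on rerunning the proof only if the reduction fails.

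\textbf{Direct reduction.} Write $\hat{H}_\mu=\hat{H}+\mu\hat{N}$. Suppose $\hat{Q}^k$ is $k$-local with $3\le k\le L/2$ and $[\hat{Q}^k,\hat{H}_\mu]=0$. The reduction would be immediate if $[\hat{Q}^k,\hat{N}]=0$, because then $[\hat{Q}^k,\hat{H}]=0$ and Theorem~\ref{nontrivial} applies. Particle-number conservation of $\hat{Q}^k$ is not given a priori, however. So I would decompose $\hat{Q}^k=\sum_{m}\hat{Q}^k_{(m)}$ by fermion-number charge $m$, defined by $[\hat{N},\hat{Q}^k_{(m)}]=m\hat{Q}^k_{(m)}$. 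This decomposition is done basis-element-wise: each $\mathbf{A}^l_i$ carries a definite charge, namely the number of $\hat{c}^\dagger$ minus the number of $\hat{c}$. Hence each $\hat{Q}^k_{(m)}$ is again a sum of at most $k$-support operators. Since $\hat{H}$ commutes with $\hat{N}$, the condition $[\hat{Q}^k,\hat{H}_\mu]=0$ splits charge-sector-wise into
\[
[\hat{Q}^k_{(m)},\hat{H}]=m\mu\,\hat{Q}^k_{(m)}.
\]
For $m=0$, this says $\hat{Q}^k_{(0)}$ is conserved by $\hat{H}$. For $m\neq0$, it is an eigen-operator equation $[\hat{Q},\hat{H}]=\lambda\hat{Q}$, which is not directly covered by Theorem~\ref{nontrivial}.

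\textbf{Fallback: rerun the three-step proof.} If the reduction does not close, I would rerun Steps 1--3 in the sector-resolved form, or more simply rerun them directly for $\hat{H}_\mu$. The chemical potential is an onsite term and contributes only through $[\mathbf{A},\mu\hat{n}_j]$. By the commutation relations~\eqref{commutator_cn}--\eqref{commutator_cdaggern}, this returns the same basis operator multiplied by $\mu$ times its local charge; it never changes the support or the boson content. The steps are affected as follows.

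Step 1 uses only $(k+1)$-support outputs, which are generated solely by hopping, so Lemmas~\ref{lem:step1_1} and~\ref{lem:step1_2} hold verbatim.

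In Step 2, the outputs chosen there, such as $\hat{c}_i\hat{c}(0,1)_{i+k-1}$ and $\hat{c}^\dagger_i\hat{c}(0,1)_{i+k-1}$, carry a $\hat{b}$ factor absent from the inputs $\hat{c}_i\hat{c}_{i+k-1}$ and $\hat{c}^\dagger_i\hat{c}_{i+k-1}$. The $\mu$-term cannot create such a factor, so it adds no new contributions, and Lemmas~\ref{lem:step2_1} and~\ref{lem:constraints} survive. For Lemma~\ref{lem:step2_3}, the type~\ref{enum:input_3} chain arguments are more delicate, because the $\mu$-term feeds diagonal contributions into outputs equal to an input. In the final $\omega$-counting equation, however, the operator $\prod I(0,\cdot)$ has zero fermion charge, so the $\mu$-term contributes nothing and the equation $\omega(\sum x_j+y_j)q=0$ is unchanged. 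In the intermediate proportionality relations, the outputs contain $\hat{n}$ factors, whose commutator with $\hat{n}_j$ vanishes, so those relations are also unaffected.

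\textbf{Main obstacle: Step 3.} Here the outputs $\hat{c}^\dagger_{i+1}\hat{c}(0,1)_{i+k-1}$ and the intermediate chain operators coincide with inputs of nonzero charge on a single site pair. Their commutator with $\mu\hat{n}$ is $\mu$ times (charge at site $i+1$ plus charge at site $i+k-1$), which equals $\mu(1-1)=0$ for $\hat{c}^\dagger\cdots\hat{c}$. Thus the linear systems summing to $3a=0$ and $(k-1)a=0$ are unchanged, and the $\hat{c}\,\hat{c}^\dagger$ case is symmetric. Only the $cc$-type inputs, of charge $\pm2$, would pick up $\mu$ terms. These were already eliminated in Step 2 using outputs containing $\hat{b}$, where the $\mu$-term does not contribute.

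I would check these charge computations carefully at each output used. With them in place, every relation in the proof of Theorem~\ref{nontrivial} holds for $\hat{H}_\mu$, and Theorem~\ref{t:chem} follows.
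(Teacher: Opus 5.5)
Your fallback argument is correct and is essentially the paper's proof. The paper also reruns Steps~1--3, using that $[\,\cdot\,,\mu\hat N]$ maps each basis operator to a multiple of itself proportional to its fermion charge. So the $\mu$-term is invisible in Step~1, cancels on the charge-neutral $\hat c^\dagger\cdots\hat c$ outputs of Step~3, and never supplies independent coefficient equations; the charge-sector reduction is a harmless aside.

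One imprecision: in Step~2 the $\mu$-term \emph{does} feed outputs such as $\hat c_i\hat c(0,1)_{i+k-1}$ through the identical input. The same holds for the type~\ref{enum:input_3} middle-reduction outputs, which you did not check explicitly. This is harmless only because that input's coefficient already vanishes by Lemma~\ref{lem:step1_1} (respectively Lemma~\ref{lem:step1_2}), which is exactly the point the paper makes.
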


Since the main structure of the proof is analogous to that of Theorem~\ref{nontrivial},
we omit the detailed explanation of overlapping parts to avoid unnecessary complexity.

\begin{proof}[Proof of Theorem~\ref{t:chem}]
As in Theorem~\ref{nontrivial}, we consider three types of $\mathbf{A}^k$ and
show that the coefficients of each type vanish.

First, we consider inputs contributing to $\mathbf{B}^{k+1}$. Because the chemical
potential term is $1$-local, a procedure similar to step~1 of Theorem~1 applies,
and the remaining type~\ref{enum:input_1} candidates are
$\mathbf{A}^k_i=\hat{c}^\dagger_i \hat{c}_{i+k-1},\hat{c}_i \hat{c}^\dagger_{i+k-1},\hat{c}^\dagger_i \hat{c}^\dagger_{i+k-1},\hat{c}_i \hat{c}_{i+k-1}$.
It is also shown that type~\ref{enum:input_2} candidates have zero
coefficients.

Next, consider inputs for $\mathbf{B}^{k}$ and focus on type~\ref{enum:input_1}
candidates. In this step, the presence or absence of the chemical-potential
term does not affect the conclusion. For example, when examining inputs related
to $\mathbf{A}^k_i=\hat{c}_i\hat{c}_{i+k-1}$, one may also consider the
chemical-potential contribution
\begin{align}
[\hat{c}_i\hat{c}(0,1)_{i+k-1},\hat{n}_{i+k-1}]&=\hat{c}_i\hat{c}(0,1)_{i+k-1},
\end{align}
which outputs $\mathbf{B}^k_i=\hat{c}_i\hat{c}(0,1)_{i+k-1}$. However,
$\mathbf{A}^k_i=\hat{c}_i\hat{c}(0,1)_{i+k-1}$ is a type~\ref{enum:input_1}
candidate whose coefficient has already been shown to vanish in step~1.
Therefore, a procedure similar to step~2 of Theorem~1 remains valid, leaving
$\mathbf{A}^k_i=\hat{c}^\dagger_i \hat{c}_{i+k-1},\hat{c}_i \hat{c}^\dagger_{i+k-1}$
as type~\ref{enum:input_1} candidates, together with related
$(k-1)$-support candidates such as
$\mathbf{A}^{k-1}=\hat{c}^\dagger_{i+1}\hat{c}(0,1)_{i+k-1}$.

For the remaining type~\ref{enum:input_1} candidates, the chemical-potential
term can in principle give additional contributions because it consists of the
particle-number operators. A commutator with
$\hat{H}_{\mathrm{chem}}=\mu\sum_i \hat{n}_i$ preserves the support and the
operator structure of the input, up to a sign. Hence, when a certain output
$\mathbf{B}$ is considered, an input $\mathbf{A}$ having an operator form corresponding to $\mathbf{B}$ may also contribute through the chemical-potential term.
However, the possible outputs $\mathbf{B}^{k}$ and $\mathbf{B}^{k-1}$ relevant
to steps 2 and 3 of Theorem~1 have fermionic parts consisting of one creation
operator and one annihilation operator at two distinct sites. If this fermionic
part is represented by $\hat{c}^\dagger_i\hat{c}_j$ with $i\neq j$, the only
number operators in $\hat{H}_{\mathrm{chem}}$ that do not commute with it are
$\hat{n}_i$ and $\hat{n}_j$, and they act with opposite signs:
\begin{align}
[\hat{c}^\dagger_i\hat{c}_j,\hat{n}_i]&=-\hat{c}^\dagger_i\hat{c}_j,
\\
[\hat{c}^\dagger_i\hat{c}_j,\hat{n}_j]&=\hat{c}^\dagger_i\hat{c}_j
\end{align}
for $i\neq j$. Therefore, the possible chemical-potential contributions to such
an output cancel between the two fermionic endpoints. Equivalently, the
remaining type~\ref{enum:input_1} candidates are neutral with respect to the
total particle number. Thus, although inputs with a form corresponding to
$\mathbf{B}$ may appear as candidates, their commutators with the chemical
potential term do not produce a net contribution. The chemical-potential term
therefore gives no additional independent coefficient equations and does not
modify the coefficient relations obtained from the hopping and electron--phonon
terms. By following procedures analogous to steps 2 and 3 of Theorem~1, the
remaining type~\ref{enum:input_1} candidates are also shown to have zero
coefficients.

Finally, consider type~\ref{enum:input_3} candidates. Again, because the chemical
potential term is 1-support, a procedure similar to step~2 of Theorem~1 leaves
type~\ref{enum:input_3} candidates of the form
$\mathbf{A}^k_i=I(x_1,y_1)_i\cdots I(x_j,y_j)_{i+j-1}\cdots I(x_k,y_k)_{i+k-1}$.
These commute with the chemical-potential term. Therefore, applying an analogous
step~2 argument once more, this type is also shown to have zero coefficients.
\end{proof}

\section{Detailed argument for the Holstein--Hubbard model (Proof of Theorem~\ref{t:HHwithtrivial})}
\label{sec:HH_proof}

In this appendix, we present the proofs of the theorems on the Holstein--Hubbard model that were omitted in the main text. More precisely, we prove Theorem~\ref{t:HHwithtrivial}, which classifies all $k$-local conserved quantities with $k\leq L/2$ and states that there is no nontrivial $k$-local conserved quantity for $3\leq k\leq L/2$. We prove this theorem by dividing it into the following lemmas for the cases $k \geq 3$ and $1 \leq k \leq 2$.

The argument closely parallels that for the Holstein model in Secs.~\ref{sec:proof} and \ref{sec:short}; the only new ingredients are the spin degree of freedom and the onsite Coulomb interaction.

\begin{lemma}\label{t:HH}
The one-dimensional Holstein--Hubbard model~\eqref{Eq.1D_HH_model} with $t\neq 0$, $g\neq 0$, and $\omega \neq 0$ has no $k$-local conserved quantity with $3\leq k\leq L/2$.
\end{lemma}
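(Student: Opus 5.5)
We follow the three-step scheme used for Theorem~\ref{nontrivial}, now with the enlarged single-site basis. At each site $j$ the fermionic factor runs over the $16$ products $f_{j\uparrow}f_{j\downarrow}$ with $f_{j\sigma}\in\{I,\hat{c}_{j\sigma},\hat{c}^\dagger_{j\sigma},\hat{n}_{j\sigma}\}$. Each such product is multiplied by a bosonic monomial $(\hat{b}^\dagger_j)^x\hat{b}_j^y$.

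The fermion-parity property $[A_iB_j,h_i]=[A_i,h_i]B_j$ still holds, since every term of $\hat{H}$ is even. The Coulomb term $U\hat{n}_{i\uparrow}\hat{n}_{i\downarrow}$ and the coupling $g\hat{n}_{i\sigma}(\hat{b}^\dagger_i+\hat{b}_i)$ are onsite. Hence only the spin-diagonal hoppings $t\hat{c}^\dagger_{i\sigma}\hat{c}_{i+1,\sigma}$ enlarge supports.

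As before, we classify $k$-support inputs by whether their end sites carry a nontrivial fermionic factor (types~\ref{enum:input_1}--\ref{enum:input_3}).

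\textbf{Step 1} ($\mathbf{B}^{k+1}$). We repeat Lemmas~\ref{lem:step1_1} and~\ref{lem:step1_2} spin by spin. If an end site carries a fermionic factor that is not a single $\hat{c}_\sigma$ or $\hat{c}^\dagger_\sigma$ with $(x,y)=(0,0)$, then hopping of an appropriate spin produces a unique $(k+1)$-support output. This forces the input's coefficient to vanish. The same argument kills all type~\ref{enum:input_2} inputs.

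Middle sites are forced to be $I$ exactly as before. A competing input via $\hat{c}^\dagger_{i\sigma}\hat{c}_{i+1,\sigma}$ exists only when the fermionic operator on the adjacent site is of the same spin as the endpoint and the remaining sites are identity.

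The survivors are $C_{i\sigma}C'_{i+k-1,\sigma'}$ with $C,C'\in\{\hat{c},\hat{c}^\dagger\}$, for all four spin combinations. Their coefficients obey the same translation relations as in Lemma~\ref{lem:step1_1}. Spin is carried along by the hopping, so $\sigma$ and $\sigma'$ are independent labels.

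\textbf{Step 2} ($\mathbf{B}^{k}$). For the $cc$-type survivors $\hat{c}_{i\sigma}\hat{c}_{i+k-1,\sigma'}$, we use the output $\hat{c}_{i\sigma}\hat{c}_{i+k-1,\sigma'}\hat{b}_{i+k-1}$ and the chain $\mathcal{C}^{(m)}_i$ with spin labels attached. This reproduces the sign mismatch with Eq.~\eqref{cc_k1}, so these coefficients vanish.

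A new possible contribution comes from $U$. The commutator with $U\hat{n}_{i\bar\sigma}\hat{n}_{i\sigma}$ of an input carrying $\hat{c}_{\sigma}$ at a site produces $\hat{c}_{\sigma}\hat{n}_{\bar\sigma}$, which is a different basis element. Outputs whose end sites carry a single fermion operator and a phonon factor therefore receive no $U$-contributions from surviving inputs. This is the point to check carefully.

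For the $c^\dagger c$-type survivors $\hat{c}^\dagger_{i\sigma}\hat{c}_{i+k-1,\sigma'}$, we obtain the analogues of Lemma~\ref{lem:step2_1} and Lemma~\ref{lem:constraints}. The relations~\eqref{cdcb1}--\eqref{constr:ccd_shift} now carry spin labels, and the $g$-coupling acts only on the spin present at the end site.

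For type~\ref{enum:input_3}, we reduce to pure-phonon products $\prod_j I(x_j,y_j)$ as in Lemma~\ref{lem:step2_3}. When pushing a middle fermionic factor outward, we pick a hopping of a spin present in that factor. For example, $\hat{n}_{\uparrow}\hat{n}_{\downarrow}$ is handled via the $\uparrow$ hopping, which yields $\hat{c}^\dagger_{\uparrow}\hat{n}_{\downarrow}$ next to $\hat{c}_\uparrow$ and is unique. The final $\omega$-counting argument is unchanged.

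\textbf{Step 3} ($\mathbf{B}^{k-1}$). We rerun the linear systems of Lemmas~\ref{lem:step3_k3} and~\ref{lem:step3_kgeneral} for each spin pair $(\sigma,\sigma')$. The sum of the equations again gives $(k-1)a=0$.

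The main obstacle is verifying that $U$ and the extra spin index add no new terms to these particular equations. For $\sigma=\sigma'$ and $k=3$, the output $\hat{c}^\dagger_{i+1,\sigma}\hat{c}(0,1)_{i+2,\sigma}$ receives contributions from $\hat{n}_\sigma(0,1)$ as before. I would check that $U\hat{n}_{\uparrow}\hat{n}_{\downarrow}$ commuted with the surviving inputs only produces outputs containing an extra $\hat{n}_{\bar\sigma}$, which lie outside the system. I would also check that spin-mixed inputs such as $\hat{c}^\dagger_{\sigma}\hat{c}_{\bar\sigma}(0,1)$ on one site cannot enter via the $\sigma$-hopping alone.

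For $\sigma\neq\sigma'$, the $k=3$ equation has no $\hat{n}(0,1)$ term. In its place we would have an onsite $\hat{c}^\dagger_{\sigma}\hat{c}_{\sigma'}(0,1)$ input. This plays the same role and enters with the same opposite signs, so the sum argument goes through. With these checks the coefficients of all $k$-support inputs vanish, proving Lemma~\ref{t:HH}.
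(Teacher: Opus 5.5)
Your proposal is correct and follows essentially the same route as the paper's proof. Both run the three-step reduction of Theorem~\ref{nontrivial} spin by spin, use the fact that $U\hat{n}_{\uparrow}\hat{n}_{\downarrow}$ never feeds outputs whose end sites carry a bare $\hat{c}_\sigma$ or $\hat{c}^\dagger_\sigma$, and, in the $k=3$, $\sigma\neq\sigma'$ step-3 system, let the onsite input $\hat{c}^\dagger_\sigma\hat{c}_{\sigma'}(0,1)$ play the role of $\hat{n}(0,1)$; that input is exactly what the paper's general contribution $\hat{c}^\dagger_{i+2,\sigma}\hat{c}_{\sigma'}(0,1)_{i+k-1}$ reduces to when $k=3$.
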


Note that this theorem holds regardless of whether the Coulomb interaction is present.

\begin{lemma}\label{t:HHtrivial}
In the one-dimensional Holstein--Hubbard model~\eqref{Eq.1D_HH_model} with
$t\neq 0$, $g\neq 0$, and $\omega \neq 0$, the only independent
$k$-local conserved quantities for $1\leq k\leq 2$ are
$\hat{H}$ ($k=2$) and $\hat{N}_{(\sigma,\sigma')}$ ($k=1$), where $\sigma, \sigma'$ denote spin indices.
\end{lemma}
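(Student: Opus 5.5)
We prove Lemma~\ref{t:HHtrivial} by the same coefficient-comparison strategy used for Theorem~\ref{trivial}. The single-site basis is now enlarged: each $e_j$ is a product of a monomial in $\hat{c}_{j,\uparrow},\hat{c}^\dagger_{j,\uparrow}$, a monomial in $\hat{c}_{j,\downarrow},\hat{c}^\dagger_{j,\downarrow}$ (each from $\{I,\hat{c},\hat{c}^\dagger,\hat{n}\}$), and $(\hat{b}_j^\dagger)^x\hat{b}_j^y$. Fermion-parity symmetry still lets us commute Hamiltonian terms through operators on other sites. The onsite Coulomb term $U\hat{n}_{i\uparrow}\hat{n}_{i\downarrow}$ is 1-support and preserves the support of any input. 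It therefore never produces outputs wider than the input, and it enters only the 2- and 1-support output equations.

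\textbf{The case $k=2$.} We proceed exactly as in the Holstein case.

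First, the right and left hopping terms $\hat{c}^\dagger_{i+1,\sigma}\hat{c}_{i+2,\sigma}$ (and the reversed ones) generate 3-support outputs. Each such output has a unique source unless both endpoints of the input are single bare fermion operators $\hat{c}_{\sigma}$ or $\hat{c}^\dagger_{\sigma}$ with trivial phonon and other-spin parts. This comparison eliminates all type~\ref{enum:input_2} inputs. It also eliminates those type~\ref{enum:input_1} inputs that carry a nontrivial factor on either end.

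Next, comparing the coefficients of $\hat{c}^\dagger_{i,\sigma}\hat{c}_{i+2,\sigma'}$ gives translation invariance of the coefficients of $\hat{c}^\dagger_{i,\sigma}\hat{c}_{i+1,\sigma'}$. This now holds for all four spin pairs, since mixed-spin hoppings are not yet excluded.

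Then the outputs $\hat{c}^\dagger_{i,\sigma}\hat{c}_{i+1,\sigma'}\hat{b}_{i+1}$ relate $g\,q_{\hat{c}^\dagger_{i,\sigma}\hat{c}_{i+1,\sigma'}}$ to $t\,q_{\hat{c}^\dagger_{i+1,\sigma}\hat{c}_{i+1,\sigma'}\hat{b}_{i+1}}$. In turn, $\omega$ relates the latter to the coefficients of $I(1,1)$ and of $\hat{n}$-type operators. We then use Coulomb-generated outputs such as $\hat{c}^\dagger_{i,\uparrow}\hat{c}_{i+1,\downarrow}\hat{n}_{i+1,\uparrow}$. These are produced only by $[\hat{c}^\dagger_{i,\uparrow}\hat{c}_{i+1,\downarrow},U\hat{n}_{i+1,\uparrow}\hat{n}_{i+1,\downarrow}]$ and by hopping commutators of 2-support inputs already shown to vanish. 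They force the spin-flip hoppings to be absent, or more precisely force the hopping coefficients to be spin-diagonal and proportional to $\delta_{\sigma\sigma'}$.

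Note that the argument must not rely on $U\neq0$. When $U=0$, spin-flip hopping terms $\sum_i \hat{c}^\dagger_{i,\uparrow}\hat{c}_{i+1,\downarrow}+h.c.$ could still fail to be conserved because of the phonon coupling. Indeed, $g\hat{n}_{i,\sigma}(\hat{b}+\hat{b}^\dagger)$ summed over $\sigma$ is spin-rotation invariant, but the chain through $\hat{n}(0,1)$ forces the on-site partner to be $\sum_\sigma\hat{n}_{i,\sigma}\hat{b}_i$. By spin invariance, it then also forces the hopping matrix in spin space to be proportional to the identity. I expect this spin-matrix bookkeeping, showing that $q_{\hat{c}^\dagger_\sigma\hat{c}_{\sigma'}}\propto\delta_{\sigma\sigma'}$ using only the $g$ and $\omega$ chains, to be the main obstacle. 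The plan is to compare the outputs $\hat{c}^\dagger_{i,\sigma}\hat{c}_{i+1,\sigma'}\hat{b}_{i+1}$ with the on-site inputs $\hat{c}^\dagger_{i+1,\sigma}\hat{c}_{i+1,\sigma'}\hat{b}_{i+1}$. We then use $[\,\cdot\,,\omega\hat{b}^\dagger\hat{b}]$ on the outputs $\hat{c}^\dagger_{j\sigma}\hat{c}_{j\sigma'}\hat{b}_j$ to tie these inputs to $I(1,1)_j$, which carries no spin index. Finally, pure-phonon type~\ref{enum:input_3} inputs are killed by the $\omega$-weighted argument of Lemma~\ref{lem:step2_3}. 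The result is that the 2-support part is proportional to $\hat{H}$, with the ratios $t:g:\omega:U$ fixed by the same chains, the $U$ ratio coming from outputs of $[\hat{n}_{i\uparrow}\hat{n}_{i\downarrow},\text{hopping}]$.

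\textbf{The case $k=1$.} Any single-site operator carrying a phonon factor or an odd fermion number is eliminated by a uniquely sourced hopping output, as in Theorem~\ref{trivial}. Pure phonon monomials vanish by the $\omega$ argument. Hopping outputs $\hat{c}^\dagger_{i\sigma}\hat{c}_{i+1\sigma'}$ arising from $\hat{c}^\dagger_{i\sigma}\hat{c}_{i\sigma'}$ force these coefficients to be site-independent, which yields $\hat{N}_{(\sigma,\sigma')}$. Double-occupancy terms $\hat{n}_{i\uparrow}\hat{n}_{i\downarrow}$ and pair operators $\hat{c}_{i\uparrow}\hat{c}_{i\downarrow}$ are eliminated by 2-support hopping outputs such as $\hat{n}_{i\uparrow}\hat{c}^\dagger_{i\downarrow}\hat{c}_{i+1,\downarrow}$. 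These outputs have a unique source once the $k=2$ inputs are fixed to be $\propto\hat{H}$: the hopping-type $\hat{H}$ terms generate only outputs whose on-site fermion content differs from that of these outputs.
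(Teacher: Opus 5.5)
Your overall architecture is the paper's: endpoint elimination by 3-support hopping outputs, translation invariance from $\hat{c}^\dagger_{i,\sigma}\hat{c}_{i+2,\sigma'}$, $g$/$\omega$/$U$ chains fixing $t:g:\omega:U$, and the $\omega$-weighted argument for phonon monomials. The exception is the elimination of spin-off-diagonal hopping, where your route is genuinely different and is the one that actually works.

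The paper asserts that the output $\hat{c}^\dagger_{i,\sigma}\hat{c}_{i+1,\sigma'}\hat{b}_{i+1}$ has a competing input only for $\sigma=\sigma'$. As you noticed, the on-site basis element $\hat{c}^\dagger_{i+1,\sigma}\hat{c}_{i+1,\sigma'}\hat{b}_{i+1}$ produces that output through $[\,\cdot\,,t\hat{c}^\dagger_{i,\sigma}\hat{c}_{i+1,\sigma}]$ for every spin pair, so the paper's uniqueness claim fails. Your $\omega$-chain closes the step independently of $U$. For $\sigma\neq\sigma'$, the output $\hat{c}^\dagger_{j,\sigma}\hat{c}_{j,\sigma'}\hat{b}_j$ receives $\omega\,q_{\hat{c}^\dagger_{j,\sigma}\hat{c}_{j,\sigma'}\hat{b}_j}$, but nothing from:
\begin{itemize}
\item $I(1,1)_j$, whose $g$-commutator gives only $\sum_\tau\hat{n}_{j,\tau}\hat{b}_j$;
\item the bare $\hat{c}^\dagger_{j,\sigma}\hat{c}_{j,\sigma'}$, which commutes with $\sum_\tau\hat{n}_{j,\tau}$;
\item the $U$ term.
\end{itemize}
First remove the dressed inputs $\hat{c}^\dagger_{\sigma}\hat{c}_{\sigma'}(1,1)_j$ and $\hat{c}^\dagger_{\sigma}\hat{c}_{\sigma'}(0,2)_j$, which also feed this output through $g$. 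Their hopping outputs are uniquely sourced, so this is immediate. Your Coulomb-output argument is a valid shortcut only for $U\neq0$, as you say.

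\textbf{Gap at $k=2$.} The nearest-neighbour pair candidates $\hat{c}_{i,\sigma}\hat{c}_{i+1,\sigma'}$ and $\hat{c}^\dagger_{i,\sigma}\hat{c}^\dagger_{i+1,\sigma'}$ survive your endpoint analysis and are never eliminated. Your conclusion that the 2-support part is proportional to $\hat{H}$ requires them to vanish.

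``Exactly as in the Holstein case'' fails for $\sigma\neq\sigma'$, for the same reason you found for hopping. The output $\hat{c}_{i,\sigma}\hat{c}_{i+1,\sigma'}\hat{b}_{i+1}$ is also produced by the on-site pair $\hat{c}_{i+1,\uparrow}\hat{c}_{i+1,\downarrow}\hat{b}_{i+1}$ via $[\,\cdot\,,t\hat{c}^\dagger_{i+1,\sigma}\hat{c}_{i,\sigma}]$. The paper's uniqueness claim for this output has the same defect. Your $\omega$-chain does not settle it cleanly either, because the on-site pair is fed by the bare pair through $g$ and shifted by $U$.

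A repair using only $t,g\neq0$: let $p_j$ be the coefficient of $\hat{c}_{j,\uparrow}\hat{c}_{j,\downarrow}\hat{b}_j$.
\begin{itemize}
\item The outputs $\hat{c}_{i,\uparrow}\hat{b}_i\hat{c}_{i+1,\downarrow}$ and $\hat{c}_{i,\uparrow}\hat{c}_{i+1,\downarrow}\hat{b}_{i+1}$ give $g\,q_{\hat{c}_{i,\uparrow}\hat{c}_{i+1,\downarrow}}=-t\,p_i=-t\,p_{i+1}$. Hence $p_j$, and with it the pair coefficient, is uniform in $i$.
\item The 3-support outputs $\hat{c}_{i,\uparrow}\hat{c}_{i+2,\downarrow}$ force the same coefficient to alternate in sign.
\end{itemize}
Uniform and alternating together force it to vanish, by the staggered-versus-uniform mechanism the paper uses for the $cc$ type in step~2. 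The other spin order and the $\hat{c}^\dagger\hat{c}^\dagger$ case go the same way. For $\sigma=\sigma'$ the competitor is absent, since $\hat{c}_\sigma\hat{c}_\sigma=0$.

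\textbf{Error at $k=1$.} Bare $\hat{c}_{i,\sigma}$, $\hat{c}^\dagger_{i,\sigma}$ and the bare on-site pairs $\hat{c}_{i,\uparrow}\hat{c}_{i,\downarrow}$, $\hat{c}^\dagger_{i,\uparrow}\hat{c}^\dagger_{i,\downarrow}$ are \emph{not} killed by uniquely sourced hopping outputs. Neighbouring copies produce the same outputs: for example, $\hat{c}_{i,\uparrow}\hat{c}_{i+1,\downarrow}$ arises from both $\hat{c}_{i,\uparrow}\hat{c}_{i,\downarrow}$ and $\hat{c}_{i+1,\uparrow}\hat{c}_{i+1,\downarrow}$. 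Indeed, for even $L$ the $\eta$-pairing operator $\sum_j(-1)^j\hat{c}_{j,\uparrow}\hat{c}_{j,\downarrow}$ commutes with $\hat{H}'_{\mathrm{hop}}$. These candidates must instead be removed with the electron--phonon term, as the paper does. Once the phonon-dressed 1-support operators are gone, the outputs $[\hat{c}_{i,\sigma},g\sum_\tau\hat{n}_{i,\tau}\hat{b}_i]=g\hat{c}_{i,\sigma}\hat{b}_i$ and $[\hat{c}_{i,\uparrow}\hat{c}_{i,\downarrow},g\sum_\tau\hat{n}_{i,\tau}\hat{b}_i]=2g\hat{c}_{i,\uparrow}\hat{c}_{i,\downarrow}\hat{b}_i$ are uniquely sourced.
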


Combining Lemma~\ref{t:HH} and Lemma~\ref{t:HHtrivial}, we obtain Theorem~\ref{t:HHwithtrivial}.

\subsection{Proof of Lemma~\ref{t:HH}}
\begin{proof}
The proof follows a strategy similar to that of Theorem~\ref{nontrivial}. As a spinful extension of the spinless basis in Eqs.~\eqref{spinlessbasis_1} and \eqref{spinlessbasis_2}, we use the following $l$-support basis starting from site $i$:
\begin{widetext}
\begin{align}
\mathbf{A}^l_i, \mathbf{B}^l_i&=  e_ie_{i+1}\dots e_{i+l-1},
\label{spinfulbasis_1}
\end{align}
\begin{align}
e_j \in \left\{ f_{j,\uparrow} f_{j,\downarrow}
(\hat{b}^\dagger_j)^x\hat{b}_j^y \mid
f_{j,\sigma}\in\{I,\hat{c}_{j,\sigma},\hat{c}^\dagger_{j,\sigma},
\hat{n}_{j,\sigma}\},\,
\sigma\in\{\uparrow,\downarrow\},\,
x,y\in\mathbb{Z}_{\geq 0}\right\},
\label{spinfulbasis_2}
\end{align}
\end{widetext}
with $e_i,e_{i+l-1}\neq I(\hat{b}^\dagger_j)^0\hat{b}_j^0$.

Also, we denote by $f_{\uparrow}f_{\downarrow}(x,y)_j$ the product of fermionic
operators $f_{j,\uparrow}f_{j,\downarrow}$ and the bosonic operator
$(\hat{b}^\dagger_j)^x\hat{b}_j^y$. When an operator acts trivially on bosons
(i.e., when $x=y=0$), we often denote $f_{\uparrow}f_{\downarrow}(0,0)_j$
simply as $f_{\uparrow}f_{\downarrow}$.

As in the proof of Theorem~\ref{nontrivial}, we divide the analysis into three types. We first examine inputs that generate $\mathbf{B}^{k+1}$. By an argument analogous to step~1 in the proof of Theorem~\ref{nontrivial}, the remaining type~\ref{enum:input_1}
candidates are
$\mathbf{A}^k_i=\hat{c}^\dagger_{i,\sigma} \hat{c}_{i+k-1,\sigma'},\hat{c}_{i,\sigma} \hat{c}^\dagger_{i+k-1,\sigma'},\hat{c}^\dagger_{i,\sigma} \hat{c}^\dagger_{i+k-1,\sigma'},\hat{c}_{i,\sigma} \hat{c}_{i+k-1,\sigma'}$
with $\sigma,\sigma'\in\{\uparrow,\downarrow\}$. Likewise, type~\ref{enum:input_2}
candidates are shown to have zero coefficients.

We next consider inputs that generate
$\mathbf{B}^{k}=\hat{c}_{i,\sigma} \hat{c}_{\sigma'}(0,1)_{i+k-1}$. Taking into account the candidates left after the step~1 reduction, the Coulomb interaction does not affect this part of the argument. Indeed,
\begin{align}
[\hat{c}_{i,\sigma} 
\hat{c}_{i+k-1,\sigma'},\hat{n}_{\sigma'}(0,1)_{i+k-1}]&=\hat{c}_{i,\sigma} 
\hat{c}_{\sigma'}(0,1)_{i+k-1}
\\
[\hat{c}_{i+1,\sigma} 
\hat{c}_{\sigma'}(0,1)_{i+k-1},\hat{c}^\dagger_{i+1,\sigma}\hat{c}_{i,\sigma}]&=\hat{c}_{i,\sigma} 
\hat{c}_{\sigma'}(0,1)_{i+k-1}
\end{align}
exhaust all such inputs among the candidates that survived step~1. Therefore,
\begin{equation}
\label{eq:hh-type1-balance}
gq_{\hat{c}_{i,\sigma} \hat{c}_{i+k-1,\sigma'}}
+tq_{\hat{c}_{i+1,\sigma} \hat{c}_{\sigma'}(0,1)_{i+k-1}}=0
\end{equation}
follows.
Repeating a coefficient-comparison argument analogous to step~2 of Theorem~\ref{nontrivial} for the other outputs, with the spin indices carried along, we
obtain a similar conclusion: the coefficients of
$\mathbf{A}^k_i=\hat{c}^\dagger_{i,\sigma} \hat{c}^\dagger_{i+k-1,\sigma'},\hat{c}_{i,\sigma} \hat{c}_{i+k-1,\sigma'}$
vanish. Hence, among type~\ref{enum:input_1}, only
$\mathbf{A}^k_i=\hat{c}^\dagger_{i,\sigma} \hat{c}_{i+k-1,\sigma'},\hat{c}^\dagger_{i+k-1,\sigma'}\hat{c}_{i,\sigma}$
remain, and their coefficients are uniform in the site index $i$. Moreover,
for terms such as
$\mathbf{A}^{k-1}=\hat{c}^\dagger_{i+1,\sigma}\hat{c}_{\sigma'}(0,1)_{i+k-1},\hat{c}^\dagger_{i,\sigma}\hat{c}_{i+k-2,\sigma'}$, we obtain the following relations:
\begin{widetext}
\begin{align}
\frac{g}{t}q_{\hat{c}^\dagger_{i,\sigma} 
\hat{c}_{i+k-1,\sigma'}}
&=q_{\hat{c}^\dagger_{i+1,\sigma} 
\hat{c}_{\sigma'}(0,1)_{i+k-1}}
=q_{\hat{c}^\dagger_{i+2,\sigma}I(0,1)_{i+k-1}\hat{c}_{i+k,\sigma'}}
=\dots
=q_{\hat{c}^\dagger_{\sigma}(0,1)_{i+k-1} 
\hat{c}_{i+2k-3,\sigma'}}
=\frac{g}{t}q_{\hat{c}^\dagger_{i+k-1,\sigma} 
\hat{c}_{i+2k-2,\sigma'}},
\end{align}
\begin{align}
q_{\hat{c}^\dagger_{i,\sigma}\hat{c}_{i+k-2,\sigma'}}&=q_{\hat{c}^\dagger_{i+1,\sigma}\hat{c}_{i+k-1,\sigma'}}=q_{\hat{c}^\dagger_{i+2,\sigma}\hat{c}_{i+k,\sigma'}}= \cdots.
\end{align}

We now turn to inputs that generate
$\mathbf{B}^{k-1}=\hat{c}^\dagger_{i+1,\sigma} \hat{c}_{\sigma'}(0,1)_{i+k-1}$. Again, the Coulomb interaction does not modify the argument. We have
\begin{align}
[\hat{c}^\dagger_{i+1,\sigma} 
\hat{c}_{\sigma'}(0,1)_{i+k-1},I(1,1)_{i+k-1}]&=\hat{c}^\dagger_{i+1,\sigma} 
\hat{c}_{\sigma'}(0,1)_{i+k-1}\\
[\hat{c}^\dagger_{i+2,\sigma} 
\hat{c}_{\sigma'}(0,1)_{i+k-1},\hat{c}^\dagger_{i+1,\sigma}\hat{c}_{i+2,\sigma}]
&=-\hat{c}^\dagger_{i+1,\sigma} 
\hat{c}_{\sigma'}(0,1)_{i+k-1}
\\
[\hat{c}^\dagger_{i+1,\sigma} \hat{c}_{i+k-1,\sigma'},
\hat{n}_{\sigma'}(0,1)_{i+k-1}]
&=\hat{c}^\dagger_{i+1,\sigma} \hat{c}_{\sigma'}(0,1)_{i+k-1}
\end{align}
and these exhaust the relevant contributions. Treating the other cases similarly, and following step~3 in the proof of Theorem~\ref{nontrivial}, we obtain
\begin{align}
q_{\hat{c}^\dagger_{i,\sigma} 
\hat{c}_{i+k-1,\sigma'}}
&=q_{\hat{c}^\dagger_{i+1,\sigma} 
\hat{c}_{\sigma'}(0,1)_{i+k-1}}=\dots=0,
\\
q_{\hat{c}^\dagger_{i+k-1,\sigma'} 
\hat{c}_{i,\sigma}}
&=q_{\hat{c}^\dagger_{i+k-1,\sigma'} 
\hat{c}_{\sigma}(0,1)_{i+1}}=\dots=0.
\end{align}
Hence, all type~\ref{enum:input_1} candidates have zero coefficients.

Finally, we treat type~\ref{enum:input_3}. We take $\mathbf{A}_i^{k} = I(x_1,y_1)_i \big( \prod^{k-2}_{j=2} A^j(x_j,y_j)_{i+j-1} \big) \hat{n}_\uparrow^\dagger(x_{k-1},y_{k-1})_{i+k-2} I(x_k,y_k)_{i+k-1}$ as an example, here we suppose $k\geq 4$. Considering commutators that contribute to $\mathbf{B}_i^{k} = I(x_1,y_1)_i \big( \prod^{k-2}_{j=2} A^j(x_j,y_j)_{i+j-1} \big) \hat{c}_\uparrow^\dagger(x_{k-1},y_{k-1})_{i+k-2} \hat{c}_\uparrow(x_k,y_k)_{i+k-1}$,
\begin{align}
&[I(x_1,y_1)_i \big( \prod^{k-2}_{j=2} A^j(x_j,y_j)_{i+j-1} \big) \hat{n}_\uparrow(x_{k-1},y_{k-1})_{i+k-2} I(x_k,y_k)_{i+k-1},\hat{c}^\dagger_{i+k-2,\uparrow} \hat{c}_{i+k-1,\uparrow}] \nonumber \\
=&I(x_1,y_1)_i \big( \prod^{k-2}_{j=2} A^j(x_j,y_j)_{i+j-1} \big) \hat{c}^\dagger_\uparrow(x_{k-1},y_{k-1})_{i+k-2} \hat{c}_\uparrow(x_k,y_k)_{i+k-1}
\end{align}

is the only term related to the possible nonzero input. Therefore, $I(x_1,y_1)_i \big( \prod^{k-2}_{j=2} A^j(x_j,y_j)_{i+j-1} \big) \hat{n}_\uparrow^\dagger(x_{k-1},y_{k-1})_{i+k-2} I(x_k,y_k)_{i+k-1}$ has zero coefficient.
\end{widetext}
Thus, exactly as in step~2 of the proof of Theorem~\ref{nontrivial}, coefficients vanish unless
$(A_1=)A_2=\cdots=A_{k-1}(=A_k)=I$. The remaining case is also shown to have
zero coefficients, so every type~\ref{enum:input_3} candidate has zero
coefficient. This completes the proof.
\end{proof}

\subsection{Proof of Lemma~\ref{t:HHtrivial}}
\begin{proof}
We prove the statement separately for $k=2$ and $k=1$.
\subsubsection{Proof for $k=2$ case}
We show that the only 2-local conserved quantity is the Hamiltonian $\hat{H}$ itself up to multiplication by a constant and the freedom of adding 1-local conserved quantities.

Consider $\mathbf{A}_i^2=e_i^1 e_{i+1}^2$ of type~\ref{enum:input_1} and
type~\ref{enum:input_2}. Without loss of generality, we assume $e^2 \neq I(x,y)$
We first explain the right-extension argument in a
representative case. Take
\begin{align}
\mathbf{A}^2_i=e^1_i\hat{c}_{\uparrow}f_{\downarrow}(x,y)_{i+1},
\end{align}
where $f_{\downarrow}\in\{I,\hat{c}_{\downarrow},
\hat{c}^\dagger_{\downarrow},\hat{n}_{\downarrow}\}$. Commuting this input with
the hopping term of the up-spin fermion on the right gives
\begin{align}
[e^1_i\hat{c}_{\uparrow}f_{\downarrow}(x,y)_{i+1},
\hat{c}^\dagger_{i+1,\uparrow}\hat{c}_{i+2,\uparrow}]
&=e^1_i I_{\uparrow}f_{\downarrow}(x,y)_{i+1}
\hat{c}_{i+2,\uparrow}.
\end{align}
If $e^1$ is neither $\hat{c}_{\sigma}(0,0)$ nor
$\hat{c}^\dagger_{\sigma}(0,0)$ for $\sigma=\uparrow,\downarrow$, then no
other $2$-support input can
produce this $3$-support output. Hence the coefficient of this input vanishes.

The other possible right-end factors are treated in the same manner. For
example, when the right endpoint contains
$\hat{c}^\dagger_{\uparrow}f_{\downarrow}(x,y)$ or
$\hat{n}_{\uparrow}f_{\downarrow}(x,y)$, one considers the corresponding outputs
obtained by the right hopping terms
$\hat{c}^\dagger_{i+2,\uparrow}\hat{c}_{i+1,\uparrow}$ or
$\hat{c}^\dagger_{i+1,\uparrow}\hat{c}_{i+2,\uparrow}$, respectively. The
same uniqueness argument shows that the coefficient is zero unless
$e^1=\hat{c}_{\sigma}(0,0)$ or $\hat{c}^\dagger_{\sigma}(0,0)$ for some
$\sigma$. Repeating this argument for down-spin hopping gives the identical
conclusion for right endpoints involving the down-spin fermionic factor. Thus a
nonzero coefficient is possible only when the left endpoint is
$\hat{c}_{\sigma}(0,0)$ or $\hat{c}^\dagger_{\sigma}(0,0)$.

Applying the analogous left-extension argument shows the same restriction on
the right endpoint. Therefore, coefficients of type~\ref{enum:input_1} and
type~\ref{enum:input_2} inputs $\mathbf{A}^2$ vanish unless both endpoints are
fermionic single operators $\hat{c}_{\sigma}(0,0)$ or
$\hat{c}^\dagger_{\sigma}(0,0)$. In particular, all type~\ref{enum:input_2}
candidates have zero coefficients.

We next compare coefficients of the remaining candidates under one-site
translations. As a representative example, consider
$\mathbf{A}^2_i=\hat{c}^\dagger_{i,\sigma}\hat{c}_{i+1,\sigma'}$ and the output
$\mathbf{B}^2_i=\hat{c}^\dagger_{i,\sigma}\hat{c}_{i+2,\sigma'}$. This output
is produced by the two translated inputs
\begin{align}
[\hat{c}^\dagger_{i,\sigma}\hat{c}_{i+1,\sigma'},
\hat{c}^\dagger_{i+1,\sigma'}\hat{c}_{i+2,\sigma'}]
&=\hat{c}^\dagger_{i,\sigma}\hat{c}_{i+2,\sigma'},
\\
[\hat{c}^\dagger_{i+1,\sigma}\hat{c}_{i+2,\sigma'},
\hat{c}^\dagger_{i,\sigma}\hat{c}_{i+1,\sigma}]
&=-\hat{c}^\dagger_{i,\sigma}\hat{c}_{i+2,\sigma'}.
\end{align}
Hence coefficient comparison gives
$q_{\hat{c}^\dagger_{i,\sigma}\hat{c}_{i+1,\sigma'}}
=q_{\hat{c}^\dagger_{i+1,\sigma}\hat{c}_{i+2,\sigma'}}$. Performing the same
comparison for the other remaining candidates yields
\begin{align}
q_{\hat{c}^\dagger_{i,\sigma}\hat{c}_{i+1,\sigma'}}
&=q_{\hat{c}^\dagger_{i+1,\sigma}\hat{c}_{i+2,\sigma'}}=\dots,
\\
q_{\hat{c}^\dagger_{i+1,\sigma}\hat{c}_{i,\sigma'}}
&=q_{\hat{c}^\dagger_{i+2,\sigma}\hat{c}_{i+1,\sigma'}}=\dots,
\\
q_{\hat{c}_{i,\sigma}\hat{c}_{i+1,\sigma'}}
&=-q_{\hat{c}_{i+1,\sigma}\hat{c}_{i+2,\sigma'}}=\dots,
\\
q_{\hat{c}^\dagger_{i,\sigma}\hat{c}^\dagger_{i+1,\sigma'}}
&=-q_{\hat{c}^\dagger_{i+1,\sigma}\hat{c}^\dagger_{i+2,\sigma'}}=\dots.
\end{align}

We next eliminate particle-nonconserving remaining candidates. Consider first
$\mathbf{A}^2_i=\hat{c}_{i,\sigma}\hat{c}_{i+1,\sigma'}$ and the output
$\mathbf{B}^2_i=\hat{c}_{i,\sigma}\hat{c}_{\sigma'}(0,1)_{i+1}$. This output is
generated only by
\begin{align}
[\hat{c}_{i,\sigma}\hat{c}_{i+1,\sigma'},
\hat{n}_{\sigma'}(0,1)_{i+1}]
&=\hat{c}_{i,\sigma}\hat{c}_{\sigma'}(0,1)_{i+1}.
\end{align}
Thus $q_{\hat{c}_{i,\sigma}\hat{c}_{i+1,\sigma'}}=0$. The pair-creation
candidate is eliminated in the same way: using the output
$\hat{c}^\dagger_{i,\sigma}\hat{c}^\dagger_{\sigma'}(0,1)_{i+1}$, one obtains
$q_{\hat{c}^\dagger_{i,\sigma}\hat{c}^\dagger_{i+1,\sigma'}}=0$.

It remains to relate the hopping-type candidates. We take
$\mathbf{A}^2_i=\hat{c}^\dagger_{i,\sigma}\hat{c}_{i+1,\sigma'}$ as a
representative example and consider the output
$\mathbf{B}^2_i=\hat{c}^\dagger_{i,\sigma}\hat{c}_{\sigma'}(0,1)_{i+1}$. One
commutator generating this output is
\begin{align}
[\hat{c}^\dagger_{i,\sigma}\hat{c}_{i+1,\sigma'},
\hat{n}_{\sigma'}(0,1)_{i+1}]
&=\hat{c}^\dagger_{i,\sigma}\hat{c}_{\sigma'}(0,1)_{i+1}.
\end{align}
A competing input exists only when $\sigma=\sigma'$; in that case,
\begin{align}
[\hat{n}_{\sigma'}(0,1)_{i+1},
\hat{c}^\dagger_{i,\sigma}\hat{c}_{i+1,\sigma'}]
&=-\hat{c}^\dagger_{i,\sigma}\hat{c}_{\sigma'}(0,1)_{i+1}.
\end{align}
Therefore the coefficient vanishes for $\sigma\neq\sigma'$, while for
$\sigma=\sigma'$ it is related to $q_{\hat{n}_{\sigma}(0,1)_{i+1}}$. The same
argument with $\hat{n}_{\sigma'}(1,0)_{i+1}$ gives the corresponding relation
with $q_{\hat{n}_{\sigma}(1,0)_{i+1}}$. Applying the analogous analysis to
$\hat{c}^\dagger_{i+1,\sigma'}\hat{c}_{i,\sigma}$ gives
\begin{align}
q_{\hat{c}^\dagger_{i,\sigma}\hat{c}_{i+1,\sigma'}}&=q_{\hat{c}^\dagger_{i+1,\sigma'}\hat{c}_{i,\sigma}}=0
\quad (\sigma\neq \sigma')
\\
gq_{\hat{c}^\dagger_{i,\sigma} 
\hat{c}_{i+1,\sigma}}
&=tq_{\hat{n}_{\sigma}(0,1)_{i+1}}
=tq_{\hat{n}_{\sigma}(1,0)_{i+1}}
=gq_{\hat{c}^\dagger_{i+1,\sigma} 
\hat{c}_{i,\sigma}}.
\end{align}

We next relate the coefficients of $\hat{n}_{\sigma}(0,1)_i$ and
$\hat{n}_{\sigma}(1,0)_i$ to that of $I(1,1)_i$. For
$\hat{n}_{\sigma}(0,1)_i$, the relevant output is generated by
\begin{align}
[\hat{n}_{\sigma}(0,1)_{i},I(1,1)_i]&=\hat{n}_{\sigma}(0,1)_{i},
\\
[I(1,1)_i,\hat{n}_{\sigma}(0,1)_{i}]&=-\hat{n}_{\sigma}(0,1)_{i}.
\end{align}
Thus coefficient comparison gives
$\omega q_{\hat{n}_{\sigma}(0,1)_{i}}=gq_{I(1,1)_i}$. The case
$\hat{n}_{\sigma}(1,0)_i$ is analogous and gives
$gq_{I(1,1)_i}=\omega q_{\hat{n}_{\sigma}(1,0)_{i}}$. Hence,
\begin{align}
\omega q_{\hat{n}_{\sigma}(0,1)_{i}}
&=g q_{I(1,1)_i}
=\omega q_{\hat{n}_{\sigma}(1,0)_{i}}.
\end{align}
Likewise, for
$\hat{c}^\dagger_{i,\sigma}\hat{c}_{i+1,\sigma},\hat{c}^\dagger_{i+1,\sigma}\hat{c}_{i,\sigma}$,
commutators with the Hamiltonian produce
\begin{align}
[\hat{c}^\dagger_{i,\sigma}\hat{c}_{i+1,\sigma},\hat{n}_{i+1,\sigma}\hat{n}_{i+1,\sigma'}]&=\hat{c}^\dagger_{i,\sigma}\hat{c}_{i+1,\sigma}\hat{n}_{i+1,\sigma'}
\end{align}
as outputs. Terms with the same outputs include
\begin{align}
[\hat{n}_{i+1,\sigma}\hat{n}_{i+1,\sigma'},\hat{c}^\dagger_{i,\sigma}\hat{c}_{i+1,\sigma}]&=-\hat{c}^\dagger_{i,\sigma}\hat{c}_{i+1,\sigma}\hat{n}_{i+1,\sigma'}
\end{align}
and combining them gives
\begin{align}
U q_{\hat{c}^\dagger_{i,\sigma}\hat{c}_{i+1,\sigma}}
&=t q_{\hat{n}_{i+1,\sigma}\hat{n}_{i+1,\sigma'}}.
\end{align}

In addition, consider type~\ref{enum:input_3} operators $\mathbf{A}^2$.
Here, all candidates are
$\mathbf{A}^2_i=I(x_1,y_1)_iI(x_2,y_2)_{i+1}$ with
$(x_1,y_1),(x_2,y_2) \neq(0,0)$. For example, when $x_1\neq 0$, consider inputs
that output
$\mathbf{B}^k_i=\hat{n}_{\sigma}(x_1-1,y_1)_iI(x_2,y_2)_{i+1}$:
\begin{align}
&[I(x_1,y_1)_iI(x_2,y_2)_{i+1},\hat{n}_{\sigma}(0,1)_i]\nonumber \\
=&-x_1\hat{n}_{\sigma}(x_1-1,y_1)_iI(x_2,y_2)_{i+1}
\\
&[I(x_1-1,y_1+1)_iI(x_2,y_2)_{i+1},\hat{n}_{\sigma}(1,0)_i] \nonumber
\\
=&(y_1+1)\hat{n}_{\sigma}(x_1-1,y_1)_iI(x_2,y_2)_{i+1}
\end{align}
these inputs exist, and
\begin{align}
q_{I(x_1,y_1)_iI(x_2,y_2)_{i+1}}\propto q_{I(x_1-1,y_1+1)_iI(x_2,y_2)_{i+1}},
\end{align}
which implies that for any
$\mathbf{A}^2_i=I(x_1,y_1)_iI(x_2,y_2)_{i+1}$ there is a proportional
operator
$\mathbf{A'}^k_i=I(0,x_1+y_1)_iI(0,x_2+y_2)_{i+1}$.
Now consider inputs that output
$\mathbf{B}^k=\mathbf{A'}^k_i=I(0,x_1+y_1)_iI(0,x_2+y_2)_{i+1}$:
\begin{align}
[\hat{b}^{x_1+y_1}_i\hat{b}^{x_2+y_2}_{i+1},\hat{b}^\dagger\hat{b}_{i}]&=(x_1+y_1)\hat{b}^{x_1+y_1}_i\hat{b}^{x_2+y_2}_{i+1}
\\
[\hat{b}^{x_1+y_1}_i\hat{b}^{x_2+y_2}_{i+1},\hat{b}^\dagger\hat{b}_{i+1}]&=(x_2+y_2)\hat{b}^{x_1+y_1}_i\hat{b}^{x_2+y_2}_{i+1}
\end{align}
these contribute, and they are all relevant terms. Therefore,
\begin{align}
\omega(x_1+y_1+x_2+y_2)q_{I(0,x_1+y_1)_iI(0,x_2+y_2)_{i+1}}&=0
\end{align}
which implies $q_{I(0,x_1+y_1)_iI(0,x_2+y_2)_{i+1}}=0$.
Together with the proportionality relation above, this gives
$q_{I(x_1,y_1)_iI(x_2,y_2)_{i+1}}=0$.
Hence, all type~\ref{enum:input_3} candidates have zero coefficients.

\begin{widetext}
Therefore, for coefficients of operators forming $Q^2$,
\begin{align}
&q_{\hat{c}^\dagger_{i,\sigma}\hat{c}_{i+1,\sigma}}
:q_{\hat{c}^\dagger_{i+1,\sigma}\hat{c}_{i,\sigma}}
:q_{\hat{n}_{\sigma}(0,1)_i}:q_{\hat{n}_{\sigma}(1,0)_i}
:q_{I(1,1)_i}:q_{\hat{n}_{i,\sigma}\hat{n}_{i,\sigma'}}=t:t:g:g:\omega:U
\end{align}
and
\begin{align}
q_{A^2}&=0\quad (\textnormal{otherwise})
\end{align}
holds. In other words, up to the freedom of adding 1-local conserved quantities, every 2-local conserved quantity is a constant multiple of the Hamiltonian.
\end{widetext}

\subsubsection{Proof for $k=1$ case}
We next consider the case $k=1$. We first examine candidates containing a single
spin component, namely
$\hat{c}_{\sigma}(x,y)$, $\hat{c}^\dagger_{\sigma}(x,y)$, and
$\hat{n}_{\sigma}(x,y)$. As a representative case, take
$\mathbf{A}^1_i=\hat{n}_{\sigma}(x,y)_i$. The commutator with the hopping term
gives
\begin{align}
[\hat{n}_{\sigma}(x,y)_i,
\hat{c}^\dagger_{i,\sigma}\hat{c}_{i+1,\sigma}]
&=\hat{c}^\dagger_{\sigma}(x,y)_i\hat{c}_{i+1,\sigma}.
\end{align}
If $(x,y)\neq(0,0)$, this output has no competing input among the remaining
candidates, so $q_{\hat{n}_{\sigma}(x,y)_i}=0$. When $(x,y)=(0,0)$, however,
the neighboring number operator gives the same output:
\begin{align}
[\hat{n}_{\sigma}(0,0)_{i+1},
\hat{c}^\dagger_{i,\sigma}\hat{c}_{i+1,\sigma}]
&=-\hat{c}^\dagger_{i,\sigma}\hat{c}_{i+1,\sigma}.
\end{align}
Thus
\begin{align}
q_{\hat{n}_{i,\sigma}}-q_{\hat{n}_{i+1,\sigma}}&=0.
\end{align}
Similarly, the commutators
\begin{align}
[\hat{c}_{\sigma}(x,y)_i,
\hat{c}^\dagger_{i,\sigma}\hat{c}_{i+1,\sigma}]
&=I(x,y)_i\hat{c}_{i+1,\sigma},
\\
[\hat{c}^\dagger_{\sigma}(x,y)_i,
\hat{c}^\dagger_{i+1,\sigma}\hat{c}_{i,\sigma}]
&=-I(x,y)_i\hat{c}^\dagger_{i+1,\sigma}
\end{align}
show that the coefficients of $\hat{c}_{\sigma}(x,y)_i$ and
$\hat{c}^\dagger_{\sigma}(x,y)_i$ vanish for $(x,y)\neq(0,0)$; the case
$(x,y)=(0,0)$ remains for now.

We next consider mixed internal-degree candidates containing a number operator,
such as $\hat{c}_{\sigma}\hat{n}_{\sigma'}(x,y)$ and
$\hat{n}_{\sigma}\hat{n}_{\sigma'}(x,y)$ with $\sigma\neq\sigma'$. For example,
\begin{align}
[\hat{c}_{\sigma}\hat{n}_{\sigma'}(x,y)_i,
\hat{c}^\dagger_{i,\sigma}\hat{c}_{i+1,\sigma}]
&=\hat{n}_{\sigma'}(x,y)_i\hat{c}_{i+1,\sigma}.
\end{align}
This output is generated only by this input, even when $(x,y)=(0,0)$. Hence its
coefficient is zero. The same argument applied to
$\hat{n}_{\sigma}\hat{n}_{\sigma'}(x,y)$, using
\begin{align}
[\hat{n}_{\sigma}\hat{n}_{\sigma'}(x,y)_i,
\hat{c}^\dagger_{i,\sigma}\hat{c}_{i+1,\sigma}]
&=\hat{c}^\dagger_{\sigma}\hat{n}_{\sigma'}(x,y)_i\hat{c}_{i+1,\sigma},
\end{align}
shows that these coefficients also vanish.

It remains to eliminate candidates that do not commute with the total
particle-number operator. Take $\hat{c}_{i,\sigma}$ as a representative example:
\begin{align}
[\hat{c}_{i,\sigma},\hat{n}_{\sigma}(0,1)_i]
&=\hat{c}_{\sigma}(0,1)_i.
\end{align}
This output has no competing input, so $q_{\hat{c}_{i,\sigma}}=0$. The creation operator gives the unique output
\begin{align}
[\hat{c}^\dagger_{i,\sigma},\hat{n}_{\sigma}(0,1)_i]
&=-\hat{c}^\dagger_{\sigma}(0,1)_i,
\end{align}
so $q_{\hat{c}^\dagger_{i,\sigma}}=0$. Likewise, the pair-annihilation
candidate gives
\begin{align}
[\hat{c}_{\sigma}\hat{c}_{\sigma'}(0,0)_i,
\hat{n}_{\sigma}(0,1)_i]
&=\hat{c}_{\sigma}\hat{c}_{\sigma'}(0,1)_i,
\end{align}
and this output is also unique; hence
$q_{\hat{c}_{\sigma}\hat{c}_{\sigma'}(0,0)_i}=0$. The pair-creation candidate $\hat{c}^\dagger_{\sigma}\hat{c}^\dagger_{\sigma'}(0,0)_i$ is
eliminated analogously.
Thus all remaining candidates that change the total particle number have zero
coefficients.

We finally consider purely bosonic candidates $I(x,y)$. Suppose first that
$x\neq0$. The relevant coefficient comparison is obtained from
\begin{align}
[I(x,y)_i,\hat{n}_{\sigma}(0,1)_i]
&=-x\hat{n}_{\sigma}(x-1,y)_i,
\\
[I(x-1,y+1)_i,\hat{n}_{\sigma}(1,0)_i]
&=(y+1)\hat{n}_{\sigma}(x-1,y)_i,
\end{align}
which gives $q_{I(x,y)_i}\propto q_{I(x-1,y+1)_i}$. Iterating this relation,
$q_{I(x,y)_i}$ is proportional to $q_{I(0,x+y)_i}$. The latter coefficient
vanishes because
\begin{align}
[I(0,x+y)_i,I(1,1)_i]&=(x+y)I(0,x+y)_i
\end{align}
is the unique contribution to this output. Hence
$q_{I(0,x+y)_i}=0$ for all $(x,y)\neq(0,0)$. Together with the proportionality
relation above, this gives $q_{I(x,y)_i}=0$.

The only remaining fermionic candidates are the bilinears conserving the total
fermion number. The diagonal ones are $\hat{n}_{i,\uparrow}$ and
$\hat{n}_{i,\downarrow}$, while the off-diagonal basis element
$\hat{c}_{i,\uparrow}\hat{c}^\dagger_{i,\downarrow}$ is rewritten as
$-\hat{c}^\dagger_{i,\downarrow}\hat{c}_{i,\uparrow}$ by the fermionic
anticommutation relation. Thus the remaining candidates can be written uniformly
as
\begin{align}
\hat{N}_{(\sigma,\sigma')}
=\sum_i\hat{c}^\dagger_{i,\sigma}\hat{c}_{i,\sigma'}
\quad (\sigma,\sigma'\in\{\uparrow,\downarrow\}).
\end{align}
Each $\hat{N}_{(\sigma,\sigma')}$ commutes with the Hamiltonian, since the
hopping, electron--phonon coupling, phonon, and onsite Coulomb terms are
invariant under global spin rotations.
Therefore, among 1-local conserved quantities, only linear combinations of
$\hat{N}_{(\sigma,\sigma')}$ remain.
\end{proof}

\bibliography{holstein}

\end{document}